\DeclareFontFamily{U}{mathx}{\hyphenchar\font45}
\DeclareFontShape{U}{mathx}{m}{n}{
      <5> <6> <7> <8> <9> <10>
      <10.95> <12> <14.4> <17.28> <20.74> <24.88>
      mathx10
      }{}
\DeclareSymbolFont{mathx}{U}{mathx}{m}{n}
\DeclareMathAccent{\widecheck}{0}{mathx}{"71}
\DeclareMathAccent{\wideparen}{0}{mathx}{"75}
\newcommand{\SL}{\mathop{/}}
\newcommand{\BS}{\mathop{\backslash}}
\newcommand{\NN}{\mathbb{N}}
\newcommand{\MALC}{\mathbf{MALC}}
\newcommand{\ACT}{\mathbf{ACT}}
\newcommand{\ACTomega}{\ACT_\omega}
\newcommand{\ACTvee}{\ACT^\vee}
\newcommand{\ACTwedge}{\ACT^\wedge}
\newcommand{\ACTbicycle}{\ACT_{\mathrm{bicycle}}}
\newcommand{\distrib}{\mathbf{D}}
\newcommand{\ACTd}{\ACT \distrib}
\newcommand{\ACTomegad}{\ACT_\omega \distrib}
\newcommand{\Lc}{\mathcal{L}}
\newcommand{\Hc}{\mathcal{H}}
\newcommand{\Cc}{\mathcal{C}}
\newcommand{\Kc}{\mathcal{K}}
\newcommand{\Mf}{\mathfrak{M}}
\newcommand{\bHc}{\overline{\Hc}}
\newcommand{\nHc}{\bHc}
\newcommand{\Gc}{\mathcal{G}}
\newcommand{\Ac}{\mathcal{A}}
\newcommand{\Z}{\mathbf{0}}
\newcommand{\mult}{\cdot}
\newcommand{\One}{\mathbf{1}}
\newcommand{\seqarr}{\vdash}
\newcommand{\cfarr}{\Rightarrow}
\newcommand{\KStar}{{}^*}
\newcommand{\inv}{\mathrm{inv}}
\newtheorem{theorem}{Theorem}
\newtheorem*{theorem*}{Theorem}
\newtheorem{lemma}{Lemma}
\newtheorem*{lemma*}{Lemma}
\newtheorem{prop}{Proposition}
\newtheorem{cor}{Corollary}
\newcommand{\CUT}{\mathrm{cut}}
\newcommand{\fp}{\mathrm{fp}}
\newcommand{\Pc}{\mathcal{P}}
\theoremstyle{definition}
\newtheorem{df}{Definition}
\newcommand{\Blank}{\mbox{\textvisiblespace}}
\begin{document}

\title{Action Logic is Undecidable}
\author{Stepan Kuznetsov}
\affil{\small Steklov Mathematical Institute of RAS}
\affil{\small and National Research University Higher School of Economics}
\maketitle

\begin{abstract}
 Action logic is the algebraic logic (inequational theory) of residuated Kleene lattices. This logic involves Kleene star, axiomatized by an induction scheme. For a stronger system which uses an $\omega$-rule instead (infinitary action logic) Buszkowski and Palka (2007) have proved $\Pi_1^0$-completeness (thus, undecidability). Decidability of action logic itself was an open question, raised by D.~Kozen in 1994. In this article, we show that it is undecidable, more precisely, $\Sigma_1^0$-complete. We also prove the same complexity results for all recursively enumerable logics between action logic and infinitary action logic; for fragments of those only one of the two lattice (additive) connectives; for action logic extended with the law of distributivity.
\end{abstract}

\section{Introduction}
\subsection{Action Lattices and Their Logics}\label{Ss:intro}

Residuated Kleene lattices (RKLs), or {\em action lattices,} are lattice structures extended simultaneously with residuals (division operations w.r.t.\ a pre-order) and iteration (Kleene star). Residuals originate in abstract algebra~\cite{Krull1924,WardDilworth1939}; then they were introduced to logic as the central component of the Lambek calculus~\cite{Lambek1958} for syntactic analysis of natural language. Nowadays residuals are viewed as a natural algebraic interpretation of implication in substructural logics~\cite{Ono1993,JipsenSurvey,GalatosRLbook,AbramskyTzevelekos2010}.

The story of {\em iteration} comes from the seminal work of S.C.~Kleene~\cite{Kleene1956}, thus its second name ``Kleene star.'' Kleene star is one of the most interesting algebraic operations in theoretical computer science. Being of inductive nature, it extends a purely propositional, algebraic logic setting with features usually found in more expressive systems, like arithmetic or higher order type theories.

The notion of residuated Kleene algebra (RKA), or action algebra, was introduced by V.~Pratt~\cite{Pratt1991}. Action algebras lack one of the lattice connectives (meet), this was added by D.~Kozen~\cite{Kozen1994}, who actually gave the definition of action lattices.

The formal definition is as follows:\footnote{We use the following notations: $\vee$ and $\wedge$ for lattice operations, $\cdot$ for product, $\BS$ and $\SL$ for residuals. In literature, notations vary: for example, $\vee$ can be replaced by $+$ (like in regular expressions), $\BS$ and $\SL$ can be written as directed implications ($\rightarrow$ and $\leftarrow$), {\em etc.} The Kleene star, however, is always denoted by $\KStar$.}
\begin{df}\label{Df:AA}
An action lattice is a structure $\langle \Ac; {\preceq}, \vee, \wedge, \Z, \mult, \One, \SL, \BS, \KStar \rangle$,
where:
\begin{enumerate}
\item $\langle \Ac; \preceq, \vee, \wedge \rangle$ is a lattice, $\Z$ is its minimal element ($\Z \preceq a$ for any $a \in \Ac$);
\item $\langle \Ac; \mult, \One \rangle$ is a monoid;
\item $\SL$ and $\BS$ are residuals of $\mult$ w.r.t.\ $\preceq$, {\em i.e.,}
$$
a \preceq c \SL b \iff a \mult b \preceq c \iff b \preceq a \BS c;
$$
\item $a^*$ is the least element $b$ such that
$\One \vee a \cdot b \preceq b$ (in other words:
$\One \preceq a^*$, $a \cdot a^* \preceq a^*$; if $\One \preceq b$ and $a \cdot b \preceq b$, then $a^* \preceq b$).
\end{enumerate}
\end{df}

The presence of residuals makes many desired properties of our algebras automatically true, so we do not need to postulate
them explicitly. These include:
\begin{itemize}
\item monotonicity of $\cdot$ w.r.t.\ $\preceq$ (shown by J.~Lambek~\cite{Lambek1958});
residuals are monotone by one argument and anti-monotone by the other one;
\item despite the asymmetry of the condition for Kleene star, the dual one also holds:
$a^*$ is also the least $b$ such that $\One \vee b \cdot a \preceq b$ (shown by Pratt~\cite{Pratt1991});
without residuals, there exist left and right Kleene algebras~\cite{Kozen1990};
\item the zero element is the annihilator w.r.t.\ $\cdot$: $\Z \cdot a = a \cdot \Z = \Z$ for any $a \in \Ac$.
\end{itemize}

Kleene~\cite{Kleene1956} informally interpreted elements of a Kleene algebra as types
of {\em events.} This interpretation gives an intuition of the Kleene algebra operations:
$a \cdot b$ means event $a$ followed by event $b$; $a \vee b$ means an event which is either
$a$ or $b$; $a^*$ is $a$ repeated several times (maybe zero\footnote{Wishing to avoid
the empty event (``nothing happened''), Kleene considered a compound connective $a^* b$, meaning
``several times $a$ followed by $b$.''}); $a \preceq b$ means that $a$ is a more specific type of events, than $b$.
Residuals also fit this paradigm. Namely, $a \BS b$ (resp., $b \SL a$) could be interpreted
as follows: this is an event which, if preceded (resp., followed) by an event of type $a$, becomes an event of type $b$.

The original setting of Kleene algebras included only three connectives: $\cdot$, $\vee$, and $\KStar$. Adding residuals and meet was motivated by the fact
that the classes of algebras in the extended setting happened to have better
properties than the original ones. 
Namely, residuated Kleene algebras form a finitely based 
variety~\cite{Pratt1991}, while Kleene algebras without residuals 
do not~\cite{Redko1964,Conway1971}. For RKLs, the
algebra of matrices over such a lattice is also an RKL, while this does not hold for RKAs (without meet)~\cite{Kozen1994}.

In computer science, the usage of Kleene algebras and their extensions is connected to reasoning about program correctness. Remarkable examples include Kleene algebras with tests~\cite{Kozen2000}, concurrent Kleene algebras~\cite{HoareCKA2011}, nominal Kleene algebras~\cite{GabbayCiancia2011,BrunetPous2016,KozenNKA2017}. Residuated Kleene algebras or lattices could also theoretically have such applications; however, there are undecidability results which make this problematic. One of such negative results is presented in this article.

Standard examples of action lattices include the algebra of languages over an alphabet and the algebra of binary relations
on a set (with $\KStar$ being the reflexive-transitive closure). Action lattices of these two classes are
{\em *-continuous} in the sense of the following definition:

\begin{df}
An action lattice is {*-continuous}, if, for any $a \in \Ac$, $a^* = \sup \{ a^n \mid n \in \omega \}$ (where
$\omega$ denotes the set of all natural numbers, including 0).
\end{df}

In the presence of residuals, we do not need the context in the definition of *-continuity
(for Kleene algebras without residuals, the condition is as follows: $b \cdot a^* \cdot c  = \sup \{ b \cdot a^n \cdot c \mid
n \in \omega \}$);  
*-continuity makes other conditions on the Kleene star (item 4 in Definition~\ref{Df:AA}) redundant.
Non-*-continuous action lattices also do exist; concrete examples are given in~\cite{Kuzn2018AiML}.

We are interested in {\em (in)equational theories,} or {\em algebraic logics,} of action lattices. Statements of such theories are of the form
 $\alpha \preceq \beta$, where $\alpha$ and $\beta$ are terms
(formulae) constructed from variables and constants $\One$ and $\Z$, using the operations of
action lattices: $\cdot$, $\BS$, $\SL$, $\vee$, $\wedge$,
$\KStar$. Statements which are true under
all interpretations of variables over arbitrary action lattices form 
{\em action logic,} denoted by $\ACT$. If we consider only
*-continuous action lattices, we get $\ACTomega$, as an extension of $\ACT$. Logics for weaker structures, which lack some of the operations, are obtained naturally as fragments as $\ACT$ or $\ACTomega$.

The motivation for considering only inequational theories is as follows. If one tries to raise the expressive power a little bit and considers {\em Horn theories,} which operate statements of the form $\alpha_1 \preceq \beta_1 \mathop{\&} \ldots \mathop{\&} \alpha_n \preceq \beta_n \Rightarrow \alpha \preceq \beta$, complexity immediately rises up to the highest possible level. Even for the language of Kleene algebras ($\cdot$, $\vee$, $\KStar$),  the Horn theory is $\Pi_1^1$-complete in the *-continuous case and $\Sigma_1^0$-complete in the general case~\cite{Kozen2002}. On the other side, in the language of residuated semigroups ($\cdot$, $\BS$, $\SL$), without Kleene star and even lattice operations, the Horn theory also happens to be $\Sigma_1^0$-complete~\cite{Buszkowski1982decision}. Thus, only for inequational theories we could expect interesting complexity results. As mentioned above, we consider these inequational as substructural propositional logics, see~\cite{GalatosRLbook}, sound and complete w.r.t.\ given algebraic semantics.  

For the *-continuous case, Buszkowski~\cite{Buszkowski2007} and Palka~\cite{Palka2007} prove undecidability and establish an exact complexity estimation of the inequational theory:
\begin{theorem}[W.~Buszkowski, E.~Palka, 2007]
 $\ACTomega$ is $\Pi_1^0$-complete.
\end{theorem}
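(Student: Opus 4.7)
I would prove the two halves of $\Pi_1^0$-completeness separately: membership via an infinitary cut-free sequent calculus, hardness via reduction from a $\Pi_1^0$-complete problem on context-free grammars.

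For \emph{membership}, I would present $\ACTomega$ as an infinitary cut-free sequent calculus in the Lambek tradition, with the usual residuation rules for $\mult, \BS, \SL$, the standard lattice rules for $\vee, \wedge$, unit/zero rules for $\One, \Z$, and a left $\omega$-rule for Kleene star of the shape
\[
\frac{\Gamma, A^n, \Delta \seqarr C \quad (\text{for every } n \in \omega)}{\Gamma, A^\KStar, \Delta \seqarr C},
\]
where $A^n$ denotes the $n$-fold product, with $A^0 = \One$; right introduction of $\KStar$ would use the unfoldings $\Gamma \seqarr A^n$ for any fixed $n$. Soundness w.r.t.\ *-continuous action lattices is immediate from the supremum definition of $a^\KStar$. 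Completeness and cut-elimination should go through by a now-standard transfinite induction on ordinal proof heights, permuting cuts past $\omega$-rule applications. Given cut-elimination and the resulting subformula property, the proof-search tree for any sequent is uniformly recursively presented with the $\omega$-rule as the only source of infinite branching, and derivability is equivalent to well-foundedness of that tree, a $\Pi_1^0$ condition.

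For \emph{hardness}, I would reduce from a $\Pi_1^0$-complete problem on context-free grammars, most conveniently the universality problem $L(G) = \Sigma^\KStar$. Given a CFG $G$ over $\Sigma = \{a_1, \dots, a_k\}$ with start symbol $S$, I would encode productions by the familiar Lambek translation: each production $X \to \gamma$ becomes a type formula involving residuals, producing a finite context $\Phi_G$ of formulas such that concatenations of terminals derive $S$ under $\Phi_G$ exactly for words in $L(G)$. The join $(a_1 \vee \dots \vee a_k)^n$ then semantically enumerates all words of length $n$, and applying the $\omega$-rule to $(a_1 \vee \dots \vee a_k)^\KStar$ reduces universality of $G$ to a single $\ACTomega$-derivability question built from $S$, $\Phi_G$, and $(a_1 \vee \dots \vee a_k)^\KStar$. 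Since the translation is polynomial-time, $\Pi_1^0$-hardness transfers to $\ACTomega$.

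The main obstacle is \emph{faithfulness} of the grammar encoding: the free *-continuous action lattice is strictly richer than the language model, so the reduction must rule out spurious derivations arising from lattice connectives, residuals, or their interactions---in particular from meet, which is not needed on the grammar side but is available in the calculus. This is precisely where the subformula property of the cut-free calculus is crucial: by restricting attention to cut-free normal-form derivations of the encoded sequent, one can peel off each $\omega$-premise into a language-model derivation and argue, by induction on the shape of the derivation, that the corresponding word must lie in $L(G)$. The ordinal analysis for cut-elimination with the $\omega$-rule is technically involved but follows an established pattern; the subtler point is this faithfulness argument in the full calculus including meet.
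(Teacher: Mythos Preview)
Your hardness sketch is essentially Buszkowski's argument, which the paper reproduces in Subsections~\ref{Ss:enc1}--\ref{Ss:enc2} (Lemmata~\ref{Lm:theta}--\ref{Lm:ACTomega}): put the CFG in Greibach normal form, translate each terminal $a$ via Gaifman into the conjunction $\varphi_a = \bigwedge \Xi_a$ of its associated types, set $\psi = \bigvee_a \varphi_a$, and reduce totality of the grammar to derivability of $\psi^+ \seqarr S$. One caution: your phrase ``a finite context $\Phi_G$'' alongside separate terminal variables $a_i$ is dangerous in a contraction-free calculus. The grammar axioms cannot sit in a reusable antecedent context; they must be folded into the terminal types by $\wedge$, which is exactly what the paper's $\varphi_a$ does. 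Read that way, your faithfulness remark via the subformula property lines up with Lemma~\ref{Lm:theta}.

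The membership half has a genuine gap. You assert that ``derivability is equivalent to well-foundedness of that tree, a $\Pi_1^0$ condition,'' but well-foundedness of a recursive tree with infinite branching is $\Pi_1^1$-complete, not $\Pi_1^0$: it is the statement $\forall f \in \omega^\omega\,\exists n\,(f{\restriction}n \notin T)$. Cut elimination and the subformula property by themselves do not lower this. What Palka's $*$-elimination (cited by the paper) actually supplies is the further observation that iterating the $\omega$-rule strictly decreases a well-founded rank---essentially the multiset of star-nesting depths of antecedent $\KStar$-occurrences---so the expansion tree is \emph{always} well-founded, independently of derivability, and its leaves are $\KStar$-free sequents decidable in $\MALC$. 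Derivability then becomes ``for every finite expansion sequence, the resulting $\MALC$ sequent holds,'' a single first-order $\forall$ over a decidable predicate, hence $\Pi_1^0$. Das and Pous reach the same bound by a different route, through the finitary non-well-founded system $\ACT_\infty$ with its finite subformula closure. Your plan names the right ingredients but omits precisely the step---bounding the ordinal rank of $\omega$-rule iteration---that carries the complexity down from $\Pi_1^1$ to $\Pi_1^0$.
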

Here the lower bound is due to Buszkowski and the upper one is due to Palka; recently A.~Das and D.~Pous~\cite{DasPous2018Action} gave another proof of the $\Pi_1^0$ upper bound for $\ACTomega$, based on non-well-founded proofs.

Notice that only the combination of residuals and meet gives this undecidability effect. The logic of residuated lattices without iteration (that is, the multiplicative-additive Lambek calculus) is decidable and PSPACE-complete~\cite{Kanovich1994PSPACE,KKS2019WoLLICcomplexity}; without $\vee$ and $\wedge$ it is NP-complete~\cite{Pentus2006}. For Kleene algebras (in the language of $\cdot$, $\vee$, $\KStar$), the logic of *-continuous Kleene algebras coincides with the logic of all Kleene algebras (but the classes of algebras do not), and this logic is also PSPACE-complete~\cite{Kozen1994IC,Krob1991}. For Kleene lattices ($\cdot$, $\vee$, $\wedge$, $\KStar$), complexity is, to the best of the author's knowledge, an open problem. However, there are decidability results on more specific classes of Kleene lattices~\cite{AndrekaMikulasNemeti2011,BrunetPous2015,Nakamura2017,DoumanePous2018} (lattices in all these classes are distributive, which is not generally true for Kleene lattices), which makes it plausible that the logic of Kleene lattices is also decidable. In contrast, lattice operations are not crucial for undecidability: the logic of *-continuous residuated monoids with iteration ($\cdot$, $\BS$, $\SL$, $\KStar$) is also $\Pi_1^0$-complete~\cite{Kuznetsov2019RSL}.

The question of decidability of $\ACT$, the logic of the whole class of action lattices, remained open, first raised by D.~Kozen in 1994~\cite{Kozen1994}. We give a negative answer: $\ACT$ is undecidable. 

This undecidability result for $\ACT$ was presented at LICS 2019 and published in its proceedings~\cite{Kuznetsov2019LICS}. This article features, besides undecidability, the following new results:
\begin{enumerate}
 \item $\Sigma_1^0$-completeness for $\ACT$ and all recursively enumerable logics in the range between $\ACT$ and $\ACTomega$;
 \item analogous results for fragments without $\vee$ and for fragments without $\wedge$;
 \item analogous results for distributive versions of $\ACT$ and its extensions up to the distributive version of $\ACTomega$.
\end{enumerate}

\subsection{Calculi: $\MALC$, $\ACTomega$, and $\ACT$}

Let us start with axiomatizing the logics introduced semantically in the previous subsection. Both $\ACT$ and $\ACTomega$ are extensions of the multiplicative-additive Lambek calculus
($\MALC$), which is the logic of residuated lattices without the Kleene star~\cite{Ono1993}.

We present $\MALC$ in the form of a Gentzen-style sequent calculus. Sequents of $\MALC$ are expressions of the form $\Gamma \seqarr \beta$, where $\beta$ is a formula (built from variables and constants $\Z$ and $\One$ using residuated lattice operations) and $\Gamma$ is a finite, possibly empty, sequence of formulae. The empty sequence is denoted by $\Lambda$. As usual, $\Gamma$ is called the {\em antecedent} and $\beta$ the {\em succedent} of the sequent. A sequent $\alpha_1, \ldots, \alpha_n \vdash \beta$
is interpreted as $\alpha_1 \cdot\ldots\cdot \alpha_n \preceq \beta$; $\Lambda \vdash\beta$ means $\One \preceq \beta$. Axioms and inference rules of $\MALC$
are as follows.
$$
\infer[(\mathrm{ax})]{\alpha \seqarr \alpha}{}
\qquad
\infer[(\Z\seqarr)]{\Gamma, \Z, \Delta \seqarr \gamma}{}
\qquad
\infer[(\One\seqarr)]{\Gamma, \One, \Delta \seqarr \gamma}{\Gamma, \Delta \seqarr \gamma}
\qquad
\infer[(\seqarr\One)]{\Lambda\seqarr\One}{}
$$
$$
\infer[(\BS\seqarr)]{\Gamma, \Pi, \alpha \BS \beta, \Delta \seqarr \gamma}{\Pi \seqarr \alpha & \Gamma, \beta, \Delta \seqarr \gamma}
\qquad
\infer[(\seqarr\BS)]{\Pi \seqarr \alpha \BS \beta}{\alpha, \Pi \seqarr \beta}
\qquad
\infer[(\cdot\seqarr)]{\Gamma, \alpha \cdot \beta, \Delta \seqarr \gamma}{\Gamma, \alpha, \beta, \Delta \seqarr \gamma}
$$
$$
\infer[(\SL\seqarr)]{\Gamma, \beta \SL \alpha, \Pi, \Delta \seqarr \gamma}{\Pi \seqarr \alpha & \Gamma, \beta, \Delta \seqarr \gamma}
\qquad
\infer[(\seqarr\SL)]{\Pi \seqarr \beta \SL \alpha}{\Pi, \alpha \seqarr \beta}
\qquad
\infer[(\seqarr\cdot)]{\Gamma, \Delta \seqarr \alpha \cdot \beta}{\Gamma \seqarr \alpha & \Delta \seqarr \beta}
$$
$$
\infer[(\wedge\seqarr)_i,\ i=1,2]{\Gamma, \alpha_1 \wedge \alpha_2, \Delta \seqarr \gamma}{\Gamma, \alpha_i, \Delta \seqarr \gamma}
\qquad
\infer[(\seqarr\wedge)]{\Pi \seqarr \alpha_1 \wedge \alpha_2}{\Pi \seqarr \alpha_1 & \Pi \seqarr \alpha_2}
$$
$$
\infer[(\vee\seqarr)]{\Gamma, \alpha_1 \vee \alpha_2, \Delta \seqarr \gamma}{\Gamma, \alpha_1, \Delta \seqarr \gamma & \Gamma, \alpha_2, \Delta \seqarr \gamma}
\qquad
\infer[(\seqarr\vee)_i,\ i=1,2]{\Pi \seqarr \alpha_1 \vee \alpha_2}{\Pi \seqarr \alpha_i}
$$

The logic for action lattices, $\ACT$ (action logic), is obtained from $\MALC$ by adding the following rules.
$$
\infer[(\KStar\seqarr)_{\mathrm{fp}}]{\alpha^* \seqarr \beta}{\Lambda\seqarr\beta & \alpha, \beta \seqarr \beta}
\qquad
\infer[(\mathrm{cut})]{\Gamma, \Pi, \Delta \seqarr \gamma}{\Pi \seqarr \alpha & \Gamma, \alpha, \Delta \seqarr \gamma}
$$
$$
\infer[(\seqarr\KStar)_0]{\Lambda\seqarr\alpha^*}{}
\qquad
\infer[(\seqarr\KStar)_{\mathrm{fp}}]{\Pi, \Delta \seqarr \alpha^*}{\Pi \seqarr \alpha & \Delta \seqarr \alpha^*}
$$

The logic for *-continuous action lattices, $\ACTomega$ (infinitary action logic), is an extension of $\MALC$ with the following rules.
$$
\infer[(\KStar\seqarr)_\omega]{\Gamma, \alpha^*, \Delta \seqarr \gamma}{\bigl(\Gamma, \alpha^n, \Delta \seqarr \gamma\bigr)_{n \in \omega}}
\qquad
\infer[(\seqarr\KStar)_n,\ n\in\omega]{\Pi_1, \ldots, \Pi_n \seqarr \alpha^*}{\Pi_1 \seqarr \alpha & \ldots & \Pi_n \seqarr \alpha}
$$

All these systems are sound and complete w.r.t.\ the corresponding classes of algebras, by Lindenbaum -- Tarski construction.

In $\ACTomega$, $(\KStar\seqarr)_\omega$ is an $\omega$-rule. The set of derivable sequents of $\ACTomega$ is defined as the smallest set including axioms and closed under rule applications. Derivations in $\ACTomega$ are possibly infinite, but well-founded trees (infinite branches forbidden).

Notice that we include cut as an official rule of the system
only in $\ACT$. Indeed, in $\ACTomega$, as shown by Palka~\cite{Palka2007}, cut
is eliminable, while for $\ACT$ no cut-free system is known. Attempts
to construct such a system were taken by P.~Jipsen~\cite{Jipsen2004} and M.~Pentus~\cite{Pentus2010iter}.
Buszkowski~\cite{Buszkowski2007} showed that in Jipsen's system cut is
not eliminable; neither it is in Pentus' systems. Constructing a cut-free system for $\ACT$ is an open problem.
Due to lack of cut elimination, we also do not know how to axiomatize elementary fragments of $\ACT$ (in restricted sublanguages) in order to guarantee conservativity.

\subsection{Some Inspiration: Circular Proofs for $\ACT$}

Before going further to proving undecidablity of $\ACT$, let us reveal some of the intuitions behind this proof. These intuitions root in non-well-founded and circular proof systems for $\ACTomega$ and $\ACT$. These systems were introduced by A.~Das and D.~Pous~\cite{DasPous2018Action}; for the identity-free version of the calculi, where empty antecedents are forbidden and Kleene star is replaced by positive iteration $\alpha^+$, they independently appear in~\cite{Kuznetsov2017WoLLIC}. 

Let us first define the non-well-founded system for $\ACTomega$, denoted by $\ACT_{\infty}$. This system arises from $\MALC$ by adding the following rules for Kleene star:
$$
\infer[(\KStar\seqarr)']{\Gamma, \alpha^*, \Delta \seqarr \gamma}
{\Gamma, \Delta \seqarr \gamma & \Gamma, \alpha, \alpha^*, \Delta \seqarr \gamma}
\qquad
\infer[(\seqarr\KStar)_0]{\Lambda \seqarr \alpha^*}{}
\qquad
\infer[(\seqarr\KStar)_{\mathrm{fp}}]
{\Pi, \Delta \seqarr \alpha^*}{\Pi \seqarr \alpha & 
\Delta \seqarr \alpha^*}
$$

The cut rule is also {\em a priori} present. All these rules are finitary. As a trade-off, we now allow non-well-founded derivations (derivations with infinite branches). The derivations should satisfy the following {\em correctness condition:} on each infinite branch of the proof, there eventually starts and continues a {\em thread} of a formula $\alpha^*$ in the antecedent, which undergoes $(\KStar\seqarr)'$ infinitely many times.

As shown by Das and Pous~\cite{DasPous2018Action}, $\ACT_\infty$ enjoys cut elimination and is equivalent  (that is, derives the same set of sequents) to $\ACTomega$.
Moreover, the {\em circular} fragment of $\ACT_\infty$ happens to be equivalent to $\ACT$. The definition of the circular fragment is as follows:
a derivation in $\ACT_\infty$ (obeying the correctness condition) is called {\em regular,} if it contains only a finite number of non-isomorphic subtrees. The term 
``circular'' comes from the following interpretation of regularity: once in an infinite derivation tree we come across a subtree which is isomorphic to the tree it contains, 
we can replace this subtree by a {\em backlink} to the root of the bigger tree (which is the same). Thus, a regular proof gets represented as a finite object, but which is 
now a graph with cycles, not a tree. 

Using cycles in derivations seems philosophically weird, reminding of {\em circuli vitiosi,} but the correctness condition guarantees that such proofs are sound. 
Unlike $\ACT_\infty$, its circular fragment does not enjoy cut elimination: if one applies the cut elimination procedure, a regular proof could become irregular. 
The circular fragment with cut, however, is equivalent to $\ACT$~\cite{DasPous2018Action}.

This circular system is not formally used in this article: we rather use a traditional formulation of $\ACT$ as presented in the previous subsection. However, it provides an 
{\em inspiration} for our undecidability proofs. Buszkowski's proof of $\Pi_1^0$-hardness of $\ACTomega$ is based on encoding the totality problem for context-free grammars,
which, in its turn, allows encoding of {\em non-halting} of Turing machines. Thus, for a Turing machine $\Mf$ and its input word $x$, one can construct a sequent which is derivable
in $\ACTomega$ if and only if $\Mf$ does not halt on $x$. Informally one can say that ``$\ACTomega$ can prove non-halting of $\Mf$ on $x$.'' Being a weaker system, $\ACT$ 
cannot prove non-halting of $\Mf$ on $x$ in all cases where it is true: otherwise, $\ACT$ would be also $\Pi_1^0$-hard, which is not the case (it is recursively enumerable).
However, in some easy cases proving non-halting in $\ACT$ is possible. We can formulate this as the following motto:
\begin{center}
 \it circular proofs for circular behaviour.
\end{center}
This roughly means that if $\Mf$ goes into a cycle on input $x$ (this is a very specific kind of non-halting), 
then the proof of $\Mf$ non-halting on $x$ also becomes circular, thus, can be carried out in $\ACT$.
Since circular behaviour of $\Mf$ on $x$ is undecidable, this leads to undecidability of $\ACT$.

We shall implement this general strategy with the following important modifications.
\begin{enumerate}
 \item Instead of considering cycling in general, we restrict ourselves to {\em trivial cycling,} where $\Mf$ just gets stuck:
 once it reaches a specific state, the rules prescribe it to stay in this state forever, neither moving nor altering the data on the tape.
 \item We have no good tools for analysis of $\ACT$ proofs (neither a cut-free system, nor reasonable semantics). Therefore, while
 we can establish the implication from circular behaviour of $\Mf$ on $x$ to derivability of the corresponding sequent in $\ACT$,
 proving the ``backwards implication,'' from derivability in $\ACT$ to circular behaviour, becomes problematic. We overcome this issue
 by using an indirect technique for proving undecidability and complexity, based on the notions of recursive inseparability (Subection~\ref{Ss:insep}) and
 effective inseparability (Section~\ref{S:sigma}).
\end{enumerate}

To conclude the introductory part, let us discuss one issue with the circular system. As one can notice, the rule $(\KStar\seqarr)'$ and
$(\seqarr\KStar)_{\fp}$ are asymmetric. On the other hand, as mentioned in Subsection~\ref{Ss:intro}, every left RKL is necessarily also a right one. Thus,
it looks plausible that adding the following ``right'' versions of these rules would not alter the set of derivable sequents.
$$
\infer[(\KStar\seqarr)'_R]{\Gamma, \alpha^*, \Delta \seqarr \gamma}{\Gamma, \Delta \seqarr \gamma & \Gamma, \alpha^*, \alpha, \beta \seqarr \gamma}
\qquad
\infer[(\seqarr\KStar)_{\fp,R}]{\Pi, \Delta \seqarr \alpha^*}{\Pi \seqarr \alpha^* & \Delta \seqarr \alpha}
$$
This is indeed true for $\ACT_\infty$---but not for its circular fragment!

In the circular fragment, {\em replacing} the ``left'' rules $(\KStar\seqarr)'$ and $(\seqarr\KStar)_{\fp}$ with the ``right'' ones,
$(\KStar\seqarr)'_R$ and $(\seqarr\KStar)_{\fp,R}$, yields the same logic, $\ACT$. However, the circular calculus including {\em both} ``left'' and
``right'' rules derives some sequents, which are not derivable in $\ACT$. An example of such a sequent is 
$(p \wedge q \wedge (p \BS q) \wedge (p \SL q))^+ \seqarr p$~\cite{Kuzn2018AiML} (here $\alpha^+ = \alpha \cdot \alpha^*$).
Thus, the circular system with both ``left'' and ``right'' rules is a natural example of an intermediate system strictly between
$\ACT$ and $\ACTomega$. Indeed, it does not coincide with $\ACT$ due to an explicit counterexample and does not coincide with $\ACTomega$, because
the latter is $\Pi_1^0$-hard, while circular systems are recursively enumerable. We denote this system, with two sets of rules for Kleene star,
by $\ACTbicycle$.

\section{Undecidability of $\ACT$}

Buszkowski~\cite{Buszkowski2007} proves $\Pi_1^0$-hardness (and thus undecidability) of the derivability problem for $\ACTomega$ by 
encoding the {\em non-halting} problem for deterministic Turing machines. In this section we extend Buszkowski's result and
prove undecidability for a range of logics. 

We consider logics in the language of $\ACT$ and $\ACTomega$ in a broad sense, just as arbitrary sets of sequents. Such a
logic will denoted by $\Lc$. The words ``$\Gamma \vdash \alpha$ is derivable $\Lc$'' mean $(\Gamma \vdash \alpha) \in \Lc$.

\begin{theorem}\label{Th:undec}
 If $\ACT \subseteq \Lc \subseteq \ACTomega$, then $\Lc$ is undecidable.
\end{theorem}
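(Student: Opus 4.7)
The plan is to prove undecidability via recursive inseparability, as foreshadowed by the author's ``circular proofs for circular behaviour'' motto. For pairs $(\Mf, x)$ of a Turing machine and an input word, let $H$ be the set of pairs for which $\Mf$ halts on $x$, and let $T$ be the set of pairs for which $\Mf$ \emph{trivially cycles} on $x$, meaning that it reaches a designated stuck state and remains there forever without moving the head or altering the tape. These two sets are disjoint. I would construct a computable map $(\Mf, x) \mapsto S(\Mf, x)$ from pairs to sequents with the property that $(\Mf, x) \in T$ implies $S(\Mf, x) \in \ACT$, while $(\Mf, x) \in H$ implies $S(\Mf, x) \notin \ACTomega$. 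Since $\ACT \subseteq \Lc \subseteq \ACTomega$, derivability in $\Lc$ then separates $T$ from $H$; combined with recursive inseparability of these two sets, this forces $\Lc$ to be undecidable.

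For the encoding $S(\Mf, x)$, I would adapt Buszkowski's construction used in his proof of $\Pi_1^0$-hardness of $\ACTomega$, which encodes non-halting via context-free grammars. That construction already yields the equivalence that $S(\Mf, x) \in \ACTomega$ if and only if $\Mf$ does not halt on $x$, which handles the $H$-direction immediately. The nontrivial direction is to show that when $\Mf$ trivially cycles, $S(\Mf, x)$ is already derivable in the finitely-axiomatised $\ACT$. The intuition, made precise by the Das--Pous circular calculus, is that a machine stuck in a trivial loop produces a regular non-well-founded proof in $\ACT_\infty$, and regularity translates into an $\ACT$ derivation via the induction rules $(\KStar\seqarr)_{\fp}$ and $(\seqarr\KStar)_{\fp}$. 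Concretely, I would identify an invariant formula $\beta$ representing ``$\Mf$ is in the stuck configuration'' and then use $(\KStar\seqarr)_{\fp}$ together with cut to collapse the would-be infinite unfolding of the Kleene-starred simulation formula into a finite derivation.

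Recursive inseparability of $H$ and $T$ is a standard diagonal fact. Given any putative decidable separator $D$, one uses the recursion theorem to build a machine $\Mf^*$ which on input $x$ consults $D$ on its own code and does the opposite of what $D$ predicts: if $D$ would place $(\Mf^*, x)$ outside $T$, then $\Mf^*$ enters the trivial stuck state, and if $D$ would place it outside $H$, then $\Mf^*$ halts. Either resolution contradicts the separation property, so no such $D$ exists.

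The main obstacle is the second ingredient: designing (or refining) the encoding so that trivial cycling genuinely yields an $\ACT$-derivation and not merely an $\ACTomega$-derivation. Buszkowski's encoding is built around the $\omega$-rule and handles arbitrary-length computations in an essentially infinitary way; one has to ensure that the very restricted ``stuck forever'' behaviour of a trivially cycling machine corresponds to a bounded, cyclically repeatable piece of proof that can be captured by the finitary fixed-point rules of $\ACT$. This is precisely why the author restricts attention to trivial cycling rather than arbitrary cycling, and it is where the bulk of the technical work will sit.
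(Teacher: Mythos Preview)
Your proposal is correct and follows essentially the same route as the paper: recursive inseparability of the halting set $\Hc$ and the trivially-cycling set $\Cc$, combined with Buszkowski's encoding so that $\Cc \subseteq \Kc(\ACT) \subseteq \Kc(\Lc) \subseteq \Kc(\ACTomega) = \overline{\Hc}$. The one place where the paper is more concrete than your sketch is the key lemma (your ``invariant $\beta$'' step): rather than a single invariant for $(\KStar\seqarr)_{\fp}$, the paper first derives $\psi^+ \vdash U$ in $\ACT$ (with the ``all words'' non-terminal $U$ as the easy invariant), then proves a derived \emph{long rule}
\[
\infer{\psi^+ \vdash S}{\psi \vdash S & \psi^2 \vdash S & \ldots & \psi^n \vdash S & \psi^n,\psi^+ \vdash S}
\]
and takes $n$ to be the length of the protocol up to the stuck state $q_c$, so that every length-$n$ word already contains $q_c$ (or fails to start with $\#$) and hence $\varphi_{a_1},\ldots,\varphi_{a_n},U \vdash S$ is derivable in $\MALC$; cutting against $\psi^+ \vdash U$ handles the last premise.
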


In particular, we get undecidability for $\ACTbicycle$ (introduced in the previous section), which is strictly between $\ACT$ and $\ACTomega$, and,
most importantly, for $\ACT$ itself:
\begin{cor}
  $\ACT$ is undecidable.
\end{cor}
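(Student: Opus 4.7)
The plan is to employ Buszkowski's technique of encoding Turing machine computations as sequents, refined so that it discriminates between general non-halting and a restricted kind of non-halting which $\ACT$ can already certify. Concretely, I would construct an effectively computable map $(\Mf, x) \mapsto S(\Mf, x)$ from pairs of a deterministic Turing machine and an input to sequents in the language of $\ACT$ with two properties: (i) if $\Mf$ halts on $x$, then $S(\Mf, x)$ is not derivable in $\ACTomega$; (ii) if $\Mf$ \emph{trivially cycles} on $x$ --- that is, reaches a designated ``stuck'' state whose transitions leave the tape, head and state fixed forever --- then $S(\Mf, x)$ is already derivable in $\ACT$. Property (i) is the soundness half of Buszkowski's reduction; I would reuse his encoding essentially verbatim, since $\Mf$-halting implies that no sequence of configurations witnesses non-halting, hence the corresponding sequent fails in the standard language model that refutes it in $\ACTomega$.

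Property (ii) is where the motto \emph{circular proofs for circular behaviour} gets cashed in. When $\Mf$ enters the stuck state on $x$, the infinite ``non-halting'' branch in the $\ACTomega$-style proof visits, from some point on, the same encoded configuration over and over; this loop can be collapsed into a fixed-point argument using the induction-style rules $(\KStar\seqarr)_{\fp}$ and $(\seqarr\KStar)_{\fp}$, yielding an honest finite $\ACT$-derivation of $S(\Mf, x)$. I would write this derivation out explicitly, with cut used (as permitted in $\ACT$) to glue together the finite initialization segment --- describing the transitions until the stuck state is entered --- with the ``invariant'' derivation at the stuck configuration, where staying stuck is witnessed by an application of $(\KStar\seqarr)_{\fp}$ to a formula encoding the fixed configuration.

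With such a reduction in hand, the theorem follows by a recursive inseparability argument of the kind flagged in Subsection~\ref{Ss:insep} of the introduction. Consider the disjoint sets $A = \{(\Mf, x) : \Mf \text{ halts on } x\}$ and $B = \{(\Mf, x) : \Mf \text{ trivially cycles on } x\}$; both are r.e.\ (for $B$, simulate and accept as soon as the stuck state is reached, since then divergence is certain), and a standard diagonal construction against a universal machine shows they are recursively inseparable. If any $\Lc$ with $\ACT \subseteq \Lc \subseteq \ACTomega$ were decidable, then $\{(\Mf, x) : S(\Mf, x) \in \Lc\}$ would be decidable by composition, and by (i)--(ii) it would contain $B$ and be disjoint from $A$, producing a recursive separator --- a contradiction.

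The main obstacle is (ii): Buszkowski's encoding was designed only for the infinitary calculus and his derivations genuinely use the $\omega$-rule, so one has to rework enough of the encoding to make the fixed-point rule do the work of the $\omega$-rule precisely when cycling is trivial, and to verify that the glueing via cut produces a correct finite $\ACT$-proof. Establishing (i) and proving recursive inseparability of $A$ and $B$ are comparatively routine, the former by algebraic soundness in a language model and the latter by a textbook padding-and-diagonal argument against a universal machine.
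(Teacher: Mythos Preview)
Your proposal is correct and follows the paper's approach: Buszkowski's sequent $\psi_{\Mf,x}^+\vdash S$ encoding grammar totality, the new observation that trivial cycling already makes it $\ACT$-derivable, and recursive inseparability of halting versus trivially-cycling to conclude for every $\Lc$ with $\ACT\subseteq\Lc\subseteq\ACTomega$. The only divergence is in the mechanism you sketch for (ii): the paper does not work with a formula ``encoding the fixed configuration'' and an invariant at it, but instead designs the grammar so that every word of length at least the protocol-prefix to $q_c$ either contains $q_c$ or is otherwise a bad prefix, and then uses a derived ``long rule'' for $\psi^+$ together with a cut against $\psi^+\vdash U$ (the universal nonterminal) to absorb the tail---this is precisely the obstacle you flagged as the main one.
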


\subsection{Encoding I: Behaviour of Turing Machines}
\label{Ss:enc1}

The proof of Theorem~\ref{Th:undec} is based on encoding behaviour of deterministic Turing machines via the totality property of context-free grammars.
Usually, in undecidability proofs one takes care about halting vs.\ non-halting of a Turing machine on a given input. This is the way 
Buszkowski's~\cite{Buszkowski2007} proof goes. In contrast, we
 distinguish three possible kinds of behaviour of a Turing machine $\Mf$ on input $x$:
\begin{enumerate}\itemsep=0pt
\item $\Mf$ halts on $x$;
\item $\Mf$ {\em trivially cycles} on $x$ (we define this notion below);
\item $\Mf$, when running on $x$, does not halt for another reason.
\end{enumerate}

 In what follows, we consider only deterministic, single-tape, single-head Turing machines.
For a Turing machine $\Mf$, let $\Sigma_0$ denote its {\em internal} alphabet (the input word is given in the {\em external} alphabet,
which is a subset of $\Sigma_0$). Let $Q$ be the set of {\em states,} with a designated initial state $q_0 \in Q$.
A {\em configuration} of $\Mf$ includes the following information: (1) the word $y = a_1 \ldots a_m$, over alphabet $\Sigma_0$, written in the internal memory;
(2) the current state $q \in Q$ of $\Mf$, and (3) which letter of $y$ is currently being observed. We encode such configurations by words over
$\Sigma_0 \cup Q$: if the machine is in state $q$ and observes letter $a_i$ of $a_1 \ldots a_m$, then this configuration is encoded as
$a_1 \ldots a_{i-1} q a_i \ldots a_m$.

The Turing machine is controlled by a finite
set of {\em rules} of the form $\langle q_1, a \rangle \to \langle q_2, b, d \rangle$, where $q_1, q_2 \in Q$, $a, b \in \Gamma$, and $d \in 
\{ L, R, N \}$. Such a rule is applied when $\Mf$ is in state $q_1$ observing letter $a$. The rule commands to replace $a$ with $b$, change
the state to $q_2$, and perform a move according to $d$. If $d = L$, move one cell left; if $d = R$, move one cell right; if $d = N$, no move is performed.
For technical reasons, we consider Turing machines with the tape growing only to the right. The left end is fixed, and if the machine
tries to go left ($d = L$) when it is already observing the leftmost cell, it halts. In contrast, if the machine is at the rightmost cell
and applies a rule with $d = R$, then the tape is extended by one cell, which is filled with a designated blank symbol $\Blank \in \Sigma_0$.
As $\Mf$ is deterministic, for each pair $\langle q_1, a \rangle$ there exists at most one rule $\langle q_1, a \rangle \to \langle q_2, b, d \rangle$.

\begin{df}
A Turing machine $\Mf$ halts on input word $x$, if it reaches a configuraton from which there is no next move
(thus, we do not distinguish ``successful'' computations from those which halt by error).
\end{df}

The notion of trivally cycling is defined as follows. Let us suppose that every Turing machine $\Mf$ includes a special {\em cycling} state
$q_c$ with rules $\langle q_c, a \rangle \to \langle q_c, a, N \rangle$ for any $a \in \Sigma_0$: once $\Mf$ reaches $q_c$, it gets stuck and 
never changes the configuration. This requirement does not restict capabilities of Turing machines, since one can just make $q_c$ unreachable.

\begin{df}
A Turing machine $\Mf$ trivially cycles on input word $x$, if $\Mf$ reaches the cycling state $q_c$ while running on $x$.
\end{df}

The notion of trivially cycling is essentially equivalent to reachability of the designated state $q_c$. For our exposition,
however, it is more convenient to consider the case of trivially cycling as a subcase of non-halting. Therefore, we force the
Turing machine to get stuck in $q_c$ forever and thus forbid halting after reaching $q_c$.

There is also a more general notion of {\em cycling} on a given input, when $\Mf$ returns to the same configuration
(and therefore runs infinitely long). For our purposes, the more restrictive notion of trivially cycling is more appropriate.

Consider the united alphabet $\Sigma = \Sigma_0 \cup Q \cup \{ \# \}$ (we suppose that $\Sigma_0 \cap Q = \varnothing$ and 
$\# \notin \Sigma_0 \cup Q$).

\begin{df}
 A {\em protocol} (computation history) of execution of $\Mf$ on input $x$ is the word $\# k_0 \# k_1 \# \ldots \# k_n \#$ over $\Sigma$, where $k_0 = q_0 x$ is the 
 (code of the) initial configuration,
 and each $k_i$ is the successor configuration of $k_{i-1}$, that is, $k_i$ is obtained from $k_{i-1}$ by applying the appropriate rule of $\Mf$.
 The protocol is a {\em halting} one, if $k_n$ has no successor. Otherwise, the protocol is {\em incomplete.}
\end{df}

Some encodings, in order to simplify proofs a bit, make configurations in a protocol  alternatingly  reversed 
($\# k_0 \# k_1^R \# k_2 \# k_3^R \# \ldots$); however, in Kozen's textbook~\cite{KozenBook} one can find
an encoding without reversions.

Let us fix $\Mf$ and its input word $x$.
Our aim is to describe all the words except the halting protocol of $\Mf$ on $x$ by a context-free grammar
$\Gc_{\Mf,x}$. Moreover, we shall provide an algorithm for constructing $\Gc_{\Mf,x}$ from $\Mf$ and $x$. 

We consider the following three classes of words, which are not the halting protocol.

\begin{enumerate}
 \itemsep=0pt
 \item Words beginning with $\#$ which {\em cannot be even a prefix} of a halting protocol. These include the following three subclasses:
\begin{itemize}
\itemsep=0pt
\item[1.1.] words which include $q_c$, where $q_c$ is the cycling state;
\item[1.2.] words which include a block between $\#$'s, which is not a code of a configuration (that is, includes zero or more than
one letters from $Q$, or the only $q \in Q$ is the rightmost letter, immediately before $\#$);
\item[1.3.] words which include a block of the form $\#k \#k'\#$, where $k'$ is not the successor of $k$;
\item[1.4.] words which start with $\# k \#$ where $k$ is not the initial configuration (that is,
$k \ne q_0 x$ if $x$ is non-empty and $k \ne q_0 \Blank$ if $x$ is empty).
\end{itemize}
\item Possibly incomplete protocols and prefixes of protocols, also beginning with $\#$. These include:
\begin{itemize}
\itemsep=0pt
\item[2.1.] words whose last symbol is not $\#$;
\item[2.2.] words of the form $u \# k \#$, where $k$ is a configuration which has a successor and $u$ is arbitrary.
\end{itemize}
\item Words not beginning with $\#$.
\end{enumerate}

Now we are ready to construct $\Gc_{\Mf,x}$, which is going to be a context-free grammar in Greibach~\cite{Greibach1965} normal form. 
First we postulate rules for a non-terminal symbol $U$ which will generate just all non-empty words: 
$$
U \cfarr a U, \qquad U \cfarr a,
$$
for all $a \in \Sigma$.

Next, construct a context-free grammar, in Greibach normal form, for all words of class~1, with the leftmost $\#$ removed. The most interesting
case here is subclass~1.3. Words of this subclass can be recognized by a non-deterministic pushdown automaton, see~\cite[Lecture~35]{KozenBook}, and it is well-known that any language recognized by a non-deterministic pushdown automaton is context-free. Removing the leftmost $\#$ from all words in this
language does not affect context-freeness.
Subclasses~1.1, 1.2, and~1.4 clearly form regular languages, and therefore are indeed context-free.
Let our context-free grammar for words of class~1, with the leftmost $\#$ removed, be in Greibach normal form and have starting symbol $Y$.

Class~2 also forms a regular language: for subclass~2.1 it is obvious, and for subclass~2.2 one just builds a finite automaton which checks
whether $k$ is a correct configuration and a rule of $\Mf$ is applicable. Thus, there is a context-free grammar for words of class~2, with the leftmost
$\#$ removed. Let this grammar also be in Greibach normal form, with non-terminals disjoint from the ones used for class~1 (and $U$). Denote the
starting symbol of this new grammar by $Z$.

Finally, we put all things together and construct $\Gc_{\Mf,x}$ adding the following rules:
\begin{align*}
& S \cfarr a && \mbox{for all $a \in \Sigma$,}\\
& S \cfarr a U && \mbox{for all $a \in \Sigma - \{ \# \}$ (this handles class~3),}\\
& S \cfarr \# Y U \\
& S \cfarr \# Z \\
& S \cfarr \# \# 
\end{align*}

Notice that $U$ appears in the production rule with $Y$, but not the one with $Z$. As mentioned above, any word
which has a {\em prefix} from class~1 is necessarily not the halting protocol. For class~2, this is not always the case.

The rule $S \cfarr \#\#$ is necessary because $S \cfarr \# Y U$ handles only words of length greater or equal than 3.
Other words of length 2 are handled by $S \cfarr aU$ or $S \cfarr \#Z$.

By construction, $\Gc_{\Mf,x}$ generates all non-empty words if and only if there exists no halting protocol, that is, $\Mf$ does not halt on $x$.

\subsection{Some Derivable Rules}\label{Ss:rules}
It will be convenient for us to consider the Kleene plus (positive iteration), defined as follows: $\psi^+ = \psi \cdot \psi^*$. In $\ACTomega$,
the Kleene plus obeys the following rules
$$
\infer[({}^+ \seqarr)_\omega]{\Gamma, \alpha^+, \Delta \seqarr \gamma}{\bigl( \Gamma, \alpha^n, \Delta \seqarr \gamma \bigr)_{n=1}^{\infty}}
\qquad
\infer[(\seqarr {}^+)_n, n \ge 1]{\Pi_1, \ldots, \Pi_n \seqarr \alpha^+}{\Pi_1 \seqarr \alpha & \ldots & \Pi_n \seqarr \alpha}
$$
(The left rule is a combination of $(\cdot\seqarr)$ and $(\KStar\seqarr)$ and the right one combines $(\seqarr\cdot)$ and $(\seqarr\KStar)$.)

We shall also consider conjunctions and disjunctions of finite sets of formulae. For $\Xi = \{ \xi_1, \ldots, \xi_n \}$ let $\bigwedge \Xi =
\xi_1 \wedge \ldots \wedge \xi_n$ and $\bigvee \Xi = \xi_1 \vee \ldots \vee \xi_n$ (the order of $\xi_i$ does not matter due to associativity and
commutativity of $\vee$ and $\wedge$). We can generalize the rules for $\wedge$ and $\vee$ in order to handle these ``big'' $\bigwedge$ and $\bigvee$:
$$
\infer[(\bigwedge\seqarr), \xi \in \Xi]{\Gamma, \bigwedge \Xi, \Delta \seqarr \gamma}{\Gamma, \xi, \Delta \seqarr \gamma}
\qquad
\infer[(\seqarr\bigwedge)]{\Pi \seqarr \bigwedge\Xi}{\bigl( \Pi \seqarr \xi \bigr)_{\xi \in \Xi}}
$$
$$
\infer[(\bigvee\seqarr)]{\Gamma, \bigvee\Xi, \Delta \seqarr \gamma}{\bigl( \Gamma, \xi, \Delta \seqarr \gamma \bigr)_{\xi \in \Xi}}
\qquad
\infer[(\seqarr\bigvee), \xi \in \Xi]{\Pi \seqarr \bigvee\Xi}{\Pi \seqarr \xi}
$$
These new ``big'' rules are obtained by applying the original ``small'' (binary) ones several times.

In order to facilitate construction of derivations in $\ACT$ and $\ACTomega$, we introduce several auxiliary rules.
These rules are going to be {\em derivable} using the rules of $\ACT$ (including cut), and thus valid in $\ACT$ and all its extensions
(including $\ACTomega$).
We start with inverting some of the rules.

\begin{lemma}
 The following rules are derivable in $\ACT$:
 $$
 \infer[(\seqarr\BS)_{\inv}]{\alpha, \Pi \seqarr \beta}{\Pi \seqarr \alpha \BS \beta}
 \qquad
 \infer[(\seqarr\SL)_{\inv}]{\Pi, \alpha \seqarr \beta}{\Pi \seqarr \beta \SL \alpha}
 \qquad
 \infer[(\cdot\seqarr)_{\inv}]{\Gamma, \alpha, \beta, \Delta \seqarr \gamma}{\Gamma, \alpha \cdot \beta, \Delta \seqarr \gamma}
 $$
 $$
 \infer[(\seqarr\wedge)_{\inv}, i =1,2]{\Pi \seqarr \alpha_i}{\Pi \seqarr \alpha_1 \wedge \alpha_2}
 \qquad
 \infer[(\vee\seqarr)_{\inv}, i=1,2]{\Gamma, \alpha_i, \Delta \seqarr \gamma}{\Gamma, \alpha_1 \vee \alpha_2, \Delta \seqarr \gamma}
 $$
 $$
 \infer[(\KStar\seqarr)_{\inv}, n\ge 0]{\Gamma, \alpha^n, \Delta \seqarr \gamma}{\Gamma, \alpha^*, \Delta \seqarr \gamma}
 \qquad
 \infer[({}^+\seqarr)_{\inv}, n\ge 1]{\Gamma, \alpha^n, \Delta \seqarr \gamma}{\Gamma, \alpha^+, \Delta \seqarr \gamma}
 $$
\end{lemma}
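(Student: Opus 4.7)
My plan is uniform across the listed rules: for each putative inversion I will exhibit a short ``reverse axiom'' sequent derivable by the forward rules of $\ACT$, and then close the derivation by one application of $(\CUT)$. Cut is an official rule of $\ACT$, so this suffices.

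For the residuals and product, start from the identity axioms $\alpha \seqarr \alpha$ and $\beta \seqarr \beta$. One application of $(\BS\seqarr)$ gives $\alpha, \alpha\BS\beta \seqarr \beta$; cutting this against the hypothesis $\Pi \seqarr \alpha\BS\beta$ yields $\alpha, \Pi \seqarr \beta$. The case of $(\seqarr\SL)_\inv$ is symmetric via $(\SL\seqarr)$. For $(\cdot\seqarr)_\inv$, one application of $(\seqarr\cdot)$ gives $\alpha, \beta \seqarr \alpha\cdot\beta$; cutting against $\Gamma, \alpha\cdot\beta, \Delta \seqarr \gamma$ yields $\Gamma, \alpha, \beta, \Delta \seqarr \gamma$. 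For the lattice connectives, the reverse axioms are $\alpha_1 \wedge \alpha_2 \seqarr \alpha_i$ (by $(\wedge\seqarr)_i$ applied to $\alpha_i \seqarr \alpha_i$) and $\alpha_i \seqarr \alpha_1 \vee \alpha_2$ (by $(\seqarr\vee)_i$); one cut in each case finishes the job.

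The Kleene star inversion requires a short induction on $n$ to derive the auxiliary sequent $\alpha^n \seqarr \alpha^*$. For $n=0$ this is $\Lambda \seqarr \alpha^*$, which is $(\seqarr\KStar)_0$. For the step, combine the induction hypothesis $\alpha^{n-1} \seqarr \alpha^*$ with the axiom $\alpha \seqarr \alpha$ via $(\seqarr\KStar)_{\fp}$, obtaining $\alpha, \alpha^{n-1} \seqarr \alpha^*$, i.e.\ $\alpha^n \seqarr \alpha^*$. Cutting this against $\Gamma, \alpha^*, \Delta \seqarr \gamma$ gives $\Gamma, \alpha^n, \Delta \seqarr \gamma$. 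For $({}^+\seqarr)_\inv$, unfold $\alpha^+ = \alpha \cdot \alpha^*$ and, for $n \ge 1$, derive $\alpha^n \seqarr \alpha \cdot \alpha^*$ by $(\seqarr\cdot)$ from $\alpha \seqarr \alpha$ and the previously derived $\alpha^{n-1} \seqarr \alpha^*$; then cut.

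I do not anticipate a real obstacle here, since cut is available and every required ``reverse axiom'' is a single-step application of a right (or left) rule to identity sequents, with the only mild recursion being the induction on $n$ for the Kleene star case. The one bookkeeping point worth flagging is that $\alpha^n$ is being read as an $n$-fold antecedent sequence rather than as a formula; this is exactly what $(\seqarr\KStar)_{\fp}$ and $(\seqarr\cdot)$ produce on the right-hand side of the cuts, so no extra $(\cdot\seqarr)$ steps are needed.
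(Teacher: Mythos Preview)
Your proof is correct and follows exactly the paper's approach: the paper also lists the same ``reverse axiom'' sequents ($\alpha,\alpha\BS\beta\seqarr\beta$; $\beta\SL\alpha,\alpha\seqarr\beta$; $\alpha,\beta\seqarr\alpha\cdot\beta$; $\alpha_1\wedge\alpha_2\seqarr\alpha_i$; $\alpha_i\seqarr\alpha_1\vee\alpha_2$; $\alpha^n\seqarr\alpha^*$; $\alpha^n\seqarr\alpha^+$) and closes each case by a single cut. If anything, you supply more detail than the paper does, explicitly spelling out the induction on $n$ for $\alpha^n\seqarr\alpha^*$ and the unfolding of $\alpha^+$.
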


\begin{proof}
All these rules are established by cut, with the following sequents (respectively), which are derivable in $\ACT$:
\begin{align*}
 & \alpha, \alpha \BS \beta \seqarr \beta; \\
 & \beta \SL \alpha, \alpha \seqarr \alpha;\\
 & \alpha, \beta \seqarr \alpha \cdot \beta;\\
 & \alpha_1 \wedge \alpha_2 \seqarr \alpha_i, \quad i = 1,2;\\
 & \alpha_i \seqarr \alpha_1 \vee \alpha_2, \quad i = 1,2;\\
 & \alpha^n \seqarr \alpha^*, \quad n \ge 0;\\
 & \alpha^n \seqarr \alpha^+, \quad n \ge 1.
\end{align*}
\end{proof}
Notice that $(\KStar\seqarr)_{\inv}$ and $({}^+\seqarr)_{\inv}$, being inversions of $\omega$-rules (for Kleene star and Kleene plus
respectively), are derivable already in $\ACT$.

Consecutive applications of $(\seqarr\wedge)_\inv$ yield invertibility of the corresponding ``big'' rule, $(\seqarr\bigwedge)$; the same for $(\bigvee\seqarr)$:
$$
\infer[(\seqarr\bigwedge)_\inv, \xi \in \Xi]
{\Pi \seqarr \xi}
{\Pi \seqarr \bigwedge\Xi}
\qquad
\infer[(\bigvee\seqarr)_\inv, \xi \in \Xi]
{\Gamma, \xi, \Delta \seqarr \gamma}
{\Gamma, \bigvee\Xi, \Delta \seqarr \gamma}
$$

Next, we present a fixpoint-style rule for Kleene plus:

\begin{lemma}
 The following rule is derivable in $\ACT$:
 $$
 \infer[({}^+\seqarr)_\fp]
 {\psi^+ \seqarr \gamma}
 {\psi \seqarr \gamma & \psi, \gamma \seqarr \gamma}
 $$
\end{lemma}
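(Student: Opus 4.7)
The plan is to reduce the desired sequent to an application of $(\KStar\seqarr)_{\fp}$ with the invariant formula $\psi \BS \gamma$. Since $\psi^+$ is by definition $\psi \cdot \psi^*$, the rule $(\cdot\seqarr)$ lets us replace the goal $\psi^+ \seqarr \gamma$ by $\psi, \psi^* \seqarr \gamma$. Using the derivable inversion $(\seqarr\BS)_{\inv}$ (equivalently, a single cut with the $\ACT$-provable sequent $\psi, \psi \BS \gamma \seqarr \gamma$), it suffices to derive
$$
\psi^* \seqarr \psi \BS \gamma.
$$

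First I would attack this latter sequent directly by $(\KStar\seqarr)_{\fp}$ with $\alpha := \psi$ and $\beta := \psi \BS \gamma$. This rule demands two premises: $\Lambda \seqarr \psi \BS \gamma$ and $\psi, \psi \BS \gamma \seqarr \psi \BS \gamma$. The first one follows from the assumption $\psi \seqarr \gamma$ by a single application of $(\seqarr\BS)$. For the second, I start from the axiom $\psi \seqarr \psi$ together with the second assumption $\psi, \gamma \seqarr \gamma$, apply $(\BS\seqarr)$ to obtain $\psi, \psi, \psi \BS \gamma \seqarr \gamma$, and then apply $(\seqarr\BS)$ once more to get $\psi, \psi \BS \gamma \seqarr \psi \BS \gamma$.

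Putting the two premises through $(\KStar\seqarr)_{\fp}$ yields $\psi^* \seqarr \psi \BS \gamma$; the final two steps described above (application of $(\seqarr\BS)_{\inv}$ and then $(\cdot\seqarr)$, or equivalently a cut with $\psi, \psi \BS \gamma \seqarr \gamma$ followed by $(\cdot\seqarr)$) then give the target sequent $\psi^+ \seqarr \gamma$.

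There is no real obstacle here: every move uses only the native $\ACT$ rules together with cut and the previously established inversion rules, and the construction is entirely uniform in $\psi$ and $\gamma$. The only step requiring a mild insight is choosing $\psi \BS \gamma$ as the induction invariant for $(\KStar\seqarr)_{\fp}$; this is the natural choice since proving it is a post-fixpoint of ``prepend $\psi$'' is precisely what the hypothesis $\psi, \gamma \seqarr \gamma$ gives us.
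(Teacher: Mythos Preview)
Your proof is correct and follows essentially the same route as the paper: reduce $\psi^{+}\seqarr\gamma$ via $(\cdot\seqarr)$ and $(\seqarr\BS)_{\inv}$ to $\psi^{*}\seqarr\psi\BS\gamma$, and close the latter by $(\KStar\seqarr)_{\fp}$ with invariant $\psi\BS\gamma$. Your handling of the second premise $\psi,\psi\BS\gamma\seqarr\psi\BS\gamma$ is in fact slightly more careful than the paper's presentation, as you explicitly thread the hypothesis $\psi,\gamma\seqarr\gamma$ through $(\BS\seqarr)$ and $(\seqarr\BS)$.
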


\begin{proof}
 The derivation is as follows:
 $$
 \infer[(\cdot\seqarr)]{\psi^+ \seqarr \gamma}
 {\infer[(\seqarr\BS)_\inv]{\psi, \psi^* \seqarr \gamma}
 {\infer[(\KStar\seqarr)_\fp]{\psi^* \seqarr \psi \BS \gamma}
 {\infer[(\seqarr\BS)]{\Lambda \seqarr \psi \BS \gamma}
 {\psi \seqarr \gamma} &
 \infer[(\BS\seqarr)]{\psi, \psi \BS \gamma \seqarr \gamma}
 {\psi \seqarr \psi & \gamma \seqarr \gamma}}}}
 $$
\end{proof}

Finally, we establish derivability of the ``long rule'' for Kleene plus.
\begin{lemma}\label{Lm:long}
 For any natural $n$, the following ``long rule'' is admissible in $\ACT$:
 $$
 \infer{\psi^+ \vdash \gamma}{\psi \vdash \gamma & \psi^2 \vdash \gamma & \ldots & \psi^n \vdash \gamma & \psi^n, \psi^+ \vdash \gamma}
 $$
\end{lemma}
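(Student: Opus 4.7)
The plan is induction on $n$, using as the main auxiliary sequent the derivable ``unfolding''
$$\psi^+ \seqarr \psi \vee \psi \cdot \psi^+$$
in $\ACT$. I would derive this unfolding by applying the fixpoint rule $({}^+\seqarr)_\fp$ from the preceding lemma with the succedent $\psi \vee \psi\cdot\psi^+$ playing the role of $\gamma$: its first premise $\psi \seqarr \psi\vee\psi\cdot\psi^+$ is obtained by $(\seqarr\vee)_1$ from $\psi\seqarr\psi$, while its second premise $\psi, \psi\vee\psi\cdot\psi^+ \seqarr \psi\vee\psi\cdot\psi^+$ splits via $(\vee\seqarr)$ into two routine subgoals dispatched by $(\seqarr\cdot)$, $(\seqarr\vee)_i$, and the derivable sequent $\psi, \psi^+ \seqarr \psi^+$ (itself an easy combination of $(\seqarr\KStar)_\fp$ with $(\cdot\seqarr)$ and $(\seqarr\cdot)$, after unfolding $\psi^+ = \psi \cdot \psi^*$).

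With the unfolding in hand, the base case $n = 0$ is immediate since $\psi^0, \psi^+ \seqarr \gamma$ just is $\psi^+ \seqarr \gamma$. For the inductive step at $n+1$, given premises $\psi^k \seqarr \gamma$ for $k = 1, \ldots, n+1$ together with $\psi^{n+1}, \psi^+ \seqarr \gamma$, I would derive $\psi^n, \psi^+ \seqarr \gamma$ from the last two premises; together with the first $n$ premises this matches exactly the premise pattern for the rule at length $n$, so the induction hypothesis delivers $\psi^+ \seqarr \gamma$.

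The derivation of $\psi^n, \psi^+ \seqarr \gamma$ proceeds by a cut against the unfolding sequent, reducing the goal to $\psi^n, \psi\vee\psi\cdot\psi^+ \seqarr \gamma$. One application of $(\vee\seqarr)$ yields two subgoals: $\psi^n, \psi \seqarr \gamma$, which coincides with $\psi^{n+1} \seqarr \gamma$ after $(\cdot\seqarr)_\inv$ and $(\seqarr\cdot)$ exploit associativity of $\cdot$; and $\psi^n, \psi, \psi^+ \seqarr \gamma$, which coincides similarly with $\psi^{n+1}, \psi^+ \seqarr \gamma$. The main obstacle I anticipate is only the bookkeeping surrounding these interconversions between the single formula $\psi^{n+1}$ and a sequence of $n+1$ copies of $\psi$ in the antecedent, but this is mechanical given the invertibilities established earlier in this subsection, and no genuinely new idea beyond the unfolding sequent is needed.
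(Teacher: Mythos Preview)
Your proposal is correct and follows essentially the same approach as the paper: induction on $n$ with trivial base case, the induction step reducing $\psi^n,\psi^+\seqarr\gamma$ to $\psi^{n+1}\seqarr\gamma$ and $\psi^{n+1},\psi^+\seqarr\gamma$ via a cut against the unfolding $\psi^+\seqarr\psi\vee\psi\cdot\psi^+$, which in turn is derived from $({}^+\seqarr)_\fp$. The only differences from the paper are cosmetic (an index shift and a slightly different order of rule applications when proving the unfolding sequent).
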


\begin{proof}
 Induction on $n$. The base case, $n=0$, is trivial (the conclusion coincides with the only premise).
 For the induction step, we 
 start deriving $\psi^+ \vdash \gamma$ by applying the ``long rule'' for $n-1$:
 $$
 \infer{\psi^+ \vdash \gamma}{\psi \vdash \gamma & \psi^2 \vdash \gamma & \ldots & \psi^{n-1} \vdash \gamma & \psi^{n-1}, \psi^+ \vdash \gamma}
 $$
 The first $n-1$ premises are given. The last one is derived as follows:
 $$
 \infer[(\CUT)]{\psi^{n-1}, \psi^+ \vdash \gamma}
 {\psi^+ \vdash \psi \vee (\psi \cdot \psi^+) & 
 \infer[(\vee\vdash)]{\psi^{n-1}, \psi \vee (\psi \cdot \psi^+) \vdash \gamma}
 {\psi^n \vdash \gamma & \infer[(\cdot\vdash)]{\psi^{n-1}, \psi \cdot \psi^+ \vdash \gamma}
 {\psi^n, \psi^+ \vdash \gamma}}}
 $$
 Here $\psi^n \vdash \gamma$ and $\psi^n, \psi^+ \vdash \gamma$ are given, and 
 $\psi^+ \vdash \psi \vee (\psi \cdot \psi^+)$ is generally true in Kleene algebra,
 thus derivable in $\ACT$.\footnote{The derivation is as follows:
 $$
 \infer[({}^+\vdash)_\fp]{\psi^+ \vdash \psi \vee (\psi \cdot \psi^+)}
 {\infer[(\vee\vdash)]
 {\psi \vdash \psi \vee (\psi \cdot \psi^+)}
 {\psi \vdash \psi} & 
 \infer[(\vee\vdash)]{\psi, \psi \vee (\psi \cdot \psi^+) \vdash 
 \psi \vee (\psi \cdot \psi^+)}
 {\infer[(\vdash\cdot)]{\psi, \psi \vee (\psi \cdot \psi^+) \vdash 
 \psi \cdot \psi^+}{\psi \vdash \psi & 
 \infer[(\vee\vdash)]{\psi \vee (\psi \cdot \psi^+) \vdash \psi^+}{\psi \vdash \psi^+ & 
 \psi \cdot \psi^+ \vdash \psi^+}}}}
 $$
 Derivations of $\psi \vdash \psi^+$ and $\psi \cdot \psi^+
 \vdash \psi^+$ are obvious.
 }
\end{proof}

Notice that disjunction ($\vee$) is not need for formulating the ``long rule,'' but is essentially used when establishing its
admissibility.

\subsection{Encoding II: from Grammars to Sequents}\label{Ss:enc2}

Let us now translate the context-free grammar $\Gc_{\Mf,x}$ into the Lambek calculus. The construction essentially resembles the translation of context-free grammars to
basic categorial grammars by Gaifman~\cite{BGS1960}. Let non-terminals of $\Gc_{\Mf,x}$ be  variables in our logics. 
For each letter $a \in \Sigma$ let 
$$
\Xi_a = \{ A \SL (B_1 \cdot \ldots \cdot B_\ell) \mid (A \cfarr a B_1 \ldots B_\ell) \text{ is a production rule of $\Gc_{\Mf,x}$} \}
$$
(in particular, for a production rule of the form $A \to a$ we have $\ell = 0$, and $A \SL (B_1 \cdot \ldots \cdot B_\ell)$ means just $A$),
$$
\varphi_a = \bigwedge \Xi_a,
$$
and
$$\psi_{\Mf,x} = \bigvee  \{ \varphi_a \mid a \in \Sigma \}.$$
Further we shall write just $\psi$ for $\psi_{\Mf,x}$, if it does not lead to confusion.

We shall need the following technical lemma about derivability in $\MALC$:
\begin{lemma}\label{Lm:theta}
Let $\Theta_1$, \ldots, $\Theta_n$ be finite sets of formulae built using only $\BS$, $\SL$, and $\cdot$, and let $\gamma$ be also built using
only $\BS$, $\SL$, $\cdot$. Then $\bigwedge\Theta_1, \ldots, \bigwedge\Theta_n \vdash \gamma$ is derivable in $\MALC$ if and only if there
exist $\theta_1 \in \Theta_1$, \ldots, $\theta_n \in \Theta_n$ such that $\theta_1, \ldots, \theta_n \vdash \gamma$ is derivable in $\MALC$.
\end{lemma}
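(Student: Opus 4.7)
The plan is to prove both directions by standard Gentzen-style arguments, with the nontrivial ``only if'' direction relying on cut-elimination for $\MALC$.

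The ``if'' direction is immediate. Given a $\MALC$-derivation of $\theta_1,\ldots,\theta_n \seqarr \gamma$, I would apply the ``big'' left rule $(\bigwedge\seqarr)$ once for each $j$, with selector $\xi = \theta_j \in \Theta_j$, thereby replacing each $\theta_j$ in the antecedent by $\bigwedge \Theta_j$ and obtaining the desired sequent.

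For the ``only if'' direction, the first step is to invoke cut-elimination for $\MALC$ to obtain a cut-free derivation of $\bigwedge\Theta_1, \ldots, \bigwedge\Theta_n \seqarr \gamma$. Cut-free $\MALC$-derivations enjoy the subformula property, so every formula in such a derivation is a subformula of the endsequent. Since $\gamma$ and every $\xi \in \Theta_j$ are built only from $\BS, \SL, \cdot$, the only $\wedge$-formulas that can appear anywhere in the proof are $\wedge$-subformulas of the $\bigwedge\Theta_j$; in particular, $\vee$ and $\KStar$ do not appear, and the only applicable rules are $(\mathrm{ax})$, $(\BS\seqarr)$, $(\SL\seqarr)$, $(\cdot\seqarr)$, $(\wedge\seqarr)_i$, $(\seqarr\BS)$, $(\seqarr\SL)$, $(\seqarr\cdot)$.

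I would then prove a slightly more general statement by induction on the depth of the cut-free derivation: for any cut-free derivation of $\Gamma \seqarr \gamma'$ where $\gamma'$ is built only from $\BS, \SL, \cdot$ (call such formulas \emph{plain}) and $\Gamma$ is a sequence of plain formulas together with some formulas of the form $\bigwedge \Xi$ (for $\Xi$ a finite set of plain formulas, possibly obtained as a ``sub-conjunction'' of some $\Theta_j$), there exist selections $\theta \in \Xi$ for each such $\bigwedge\Xi$ so that the sequent obtained by replacing each $\bigwedge\Xi$ with its chosen $\theta$ is derivable in $\MALC$. In the base case $\alpha \seqarr \alpha$ the plainness of $\gamma' = \alpha$ forces $\alpha$ plain, so there is nothing to choose. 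If the last rule is a binary $(\wedge\seqarr)_i$ acting on some $\bigwedge\Xi = \alpha_1 \wedge \alpha_2$, the premise replaces $\bigwedge\Xi$ by $\alpha_i$: if $\alpha_i$ is plain, the IH on the premise plus taking $\theta := \alpha_i$ yields the selection (and $\alpha_i$ is a leaf, hence in $\Xi$); if $\alpha_i$ is itself a sub-conjunction $\bigwedge\Xi'$ with $\Xi' \subseteq \Xi$, the IH gives a selection in $\Xi' \subseteq \Xi$. For the remaining rules (context-splitting $(\BS\seqarr), (\SL\seqarr), (\seqarr\cdot)$; context-unfolding $(\cdot\seqarr), (\seqarr\BS), (\seqarr\SL)$), each conjunctive antecedent formula stays entirely within one premise, so the IH applies premise-by-premise and the resulting selections combine.

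The main obstacle is the bookkeeping around the $(\wedge\seqarr)$-steps: a single $\bigwedge\Theta_j$ in the endsequent may be decomposed by several binary $\wedge$-eliminations scattered at different depths of the proof, each step narrowing to a sub-conjunction before finally exposing a plain leaf. The generalized statement above, which explicitly allows sub-conjunctions $\bigwedge\Xi$ with $\Xi \subseteq \Theta_j$ in the antecedent of the induction hypothesis, is chosen precisely so that the induction flows through each of these partial decompositions without prematurely committing to a selection.
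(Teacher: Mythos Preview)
Your proposal is correct and follows essentially the same approach as the paper. Both arguments rest on cut-elimination plus the subformula property, and on the key observation that in the absence of $(\vee\seqarr)$ and $(\seqarr\wedge)$ each conjunctive antecedent formula has a unique, non-branching fate in the proof; the paper phrases this as an informal ``trace'' of each $\bigwedge\Theta_i$ upward and then a replacement along that trace, whereas you recast the same idea as a formal induction on derivation depth with a generalized hypothesis allowing sub-conjunctions --- but the underlying mechanism is identical.
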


\begin{proof}
The ``if'' part is just application of $(\bigwedge\seqarr)$. The interesting direction is ``only if.'' Consider a cut-free derivation of $\bigwedge \Theta_1, \ldots, \bigwedge \Theta_n \seqarr \gamma$ and trace the occurrences of $\bigwedge \Theta_i$ upwards from the goal sequent. After each rule application the conjunction $\bigwedge \Theta_i$ either remains intact (if it is not the active formula in this rule) or loses some of the conjuncts (actually, it either gets reduced to the rightmost conjunct, or loses this conjunct). 
The crucial observation here is that {\em the trace does not branch.} This is due to the fact that our derivation does not include $(\bigvee\seqarr)$ and $(\seqarr\bigwedge)$.
Finally, $\bigwedge \Theta_i$ gets reduced to one formula, $\theta_i  \in \Theta_i$. Then we just replace all the formulae on the trace by $\theta_i$, resulting in a valid derivation of $\theta_1, \ldots, \theta_n \seqarr \gamma$ in $\MALC$.
\end{proof}

The next three lemmas are due to Buszkowski~\cite{Buszkowski2007} and form the base for Buszkowski's proof of $\Pi_1^0$-hardness of $\ACTomega$.

\begin{lemma}\label{Lm:phi}
A word $a_1 \ldots a_n$ is generated from non-terminal $A$ in $\Gc_{\Mf,x}$ if and only if
the sequent $\varphi_{a_1}, \ldots, \varphi_{a_n} \vdash A$ is derivable in $\MALC$.~{\rm\cite{BGS1960,Buszkowski2007}}
\end{lemma}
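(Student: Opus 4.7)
I plan to prove each direction separately. For the ``if'' direction I would induct on the grammar derivation $A \Rightarrow^* a_1 \ldots a_n$ in $\Gc_{\Mf,x}$. If the first rule applied is $A \cfarr a_1 B_1 \ldots B_\ell$, then $a_2 \ldots a_n$ decomposes as $w_1 \ldots w_\ell$ with $B_j \Rightarrow^* w_j$. By induction hypothesis each $\varphi_{w_j} \vdash B_j$ is derivable in $\MALC$, so $(\seqarr\cdot)$ gives $\varphi_{w_1}, \ldots, \varphi_{w_\ell} \vdash B_1 \cdot \ldots \cdot B_\ell$. A single application of $(\SL\seqarr)$ against the axiom $A \vdash A$ then yields
\[
A \SL (B_1 \cdot \ldots \cdot B_\ell), \varphi_{w_1}, \ldots, \varphi_{w_\ell} \vdash A,
\]
and since $A \SL (B_1 \cdot \ldots \cdot B_\ell) \in \Xi_{a_1}$, one application of $(\bigwedge\seqarr)$ replaces the leftmost formula by $\varphi_{a_1}$. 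The degenerate case $\ell = 0$ (a production $A \cfarr a_1$) reduces to $\varphi_{a_1} \vdash A$ directly from $A \vdash A$ via $(\bigwedge\seqarr)$.

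For the ``only if'' direction, Lemma~\ref{Lm:theta} applies, since every member of $\Xi_{a_i}$ is built from atoms using only $\SL$ and $\cdot$, and the succedent $A$ is atomic: it yields formulas $\theta_i = A_i \SL (B^{(i)}_1 \cdot \ldots \cdot B^{(i)}_{\ell_i}) \in \Xi_{a_i}$, each coming from a production $A_i \cfarr a_i B^{(i)}_1 \ldots B^{(i)}_{\ell_i}$, such that $\theta_1, \ldots, \theta_n \vdash A$ is derivable in $\MALC$. To finish, I would interpret non-terminals $X$ in the residuated lattice of languages over $\Sigma$ by $v(X) = \{ w \in \Sigma^+ \mid X \Rightarrow^* w \}$, extend $v$ to compound formulas by the lattice operations, and verify that $\{a\} \subseteq v(\varphi_a)$ for every $a \in \Sigma$: for each conjunct $A' \SL (B_1 \cdot \ldots \cdot B_\ell) \in \Xi_a$ arising from $A' \cfarr a B_1 \ldots B_\ell$, the inclusion $\{a\} \cdot v(B_1) \cdots v(B_\ell) \subseteq v(A')$ is immediate. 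Soundness of $\MALC$ for this model then forces $a_1 \ldots a_n \in v(\varphi_{a_1}) \cdots v(\varphi_{a_n}) \subseteq v(A)$, i.e.\ $A \Rightarrow^* a_1 \ldots a_n$.

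The step I expect to be the main obstacle is precisely this backward direction, if attempted purely syntactically. In a cut-free proof of $\theta_1, \ldots, \theta_n \vdash A$ the last $(\SL\seqarr)$ need not act on the leftmost antecedent formula, and its right premise $\Gamma, A_i, \Delta \vdash A$ replaces a compound $\theta_i$ by an atom, so the sequent no longer has the induction-hypothesis shape and one has to juggle simultaneous inductions on proof height and on antecedent length. Routing through the language-model semantics bypasses this bookkeeping entirely; the substantive content is simply that the construction of $\Xi_a$ was engineered so that $a$ itself witnesses $\{a\} \subseteq v(\varphi_a)$, after which soundness of $\MALC$ does all the work.
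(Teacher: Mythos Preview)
Your argument is correct, though your direction labels are swapped: what you call the ``if'' direction is the implication from the grammar derivation to $\MALC$-derivability, which is the \emph{only if} direction of the stated equivalence, and vice versa. Also, in the backward direction the appeal to Lemma~\ref{Lm:theta} is redundant once you use the language model: soundness applies directly to the original sequent $\varphi_{a_1}, \ldots, \varphi_{a_n} \vdash A$, so the $\theta_i$'s never get used.

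On the forward (grammar $\Rightarrow$ derivable) direction, your induction is essentially the paper's argument; the paper phrases it slightly more generally for words over $\Sigma \cup N$ to make the induction hypothesis cleanly cover non-terminals, but the content is the same.

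On the backward (derivable $\Rightarrow$ grammar) direction, you take a genuinely different route. The paper proceeds purely syntactically: after invoking Lemma~\ref{Lm:theta} it proves a small focusing lemma (any cut-free derivation of $\Phi \vdash B_1 \cdot \ldots \cdot B_\ell$ splits $\Phi$ into blocks deriving the $B_j$ separately) and then inducts on the total number of connectives in $\xi_1, \ldots, \xi_n \vdash A$, reconstructing the grammar derivation from the lowermost $(\SL\seqarr)$. This is exactly the bookkeeping you anticipated as the obstacle. Your semantic shortcut via the language interpretation $v(X) = \{ w \mid X \Rightarrow^* w \}$ is cleaner and avoids all of it; the only thing it costs is that it is not self-contained within the sequent calculus and relies on soundness of $\MALC$ for language models (a standard fact, but external to this paper). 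The paper's syntactic proof, by contrast, stays entirely inside the proof theory and incidentally yields a bit more structural information (the explicit correspondence between the shape of the cut-free proof and the parse tree), at the price of the extra induction.
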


\begin{proof}
 By Lemma~\ref{Lm:theta},  $\varphi_{a_1}, \ldots, \varphi_{a_n} \seqarr A$ is derivable if and only if there exist $\xi_1, \in \Xi_{a_1}, \ldots, \xi_n \in \Xi_{a_n}$, such that $\xi_1, \ldots, \xi_n \seqarr A$ is derivable.
 
 In order to proceed by induction, we formulate the following more general statement. Let $e_1 \ldots e_n$ be a word in the extended alphabet $\Sigma \cup N$, including both terminals and non-terminals. Then we claim that $e_1 \ldots e_n$ is derivable from $A$ in $\Gc_{\Mf,x}$ if and only if there exist such $\xi_1, \ldots, \xi_n$ that:
 \begin{enumerate}
  \item for each $i$, if $e_i \in \Sigma$, then $\xi_i \in \Xi_{e_i}$;
  \item for each $i$, if $e_i \in N$, then $\xi_i = e_i$;
  \item the sequent $\xi_1, \ldots, \xi_n \seqarr A$ is derivable.
 \end{enumerate}
 
 Both implications here are proved by induction on derivation. For the ``only if'' direction, the base case is trivial ($A \seqarr A$ is an axiom), and for the induction step let $A \cfarr a_1 B_1 \ldots B_\ell$ be the first rule applied. Then let $\xi_1 = A \SL (B_1 \cdot\ldots\cdot B_{\ell}) \in \Xi_{a_1}$, and we enjoy the following derivation:
 $$
 \infer[(\SL\seqarr)]
 {A \SL (B_1 \cdot \ldots \cdot B_{\ell}), \xi_2, \ldots, 
 \xi_n \seqarr A}
 {\xi_2, \ldots, \xi_n \seqarr B_1 \cdot \ldots \cdot B_{\ell} 
 & A \seqarr A}
 $$
 The induction hypothesis, applied to subderivations starting from $B_1$, \ldots, $B_\ell$, yields such $\xi_2, \ldots, \xi_n$ that the left premise is derivable by $(\seqarr\cdot)$.

 For the ``if'' part, we first notice that the only rules which can be applied in a cut-free derivation of $\xi_1, \ldots, \xi_n \seqarr A$ are $(\SL\seqarr)$ and $(\seqarr\cdot)$. We claim that if $\Phi \seqarr B_1 \ldots \cdot \ldots B_{\ell}$ is derivable, then $\Phi = \Phi_1, \ldots, \Phi_{\ell}$ and $\Phi_j \seqarr B_j$
 for $i = 1, \ldots, \ell$ (this is a small ``focusing lemma''). This is proved by an easy induction on derivation. Indeed, if the lowermost rule is $(\seqarr\cdot)$, we
 have $\Phi = \Phi', \Phi_{\ell}$, where $\Phi_{\ell} \seqarr B_{\ell}$ and $\Phi' \seqarr B_1 \ldots \cdot \ldots B_{\ell-1}$ are derivable. Applying the induction hypothesis to the latter, we get $\Phi' = \Phi_1, \ldots, \Phi_{\ell-1}$, with $\Phi_j \seqarr B_j$ derivable ($j = 1, \ldots, \ell-1$). If the lowermost rule is $(\SL\seqarr)$, then
 $\Phi = \Gamma, \beta \SL \alpha, \Pi, \Delta$, and $\Pi \seqarr \alpha$ and $\Gamma, \beta, \Delta \seqarr B_1 \cdot \ldots \cdot B_{\ell}$ are derivable. By induction hypothesis, $\Gamma = \Phi_1, \ldots, \Phi_{i-1}, \Phi'_i$, $\Delta = \Phi''_i, \Phi_{i+1}, \ldots, \Phi_{\ell}$, and the following sequents are derivable:
 $\Phi'_i, \beta, \Phi''_i \seqarr B_i$ and $\Phi_j \seqarr B_j$ for $j \ne i$. Applying $(\SL\seqarr)$ to the former, we get $\Phi'_i, \beta \SL \alpha, \Pi, \Phi''_i \seqarr B_i$, which is the needed $\Phi_i \seqarr B_i$ sequent.
 
 Now we proceed by induction on the total number of connectives in $\xi_1, \ldots, \xi_n \seqarr A$. This sequent is cut-free derivable, and lowermost rule in its derivation could be only $(\SL\seqarr)$ for $\xi_i = F \SL (B_1 \cdot \ldots \cdot B_{\ell})$:
 $$
 \infer{\xi_1, \ldots, \xi_{i-1}, F \SL (B_1 \cdot \ldots \cdot B_{\ell}), \xi_{i+1}, \ldots, \xi_{j}, \xi_{j+1}, \ldots, \xi_n \seqarr A}
 {\xi_{i+1}, \ldots \xi_j \seqarr B_1 \cdot \ldots \cdot B_{\ell} & 
 \xi_1, \ldots, \xi_{i-1}, F, \xi_{j+1}, \ldots, \xi_n \seqarr A}
 $$
 As shown above, derivability of $\xi_{i+1}, \ldots \xi_j \seqarr B_1 \cdot \ldots \cdot B_{\ell}$ yields derivability of
 \begin{align*}
  & \xi_{i+1}, \ldots, \xi_{k_1} \seqarr B_1; \\
  & \xi_{k_1+1}, \ldots, \xi_{k_2} \seqarr B_2; \\
  & \ldots\\
  & \xi_{k_{\ell-1}+1}, \ldots, \xi_j \seqarr B_{\ell}.
 \end{align*}
 Each of these sequents has less connectives than the original one, thus we can apply the induction hypothesis and get the following derivabilities in $\Gc_{\Mf,x}$:
 \begin{align*}
  & e_{i+1}, \ldots, e_{k_1} \mbox{ is derivable from $B_1$;}\\
  & e_{k_1+1}, \ldots, e_{k_2} \mbox{ is derivable from $B_2$;}\\
  & \ldots \\
  & e_{k_{\ell-1}+1}, \ldots, e_j \mbox{ is derivable from $B_\ell$.}
 \end{align*}
 Moreover, applying the induction hypothesis to $\xi_1, \ldots, \xi_{i-1}, F, \xi_{j+1}, \ldots, \xi_n \seqarr A$, we get
 derivability of $e_1 \ldots e_{i-1} F e_{i+1} \ldots e_n$ from $A$ in $\Gc_{\Mf,x}$ 
 Finally, since $\xi_i = F \SL (B_1 \cdot \ldots \cdot B_{\ell})$ is not a variable, $e_i = a_i$ is a terminal symbol, and
 since $\xi_i \in \Xi_i$, $F \cfarr a_i B_1 \ldots B_\ell$ is a production rule of $\Gc_{\Mf,x}$. 
  Applying this rule and the derivabilities from $B_1$, \ldots, $B_\ell$ established above, we get derivability of
  $e_1 \ldots e_n$ from $A$.
\end{proof}

\begin{lemma}\label{Lm:psi-n}
The grammar $\Gc_{\Mf,x}$ generates all words of length $n$ if and only if $\psi^n \vdash S$ is derivable in $\MALC$.
\end{lemma}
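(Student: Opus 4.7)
The plan is to reduce the statement to Lemma~\ref{Lm:phi}, which characterises derivability of $\varphi_{a_1}, \ldots, \varphi_{a_n} \seqarr S$ in terms of generability of the word $a_1 \ldots a_n$ from $S$ in $\Gc_{\Mf,x}$. Since $\psi = \bigvee \{ \varphi_a \mid a \in \Sigma \}$, the sequent $\psi^n \seqarr S$ (with $n$ copies of $\psi$ in the antecedent) is morally equivalent to the whole family of sequents $\varphi_{a_1}, \ldots, \varphi_{a_n} \seqarr S$ ranging over $(a_1, \ldots, a_n) \in \Sigma^n$. The job is to pass between ``one sequent with $\psi$'s'' and ``$|\Sigma|^n$ sequents with $\varphi_a$'s'' using the rules for $\bigvee$.

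For the ``only if'' direction, assume every word of length $n$ is generated from $S$ in $\Gc_{\Mf,x}$. To derive $\psi^n \seqarr S$, I would apply $(\bigvee\seqarr)$ to each of the $n$ occurrences of $\psi$ in the antecedent; after all $n$ applications, the leaves of the resulting proof-search tree are exactly the sequents $\varphi_{a_1}, \ldots, \varphi_{a_n} \seqarr S$ for every tuple $(a_1, \ldots, a_n) \in \Sigma^n$. Each such leaf is derivable in $\MALC$ by Lemma~\ref{Lm:phi}, since by assumption $a_1 \ldots a_n$ is generated from $S$. Stitching these subderivations under the $\bigvee$-eliminations yields $\psi^n \seqarr S$.

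For the ``if'' direction, assume $\psi^n \seqarr S$ is derivable in $\MALC$ and fix an arbitrary word $a_1 \ldots a_n$ over $\Sigma$. I would apply the invertibility rule $(\bigvee\seqarr)_\inv$ from Subsection~\ref{Ss:rules} once per occurrence of $\psi$, each time choosing the disjunct $\varphi_{a_i}$ corresponding to the $i$-th letter of the target word. This delivers a $\MALC$-derivation of $\varphi_{a_1}, \ldots, \varphi_{a_n} \seqarr S$, and Lemma~\ref{Lm:phi} then gives that $a_1 \ldots a_n$ is generated from $S$. Since the word was arbitrary, every word of length $n$ is generated.

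The argument is essentially bookkeeping once Lemma~\ref{Lm:phi} and the invertibility of $(\bigvee\seqarr)$ are in hand; the only thing to check is that the $n$ applications of $(\bigvee\seqarr)$ (respectively $(\bigvee\seqarr)_\inv$) act on distinct occurrences of $\psi$ and hence can be performed independently, which is clear. I do not expect a real obstacle here --- the interest of the lemma is the packaging: it compresses the combinatorial assertion ``$\Gc_{\Mf,x}$ is total at length $n$'' into a single sequent parameterised by $n$, which is precisely what is needed to bridge the grammar-theoretic side of the encoding to the subsequent Kleene-plus argument.
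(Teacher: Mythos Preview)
Your proposal is correct and follows exactly the paper's approach: the paper's proof reads, in full, ``Immediately from Lemma~\ref{Lm:phi}, by $(\bigvee\seqarr)$ and $(\bigvee\seqarr)_\inv$,'' which is precisely the two-direction argument you spelled out. The only minor remark is that $(\bigvee\seqarr)_\inv$ in Subsection~\ref{Ss:rules} is stated as derivable in $\ACT$ via cut, so in $\MALC$ you are implicitly relying on cut admissibility (equivalently, on the standard invertibility of $(\vee\seqarr)$ in the cut-free calculus), which is unproblematic.
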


\begin{proof}
 Immediately from Lemma~\ref{Lm:phi}, by $(\bigvee\seqarr)$
 and $(\bigvee\seqarr)_\inv$.
\end{proof}

\begin{lemma}\label{Lm:ACTomega}
Turing machine $\Mf$ does not halt on input $x$ if and only if $\psi^+ \vdash S$ is derivable in $\ACTomega$.
\end{lemma}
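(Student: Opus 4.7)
The plan is to derive the lemma by chaining together Lemma~\ref{Lm:psi-n} with the derivability/admissibility of the $\omega$-rule and its inverse for Kleene plus, together with the observation from Subsection~\ref{Ss:enc1} that $\Gc_{\Mf,x}$ generates all non-empty words precisely when $\Mf$ does not halt on $x$. Both directions are essentially immediate once we unwind the definition $\psi^+ = \psi \cdot \psi^*$ and replace ``all words of length $n$ for every $n \geq 1$'' by ``the sequent $\psi^n \vdash S$ is derivable for every $n \geq 1$.''

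For the ``only if'' direction, assume $\Mf$ does not halt on $x$. By construction of $\Gc_{\Mf,x}$ (Subsection~\ref{Ss:enc1}), the grammar generates every non-empty word over $\Sigma$, in particular every word of length $n$ for each $n \geq 1$. Applying Lemma~\ref{Lm:psi-n}, the sequent $\psi^n \vdash S$ is derivable in $\MALC$ for every $n \geq 1$, and these are also derivations in $\ACTomega$ since $\MALC \subseteq \ACTomega$. Now a single application of $({}^+\seqarr)_\omega$ with premises $(\psi^n \vdash S)_{n=1}^{\infty}$ yields $\psi^+ \vdash S$ in $\ACTomega$.

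For the ``if'' direction, suppose $\psi^+ \vdash S$ is derivable in $\ACTomega$. By the admissible inversion rule $({}^+\seqarr)_\inv$ (Subsection~\ref{Ss:rules}, which is actually derivable already in $\ACT$), we obtain a derivation of $\psi^n \vdash S$ for every $n \geq 1$. By Lemma~\ref{Lm:psi-n}, this means $\Gc_{\Mf,x}$ generates every word of length $n$, for every $n \geq 1$, so it generates all non-empty words. By the defining property of $\Gc_{\Mf,x}$, this is equivalent to $\Mf$ not halting on $x$.

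There is no real obstacle here: both directions are ``one-line'' corollaries of Lemma~\ref{Lm:psi-n} and the behaviour of $(-)^+$ under the $\omega$-rule. The only point worth flagging is that we use the $\omega$-rule $({}^+\seqarr)_\omega$ in the forward direction, which is why this lemma lives in $\ACTomega$ rather than $\ACT$; the passage from $\ACTomega$ back to $\ACT$ (and to trivially cycling behaviour) is the subject of the subsequent development and is not needed here.
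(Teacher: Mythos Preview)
Your proof is correct and follows exactly the same approach as the paper: the paper's own proof is a single line, ``Immediately from Lemma~\ref{Lm:psi-n}, by $({}^+\seqarr)$ and $({}^+\seqarr)_\inv$,'' which is precisely what you have unfolded in detail. The only step left implicit in both versions is that derivability of the star-free sequent $\psi^n \vdash S$ in $\ACTomega$ entails derivability in $\MALC$ (needed to invoke Lemma~\ref{Lm:psi-n} in the ``if'' direction), which follows from cut elimination and conservativity of $\ACTomega$ over $\MALC$.
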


\begin{proof}
Immediately from Lemma~\ref{Lm:psi-n}, by $({}^+\seqarr)$ and 
$({}^+\seqarr)_\inv$.
\end{proof}

This lemma yields undecidability of $\ACTomega$, since the (non-)halting problem is undecidable.
We go further and study derivability of the same sequent in $\ACT$.
Our new key lemma is as follows:
\begin{lemma}\label{Lm:ACT}
If $\Mf$ trivially cycles on $x$, then $\psi^+ \vdash S$ is derivable in $\ACT$.
\end{lemma}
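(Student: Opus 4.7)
The plan is to apply the long rule of Lemma~\ref{Lm:long} with an integer $n$ chosen from the cycling behaviour of $\Mf$. Since $\Mf$ trivially cycles on $x$, the unique execution reaches the cycling state $q_c$ after finitely many steps; let $M$ be the position of the first $q_c$ in the protocol of $\Mf$ on $x$, and take $n := M$ (or any larger value). The first $n$ premises, $\psi^k \seqarr S$ for $k = 1, \ldots, n$, are derivable in $\MALC$ by Lemma~\ref{Lm:psi-n}: since $\Mf$ does not halt on $x$, the grammar $\Gc_{\Mf,x}$ generates every non-empty word, hence every word of length $k$.

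The interesting premise is $\psi^n, \psi^+ \seqarr S$. The combinatorial fact enabled by the choice of $n$ is that every length-$n$ word $a_1 \ldots a_n$ with $a_1 = \#$ lies in class~1: if it is consistent with being a protocol prefix, it necessarily contains $q_c$ (at position $M \le n$), hence falls in class~1.1; otherwise it fails correctness somewhere and lies in class~1.2--1.4. Decompose $\psi^n$ by $n$ applications of $(\bigvee\seqarr)$, reducing to cases $\varphi_{a_1}, \ldots, \varphi_{a_n}, \psi^+ \seqarr S$ indexed by tuples $(a_1, \ldots, a_n) \in \Sigma^n$.

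As a reusable sub-lemma, derive $\psi^+ \seqarr U$ in $\ACT$ via the fixpoint rule $({}^+\seqarr)_\fp$: the two premises $\psi \seqarr U$ and $\psi, U \seqarr U$ follow from $(\bigvee\seqarr)$ and $(\bigwedge\seqarr)$ combined with the grammar rules $U \cfarr a$ (giving $U \in \Xi_a$) and $U \cfarr aU$ (giving $U \SL U \in \Xi_a$). An iterated use of $(\SL\seqarr)$, followed by cut with $\psi^+ \seqarr U$, shows that $\varphi_{b_1}, \ldots, \varphi_{b_m}, \psi^+ \seqarr U$ is derivable for any $m \geq 0$. Now the case $a_1 \ne \#$ in the decomposition is handled using $S \cfarr a_1 U$ (i.e.\ $S \SL U \in \Xi_{a_1}$): extract this conjunct from $\varphi_{a_1}$ by $(\bigwedge\seqarr)$, then apply $(\SL\seqarr)$ with right premise the axiom $S \seqarr S$ and left premise $\varphi_{a_2}, \ldots, \varphi_{a_n}, \psi^+ \seqarr U$, covered by the sub-lemma. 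The case $a_1 = \#$ uses $S \cfarr \# Y U$: extract $S \SL (Y \cdot U)$ from $\varphi_\#$, apply $(\SL\seqarr)$ and then $(\seqarr\cdot)$, reducing to $\varphi_{a_2}, \ldots, \varphi_{a_n} \seqarr Y$ and $\psi^+ \seqarr U$. The latter is the sub-lemma; the former is derivable in $\MALC$ by Lemma~\ref{Lm:phi}, because $a_2 \ldots a_n$ is generated from $Y$, since by the choice of $n$, $\# a_2 \ldots a_n$ is a class-1 word, and $Y$ is the start of the class-1 grammar with the leading $\#$ removed.

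The conceptual crux is arranging for the last premise of the long rule to be derivable without the $\omega$-rule. Trivial cycling supplies exactly what is needed: once the protocol commits to $q_c$ at position $M$, every valid prefix of length $\geq M$ contains $q_c$, so the ``head'' $\psi^n$ is long enough to force class~1 (or class~3, if it does not start with $\#$), while the ``tail'' $\psi^+$ is harmlessly absorbed into the auxiliary non-terminal $U$, which by construction generates every non-empty word. The main technical obstacle is setting up the sub-lemma for $\psi^+ \seqarr U$ and carefully dispatching the $(a_1, \ldots, a_n)$ cases; once this is done, the long rule packages everything into a finite proof in $\ACT$.
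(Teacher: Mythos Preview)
Your proof is correct and follows essentially the same approach as the paper: choose $n$ so that every length-$n$ valid protocol prefix already contains $q_c$, apply the long rule, handle the first $n$ premises via Lemma~\ref{Lm:psi-n}, and for the last premise decompose $\psi^n$ by $(\bigvee\seqarr)$ and absorb $\psi^+$ into $U$ using the fixpoint-derived $\psi^+ \seqarr U$. The only cosmetic difference is that the paper first cuts $\psi^+$ down to $U$ and then does the class~1/class~3 case split on $\varphi_{a_1},\ldots,\varphi_{a_n},U \seqarr S$, whereas you perform the case split first and invoke the cut with $\psi^+ \seqarr U$ inside each case; the underlying combinatorics and the use of $S \SL U$ and $S \SL (Y \cdot U)$ are identical.
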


\begin{proof}
 Let us first show derivability of $\psi^+ \vdash U$  in $\ACT$, that is, establish that $\ACT$ is capable
 of proving the fact that $U$ indeed generates all non-empty words. 
 Due to production rules $U \cfarr a$ and $U \cfarr aU$, we have $U \in \Xi_a$ and $U \SL U \in \Xi_a$ for any $a \in \Sigma$.
 Thus, by $(\bigwedge\vdash)$ we have $\varphi_a \vdash U$ and $\varphi_a \vdash U \SL U$,
 and by $(\bigvee\vdash)$ we get $\psi \vdash U$ and $\psi \vdash U \SL U$. Now $\psi^+ \vdash U$ is derived as follows:
  $$
 \infer[({}^+\seqarr)_\fp]
 {\psi^+ \seqarr U}
 {\psi \seqarr U & 
 \infer[(\seqarr\SL)_\inv]{\psi, U \seqarr U}
 {\psi \seqarr U \SL U}}
 $$

 Suppose that $\Mf$ trivially cycles on $x$ and consider the execution of $\Mf$ on input $x$ up to the moment when $\Mf$ enters the cycling state $q_c$.
Such an execution is unique, because $\Mf$ is deterministic.
Let the (incomplete) protocol of execution of $\Mf$ on $x$ up to the configuration with $q_c$ be of length $n$ 
($n$ is the number of letters in the protocol, not the number of configurations!).
Notice that $n\ge 3$, since this protocol includes at least $q_c$ and the $\#$'s surrounding the initial configuration.

Now we derive $\psi^+ \seqarr S$ using the ``long rule'' (Lemma~\ref{Lm:long}):
$$
\infer{\psi^+ \seqarr S}{\psi \seqarr S & \psi^2 \seqarr S & \ldots & \psi^n \seqarr S & \psi^{n}, \psi^+ \seqarr S}
$$

All its premises, except the last one, are of the form $\psi^m \seqarr S$ and are derivable by Lemma~\ref{Lm:psi-n}.
In order to derive the last premise, we first apply $(\bigvee\seqarr)$ all instances of $\psi$ in $\psi^n$. Now we
have to derive $\varphi_{a_1}, \ldots, \varphi_{a_n}, \psi^+ \seqarr S$ for any word $a_1 \dots a_n$ over $\Sigma$.

Apply cut as follows:
$$
\infer[(\mathrm{cut})]{\varphi_{a_1}, \ldots, \varphi_{a_n}, \psi^+ \seqarr S}{\psi^+ \seqarr U & \varphi_{a_1}, \ldots, \varphi_{a_n}, U \seqarr S}
$$
The left premise, $\psi^+ \vdash U$, is derivable. For the right one, notice that $a_1 \ldots a_n$ belongs to class~1 or class~3
(see Subsection~\ref{Ss:enc1}). Indeed, even if this word is a correct prefix of the (infinite) protocol of $\Mf$ on $x$, then, having length $n$,
it should include $q_c$. This makes it belong to class~1. In other cases, depending on whether $a_1 = \#$ or not, this word belongs either again to class~1, or to class~3.

If $a_1 \ldots a_n$ belongs to {\bf class~1,} then $a_1 = \#$ and $a_2 \ldots a_n$ is derivable in $\Gc_{\Mf,x}$ from $Y$. The conjunction $\varphi_{a_1}$ includes $S \SL (Y \cdot U)$, thanks to the $S \cfarr \# Y U$ production rule. Thus, $\varphi_{a_1}, \varphi_{a_2}, \ldots, \varphi_{a_n}, U \seqarr S$ can be derived as follows:
$$
\infer[(\wedge\seqarr)]{\varphi_{a_1}, \varphi_{a_2}, \ldots, \varphi_{a_n}, U \seqarr S}
{\infer[(\SL\seqarr)]{S \SL (Y \cdot U), \varphi_{a_2}, \ldots, \varphi_{a_n}, U \seqarr S}
{\infer[(\seqarr\cdot)]{\varphi_{a_2}, \ldots, \varphi_{a_n}, U \seqarr Y \cdot U}{\varphi_{a_2}, \ldots, \varphi_{a_n} \seqarr Y & U \seqarr U} & S \seqarr S}}
$$

The sequent $\varphi_{a_2}, \ldots, \varphi_{a_n} \seqarr Y$ is derivable by Lemma~\ref{Lm:phi}, since $a_2 \ldots a_n$ is derivable in $\Gc_{\Mf,x}$ from $Y$.

If $a_1 \ldots a_n$ belongs to {\bf class~3,} then $a_1 \ne \#$ and we use the $S \cfarr a_1 U$ production rule. Now $S \SL U \in \Xi_{a_1}$ and 
$U \SL U \in \Xi_{a_i}$, $i = 2, \ldots, n$ (thanks to
$U \cfarr a_i U$). Now the desired sequent $\varphi_{a_1}, \varphi_{a_2}, \ldots, \varphi_{a_n}, U \seqarr S$ can be derived, by $(\bigwedge\seqarr)$, from
$S \SL U, U \SL U, \ldots, U \SL U, U \seqarr S$. The latter is derivable in the Lambek calculus, and thus in $\ACT$.
\end{proof}

Notice that the rules with non-terminal symbol $Z$ are used only for deriving $\psi^m \seqarr S$, $m \leq n$. For longer words beginning with $\#$, we use only $Y$.

\subsection{Undecidability via Inseparability}\label{Ss:insep}

Now we are ready to prove Theorem~\ref{Th:undec}. Notice that Lemma~\ref{Lm:ACT} gives only a one-way encoding: from cycling
behaviour of $\Mf$ on input $x$ to derivability of $\psi^+ \vdash S$ in $\ACT$. If the inverse implication were also true, we
would immediately have undecidability of $\ACT$, since cycling behaviour (that is, reachability of $q_c$) is undecidable.
However, we do not have this inverse implication, and therefore use a trickier argument.

Let us introduce some notations:
\begin{align*}
 & \Cc = \{ \langle \Mf, x \rangle \mid \text{$\Mf$ trivially cycles on $x$} \}, \\
 & \Hc = \{ \langle \Mf, x \rangle \mid \text{$\Mf$ halts on $x$} \}, \\
 & \nHc = \{ \langle \Mf, x \rangle \mid \text{$\Mf$ does not halt on $x$} \}.
\end{align*}
Evidently, $\Cc \subset \nHc$, and
we shall use a folklore fact that $\Cc$ and $\Hc$ are {\em recursively inseparable:}
\begin{prop}\label{P:insep}
There exists no decidable class $\Kc$ of pairs $\langle \Mf, x \rangle$ such that
$\Cc \subseteq \Kc \subseteq \nHc$.
\end{prop}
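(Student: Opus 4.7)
The plan is to establish recursive inseparability by a diagonalization against any hypothetical decider for $\Kc$. Assuming for contradiction that $\Kc$ is a decidable class with $\Cc \subseteq \Kc \subseteq \nHc$, I will construct a Turing machine whose behaviour on a self-referential input directly contradicts the verdict of $\Kc$ on the corresponding pair.

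First I fix a standard Gödel numbering $e \mapsto \Mf_e$ of the deterministic single-tape machines of the form considered in Subsection~\ref{Ss:enc1} (in particular, all of them are equipped with the cycling state $q_c$). Using the decision procedure for $\Kc$ as a subroutine, I define a total computable function $g$ such that $\Mf_{g(e)}$, on input $y$, first runs the decider for $\Kc$ on the pair $\langle \Mf_e, y \rangle$; if the answer is ``yes'', then $\Mf_{g(e)}$ halts immediately, while if the answer is ``no'', then $\Mf_{g(e)}$ enters the state $q_c$ and therefore trivially cycles on $y$. Uniformity of this construction in $e$ is routine: $\Kc$ is decidable and Turing-machine programs can be assembled effectively.

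Next I apply Kleene's recursion theorem to $g$ to obtain an index $e^*$ such that $\Mf_{e^*}$ behaves identically to $\Mf_{g(e^*)}$ on every input. Fix any input $x^*$ (say, the empty word) and consider the pair $\langle \Mf_{e^*}, x^* \rangle$. If this pair belongs to $\Kc$, then by construction $\Mf_{e^*}$ halts on $x^*$, placing the pair in $\Hc$ and contradicting $\Kc \subseteq \nHc$. If instead the pair does not belong to $\Kc$, then $\Mf_{e^*}$ enters $q_c$ on $x^*$, placing the pair in $\Cc$ and contradicting $\Cc \subseteq \Kc$. Either alternative yields a contradiction, so no such $\Kc$ exists.

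The only nontrivial point is arranging the self-reference within the particular restricted class of Turing machines used in this paper (single-tape, right-infinite, with a built-in cycling state $q_c$); but none of these restrictions interferes with standard indexings nor with the recursion theorem, so the argument carries over verbatim from the classical proof that $\Cc$ and $\Hc$ are recursively inseparable.
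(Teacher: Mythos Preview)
Your argument is correct in outline and follows the standard diagonalization route. The paper itself omits the proof of this proposition, remarking that it is folklore and follows from the stronger Proposition~\ref{P:CHeinsep} (effective inseparability of $\Cc$ and $\Hc$). That proposition is proved by a direct self-application: one builds a single machine $\Mf_\eta$ which, on an input $y$ coding a machine, waits for $\langle \Mf_y, y \rangle$ to show up in one of two enumerations and then halts or enters $q_c$ accordingly, and finally runs $\Mf_\eta$ on its own code $y_\eta$. Your invocation of Kleene's recursion theorem packages exactly the same self-reference as a black box, so the two approaches are morally identical.

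One point does need more care than you give it. Kleene's second recursion theorem, in its usual statement, yields only \emph{extensional} equivalence: $\varphi_{e^*} = \varphi_{g(e^*)}$. In your Case~2 you need that $\Mf_{e^*}$ actually \emph{enters the state $q_c$} on $x^*$, so that $\langle \Mf_{e^*}, x^* \rangle \in \Cc$; extensional equivalence alone only tells you that $\Mf_{e^*}$ fails to halt, placing the pair in $\nHc$, which is not enough for a contradiction with $\langle \Mf_{e^*}, x^* \rangle \notin \Kc$. The phrase ``behaves identically'' hides precisely this gap. The fix is routine---either unfold the fixed-point construction so that $\Mf_{e^*}$ literally simulates $\Mf_{g(e^*)}$ and is rigged to enter $q_c$ whenever the simulated run does, or drop the recursion theorem and use the bare self-application trick as in the paper's proof of Proposition~\ref{P:CHeinsep}---but it should be said explicitly rather than absorbed into the closing remark about ``restrictions on the class of machines.''
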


We omit the proof of Proposition~\ref{P:insep}, since it is rather standard and, moreover, follows from a stronger
Proposition~\ref{P:CHeinsep} below. 

Next, for an arbitrary logic $\Lc$ in our language let
$$
\Kc(\Lc) = \{ \langle \Mf, x \rangle \mid \text{$\psi^+_{\Mf,x} \vdash S$ is derivable in $\Lc$} \}.
$$
If $\Lc_1 \subseteq \Lc_2$, then of course $\Kc(\Lc_1) \subseteq \Kc(\Lc_2)$.
Now Buszkowski's Lemma~\ref{Lm:ACTomega} and our Lemma~\ref{Lm:ACT} can be expressed in the following way: if $\ACT \subseteq \Lc \subseteq \ACTomega$, then
$$
\Cc \subseteq \Kc(\ACT) \subseteq \Kc(\Lc) \subseteq \Kc(\ACTomega) = \nHc.
$$
By Proposition~\ref{P:insep} $\Kc(\Lc)$ is undecidable, thus so is $\Lc$ itself. This finishes the proof of Theorem~\ref{Th:undec}.

\section{$\Sigma_1^0$-completeness of $\ACT$}\label{S:sigma}

In this section we show that our construction actually yields more than just undecidability. Namely, we prove $\Sigma_1^0$-completeness for any recursively enumerable $\Lc$ such that $\ACT \subseteq \Lc \subseteq \ACTomega$---in particular, for $\ACT$ and $\ACTbicycle$. The infinitary system $\ACTomega$ is, dually, $\Pi_1^0$-complete: the lower bound was proved by Buszkowski~\cite{Buszkowski2007}, by $\Kc(\ACTomega) = \nHc$; for the upper bound, there exist two proofs: by Palka~\cite{Palka2007} via her *-eliminating technique, and by Das and Pous~\cite{DasPous2018Action} via non-well-founded proofs. 

We follow a general road to obtain $\Sigma_1^0$-completeness results from inseparability, noticed by Speranski~\cite{Speranski2016}. The idea is to use {\em effective inseparability} instead of the usual one. The methods used come from the classics of recursive function theory.
Here we give only the
definitions and results necessary for our purposes; for a broader scope we refer to Rogers' book~\cite{Rogers}.

The theory of effective inseparability is usually developed for sets of natural numbers. Thus, we suppose that pairs $\langle \Mf, x \rangle$ (a Turing machine and its input) are encoded by natural numbers in an injective and computable way.

First, we recall the definition of a recursively enumerable (r.e.) set as the domain of a partial recursive function.
By $W_n$ we denote the domain of the partial recursive function whose program is coded by natural number $n$ ---
informally speaking, ``the $n$-th r.e. set.''

\begin{df}
 Two sets $A, B \subseteq \NN$ are called {\em effectively inseparable,} if $A \cap B = \varnothing$ and there exists
 a partial recursive function $f$ of two arguments such that if $W_u \supseteq A$, $W_v \supseteq B$ and $W_u \cap W_v = \varnothing$, then
 $f(u,v)$ is defined and $f(u,v) \notin W_u \cup W_v$.
\end{df}

The notion of effective inseparability is closely related to the notion of {\em creativity.}

\begin{df}
 A set $A \subseteq \NN$ is called {\em creative,} if $A$ is r.e.\ and there exists a partial recursive function $h$
 such that if $W_u \cap A = \varnothing$, then $h(u)$ is defined and $h(u) \notin W_u \cup A$.
\end{df}

\begin{prop}
 If $A$ and $B$ are effectively inseparable and are both r.e., then both $A$ and $B$ are creative.
\end{prop}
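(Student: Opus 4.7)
The plan is to show that $A$ is creative; the case of $B$ is symmetric. Since $A$ and $B$ are both r.e., fix indices $a$ and $b$ with $W_a = A$ and $W_b = B$, and let $f$ be the partial recursive function witnessing effective inseparability of $A$ and $B$. By the $s$-$m$-$n$ theorem, I can produce a total recursive function $V$ such that
$$W_{V(u)} = B \cup W_u \quad \text{for every } u,$$
namely: on input $n$, dovetail the enumerations of $W_b$ and $W_u$ and accept $n$ once it appears in either. My candidate witness of creativity is $h(u) = f(a, V(u))$, which is partial recursive as a composition.

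To verify that $h$ does the job, suppose $W_u \cap A = \varnothing$. Then $W_a = A \supseteq A$ trivially, $W_{V(u)} \supseteq B$ by construction, and
$$W_a \cap W_{V(u)} = A \cap (B \cup W_u) = (A \cap B) \cup (A \cap W_u) = \varnothing,$$
using that $A \cap B = \varnothing$ (built into effective inseparability) and the hypothesis $W_u \cap A = \varnothing$. Hence the hypotheses of effective inseparability are satisfied by the pair $(a, V(u))$, so $f(a, V(u))$ is defined and lies outside $W_a \cup W_{V(u)} = A \cup B \cup W_u$. In particular, $h(u) \notin W_u \cup A$, which is exactly what creativity of $A$ requires.

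The only thing one must get right is the asymmetric placement of $W_u$: it has to be merged with $B$ rather than with $A$, so that the crucial disjointness $W_a \cap W_{V(u)} = \varnothing$ is forced by the two available disjointness facts ($A \cap B = \varnothing$ from effective inseparability, $W_u \cap A = \varnothing$ from the creativity hypothesis). Once that design choice is in place, nothing beyond $s$-$m$-$n$ and the definitions is needed, and the symmetric argument for $B$ (with roles of $a$, $b$ swapped and the dovetailing done against $A$) yields creativity of $B$.
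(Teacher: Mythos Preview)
Your proof is correct and follows essentially the same route as the paper's: fix indices for $A$ and $B$, effectively merge $W_u$ with $B$ to obtain an index $V(u)$, and set $h(u)=f(a,V(u))$; the paper carries out exactly this argument with the notation $u_0,v_0,v'$ in place of your $a,b,V(u)$ and the phrase ``$v'$ is computable from $v$ and $v_0$'' in place of your explicit appeal to the $s$-$m$-$n$ theorem.
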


\begin{proof}
 Since $A$ and $B$ are r.e., we have $A = W_{u_0}$ and $B = W_{v_0}$ for some $u_0$ and $v_0$. 
 Define $h$ as follows. For any $v$ let $W_{v'} = W_{v} \cup B = W_{v} \cup W_{v_0}$. Notice that $v'$ is computable from $v$ and $v_0$.
 Next, let $h(v) = f(u_0, v')$. Let $W_v$ be an r.e.\ set disjoint with $A$. Then, since $B$ is also disjoint with $A$, so is $W_{v'}$.
 By definition of $f$, since $W_{v'} \supseteq B$, $W_{u_0} = A$, and $W_{v'} \cap W_{u_0} = \varnothing$, we see that $h(v)$ is defined and
 $h(v) \notin W_{v'} \cup W_{u_0}$. Thus, $h(v) \notin W_{v} \cup A$ (because $W_{v} \subseteq W_{v'}$). Therefore, $A$ is creative. Reasoning for $B$
 is symmetric.
\end{proof}

 For creative sets, the following Myhill's theorem establishes their $\Sigma_1^0$-comp\-le\-te\-ness.
\begin{theorem}[J.~Myhill 1955]\label{Th:Myhill}
 If $A$ is creative, than any r.e.\ set $B$ is m-reducible to $A$. In other words, any creative set is $\Sigma_1^0$-complete.~{\rm\cite[Theorem~10]{Myhill1955}}
\end{theorem}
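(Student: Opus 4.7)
The plan is to exhibit, for an arbitrary r.e.\ set $B$, a total computable function $f$ such that $x \in B$ if and only if $f(x) \in A$; this is exactly the m-reduction needed to conclude $\Sigma_1^0$-completeness. The construction combines the creativity function $h$ of $A$ with the s-m-n theorem and Kleene's recursion theorem.

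First, I would use the s-m-n theorem to build a total computable function $k(x,n)$ such that $W_{k(x,n)} = \varnothing$ when $x \notin B$ and $W_{k(x,n)} = \{h(n)\}$ when $x \in B$. The underlying program waits for $x$ to appear in the enumeration of $B$ and, upon success, attempts to compute $h(n)$ and enumerate that single value. Next, applying the recursion theorem parametrically in $x$, I would obtain a total computable $n(x)$ satisfying $W_{n(x)} = W_{k(x, n(x))}$. Finally, set $f(x) := h(n(x))$.

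Verification then splits into two cases. If $x \notin B$, then $W_{n(x)} = \varnothing$ is disjoint from $A$, so by creativity $h(n(x))$ converges and avoids $A$, giving $f(x) \notin A$. If $x \in B$, then $W_{n(x)}$ equals $\{h(n(x))\}$ provided $h(n(x))$ converges, and is empty otherwise; in the latter situation $W_{n(x)} \cap A = \varnothing$ would force $h(n(x))$ to converge by creativity, a contradiction. So $h(n(x))$ converges, and if its value lay outside $A$ then $W_{n(x)} = \{h(n(x))\}$ would again be disjoint from $A$, yet creativity would demand $h(n(x)) \notin W_{n(x)}$, contradicting $h(n(x)) \in W_{n(x)}$. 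Hence $f(x) \in A$.

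The main delicate point is the self-referential specification of $W_{n(x)}$, which depends on $h(n(x))$ before the index $n(x)$ has been fixed; this is precisely what the recursion theorem resolves, and once it is applied the verification of totality and correctness of $f$ is essentially bookkeeping. Since $B$ was arbitrary, this shows that every r.e.\ set m-reduces to $A$, establishing $\Sigma_1^0$-completeness.
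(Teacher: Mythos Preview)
Your argument is correct and is essentially the standard recursion-theorem proof of Myhill's theorem; the delicate point you flag (that $W_{n(x)}$ refers to $h(n(x))$) is handled properly, and the case analysis showing totality of $f$ and the biconditional $x \in B \iff f(x) \in A$ is sound.

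Note, however, that the paper does not give its own proof of this statement: it merely cites the result as \cite[Theorem~10]{Myhill1955} and moves on, treating it as a classical fact from recursion theory. So there is nothing to compare against---your proof simply supplies what the paper omits. If anything, one could remark that Myhill's original 1955 argument predates the standard formulation of the recursion theorem and proceeds somewhat differently, but the approach you give (s-m-n plus parametric recursion theorem) is the modern textbook route and is entirely appropriate here.
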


\begin{cor}\label{Cor:insepSigma}
 If $A$ and $B$ are effectively inseparable and are both r.e., then both $A$ and $B$ are $\Sigma_1^0$-complete.~{\rm\cite[Exercise 11-14]{Rogers}}
\end{cor}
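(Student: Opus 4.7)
The plan is simply to chain together the two results stated immediately before the corollary. By the proposition just above Myhill's theorem, the assumption that $A$ and $B$ are effectively inseparable and both recursively enumerable already forces each of them to be creative. Then Myhill's theorem, applied separately to $A$ and to $B$, tells us that each of these creative sets is m-reducible to by every r.e.\ set, i.e., each is $\Sigma_1^0$-hard.

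To upgrade hardness to completeness, the plan is to observe that $A, B \in \Sigma_1^0$ is part of the hypothesis (both are r.e.), so combining membership in $\Sigma_1^0$ with $\Sigma_1^0$-hardness immediately yields $\Sigma_1^0$-completeness for both $A$ and $B$.

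I do not expect any real obstacle in this step: the corollary is a purely formal composition of the preceding proposition with Myhill's theorem, requiring no new construction. All the substance---building the productive function of a creative set from the effectively inseparating function $f$, and constructing the Myhill m-reduction out of a productive function---is already encapsulated in the two cited results, so the argument is essentially one line per set.
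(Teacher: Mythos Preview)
Your proposal is correct and matches the paper's approach exactly: the paper gives no separate proof of this corollary, treating it as an immediate consequence of the preceding proposition (effectively inseparable r.e.\ pairs are creative) together with Myhill's theorem (creative sets are $\Sigma_1^0$-complete), which is precisely the chaining you describe.
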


Thus, while recursive inseparability allows proving undecidability, effective inseparability is a tool for proving $\Sigma_1^0$-completeness.
In order to apply this technique to proving $\Sigma_1^0$-completeness of $\ACT$, we strengthen Proposition~\ref{P:insep} and establishes
effective inseparability of $\Cc$ and $\Hc$. 
\begin{prop}\label{P:CHeinsep}
 The sets $\Cc$ and $\Hc$ are effectively inseparable.
\end{prop}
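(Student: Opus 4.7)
The plan is to prove effective inseparability by a direct diagonal construction based on Kleene's recursion theorem. Given $u, v \in \NN$, I would use the $s$-$m$-$n$ theorem together with the fixed-point theorem to produce, effectively in $u$ and $v$, a pair $\langle \Mf_{u,v}, x_{u,v} \rangle$ whose encoding (as a natural number) is available to $\Mf_{u,v}$ when running on $x_{u,v}$. This diagonal machine would do the following: in parallel, enumerate $W_u$ and $W_v$, searching for its own encoding. Upon finding it in $W_u$ first, it halts immediately; upon finding it in $W_v$ first, it transitions into the designated cycling state $q_c$ and then stays there forever (as prescribed by the rules for $q_c$); if neither event ever occurs, the search proceeds indefinitely without ever reaching $q_c$. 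I would then set $f(u,v) = \langle \Mf_{u,v}, x_{u,v} \rangle$, which is a total recursive function of $u$ and $v$.

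Verification is then a short case analysis. Under the hypotheses $W_u \supseteq \Cc$, $W_v \supseteq \Hc$, and $W_u \cap W_v = \varnothing$: if $f(u,v)$ were in $W_u$, then $\Mf_{u,v}$ would detect this during its parallel enumeration (disjointness rules out a prior appearance in $W_v$), halt on $x_{u,v}$, and hence lie in $\Hc \subseteq W_v$, contradicting $W_u \cap W_v = \varnothing$. Symmetrically, $f(u,v) \in W_v$ would drive $\Mf_{u,v}$ into $q_c$ on $x_{u,v}$, placing $f(u,v)$ in $\Cc \subseteq W_u$, again a contradiction. Therefore $f(u,v) \notin W_u \cup W_v$, which is exactly what the definition of effective inseparability requires.

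The only genuinely nontrivial ingredient is the joint application of the $s$-$m$-$n$ and recursion theorems, which lets $\Mf_{u,v}$ know its own G\"odel number while depending uniformly computably on $u$ and $v$; this is standard and can be found in Rogers~\cite{Rogers}. What makes the argument specifically yield effective inseparability of $\Cc$ and $\Hc$ (rather than the weaker recursive inseparability of Proposition~\ref{P:insep}) is the presence of $q_c$ as an explicit, always-available state that the machine can choose to enter at any computation step: this is exactly what allows the diagonal machine to steer itself into $\Cc$ on demand. I do not expect any substantial obstacle; the entire construction is a routine adaptation of the classical effective-inseparability proof for $\{e : \varphi_e(e) = 0\}$ versus $\{e : \varphi_e(e) = 1\}$, with ``output $0$'' replaced by ``halt'' and ``output $1$'' replaced by ``enter $q_c$.''
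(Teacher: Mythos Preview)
Your proposal is correct and follows essentially the same diagonal strategy as the paper: build a machine that searches for its own code in $W_u$ and $W_v$ in parallel, halting if it appears in $W_u$ and entering $q_c$ if it appears in $W_v$; the verification by case analysis is identical. The only cosmetic difference is that the paper avoids invoking the recursion theorem explicitly: it builds a single machine $\Mf_\eta$ (depending computably on $u,v$) which, on input $y$, treats $y$ as the code of a machine $\Mf$ and searches for $\langle \Mf, y\rangle$ in $W_u$ and $W_v$; feeding $\Mf_\eta$ its own code $y_\eta$ then yields the self-reference directly, so $f(u,v)=\langle \Mf_\eta, y_\eta\rangle$ without appeal to the fixed-point theorem.
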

(This is also probably a folklore fact, cf.~\cite[Exercise 7-55d]{Rogers}.)

\begin{proof}
 Recall that here we consider sets of pairs $\langle \Mf, x \rangle$ of Turing machines and their inputs, and silently suppose
 that these pairs are encoded as natural numbers. Let $W_u$ and $W_v$ be two such sets, which are both r.e., disjoint, and 
 $W_u \supseteq \Cc$, $W_v \supseteq \Hc$.
 
 The proof is a diagonalization procedure.
 Construct a Turing machine $\Mf_\eta$. Given an input $y$, $\Mf_\eta$ operates as follows:
 \begin{itemize}
  \item if $y$ is not a code of a Turing machine, then halt;
  \item if $y$ is a code of a Turing machine $\Mf$, start enumerating $W_u$ and $W_v$ in parallel, 
  waiting for $\langle \Mf, y \rangle$ to appear (since $W_u \cap W_v = \varnothing$, it could appear
  only in one enumeration); next,
  \begin{itemize}
  \item if $\langle \Mf, y \rangle \in W_u$, then halt;
  \item if $\langle \Mf, y \rangle \in W_v$, then enter the cycling state $q_c$;
  \item if neither, the machine will run forever.
  \end{itemize}
\end{itemize}

Next, let $y_\eta$ be the code of $\Mf_\eta$, and let $f(u,v) = \langle \Mf_\eta, y_\eta \rangle$. Notice that $f$ is
a computable function, since $\Mf_\eta$ ``uniformly'' depends on $u$ and $v$. Also notice that $f$ is a total function: $\Mf_\eta$
is always {\em constructed} in a finite number of steps, no matter whether it {\em runs} finitely or infinitely.

We show that $f$ is the necessary function for effective inseparability of $\Cc$ and $\Hc$. Indeed,
if $f(u,v) \in W_u$, then $\Mf_\eta$ halts on $y_\eta$, by definition of $\Mf_\eta$, that is, $\langle \Mf_\eta, y_\eta \rangle \in \Hc$.
Contradiction: $\Hc \subseteq W_v$ and $W_v \cap W_u = \varnothing$. Dually, 
if $f(u,v) \in W_v$, then $\Mf_\eta$ enters a cycling state when running on $y_\eta$, whence, $\langle \Mf_\eta, y_\eta \rangle \in \Cc$,
which is, being a subset of $W_u$, disjoint with $W_v$. Thus, $f(u,v) \notin W_u \cup W_v$, which is exactly what we need.
\end{proof}

Finally, we are ready to state and prove the main result of this section:
\begin{theorem}\label{Th:Sigma}
 If $\ACT \subseteq \Lc \subseteq \ACTomega$ and $\Lc$ is r.e., then $\Lc$ is $\Sigma_1^0$-complete. In particular,
 $\ACT$ and $\ACTbicycle$ are $\Sigma_1^0$-complete.
\end{theorem}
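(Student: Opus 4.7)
The plan is to chain together the encoding established in the previous section with the effective inseparability of $\Cc$ and $\Hc$ (Proposition~\ref{P:CHeinsep}), using the general recipe that effective inseparability upgrades undecidability to $\Sigma_1^0$-completeness (Corollary~\ref{Cor:insepSigma}). Concretely, I would work with the same set $\Kc(\Lc) = \{\langle \Mf, x\rangle \mid \psi^+_{\Mf,x}\seqarr S \text{ derivable in } \Lc\}$ as in Subsection~\ref{Ss:insep}. The first step is to observe that $\Kc(\Lc)$ is recursively enumerable: the map $\langle \Mf, x\rangle \mapsto (\psi^+_{\Mf,x} \seqarr S)$ is computable, and $\Lc$ is r.e.\ by hypothesis, so $\Kc(\Lc)$ is the preimage of an r.e.\ set under a total computable function.

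Next, I would recall from the proof of Theorem~\ref{Th:undec} that $\Cc \subseteq \Kc(\ACT) \subseteq \Kc(\Lc) \subseteq \Kc(\ACTomega) = \nHc$, so in particular $\Cc \subseteq \Kc(\Lc)$ and $\Kc(\Lc)\cap \Hc = \varnothing$. The key observation is now that effective inseparability is inherited upwards: if $W_u \supseteq \Kc(\Lc)$ and $W_v \supseteq \Hc$ are disjoint r.e.\ sets, then a fortiori $W_u \supseteq \Cc$, and so the witnessing partial recursive function $f$ from Proposition~\ref{P:CHeinsep} applied to the same pair $(u,v)$ produces an element outside $W_u \cup W_v$. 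Hence $\Kc(\Lc)$ and $\Hc$ are themselves effectively inseparable, and both are r.e., so Corollary~\ref{Cor:insepSigma} (via Myhill's theorem) yields that $\Kc(\Lc)$ is $\Sigma_1^0$-complete.

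To transfer $\Sigma_1^0$-completeness from $\Kc(\Lc)$ to $\Lc$ itself, I would use the computable encoding $\langle \Mf, x\rangle \mapsto (\psi^+_{\Mf,x}\seqarr S)$ one more time, now as an explicit m-reduction from $\Kc(\Lc)$ to $\Lc$. Composing with an arbitrary m-reduction into $\Kc(\Lc)$ (guaranteed by its $\Sigma_1^0$-completeness) shows that every r.e.\ set m-reduces to $\Lc$, so $\Lc$ is $\Sigma_1^0$-hard; together with the r.e.\ hypothesis this gives $\Sigma_1^0$-completeness. For the ``in particular'' clause I would remark that $\ACT$ is r.e.\ by definition (it has a finitary Hilbert/sequent presentation with cut), and $\ACTbicycle$ is r.e.\ because regular, finitely-presented circular proofs can be enumerated and their correctness condition checked algorithmically.

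The only point that requires a moment of care, and which I would expect to be the main subtlety to get right on paper, is the transfer of effective inseparability from $(\Cc,\Hc)$ to $(\Kc(\Lc),\Hc)$: it is tempting to invoke a black-box fact, but it is cleanest to spell out that exactly the same function $f$ works because the hypothesis ``$W_u \supseteq \Kc(\Lc)$'' is strictly stronger than ``$W_u \supseteq \Cc$,'' and symmetrically $\Hc$ sits on the opposite side unchanged. Once this monotonicity of the inseparability witness is in place, the rest is routine assembly of Proposition~\ref{P:CHeinsep}, Corollary~\ref{Cor:insepSigma}, and the computable encoding from Subsection~\ref{Ss:enc2}.
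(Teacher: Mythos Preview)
Your proposal is correct and follows essentially the same route as the paper: pass from $(\Cc,\Hc)$ to $(\Kc(\Lc),\Hc)$ using the same inseparability witness $f$ (since $\Cc\subseteq\Kc(\Lc)$), invoke Corollary~\ref{Cor:insepSigma}, and then m-reduce via the computable map $\langle\Mf,x\rangle\mapsto(\psi_{\Mf,x}^+\seqarr S)$. If anything, you are slightly more explicit than the paper in checking that $\Kc(\Lc)$ is r.e.\ before applying Corollary~\ref{Cor:insepSigma}, which is a point the paper leaves implicit.
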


\begin{proof}
 Recall that $\Kc(\Lc) = \{ \langle \Mf, x \rangle \mid (\psi_{\Mf,x}^+ \vdash S) \in \Lc \}$ (by definition) and
 that $\Cc \subseteq \Kc(\ACT) \subseteq \Kc(\Lc) \subseteq \Kc(\ACTomega) = \bHc$ (Lemma~\ref{Lm:ACT} and Lemma~\ref{Lm:ACTomega}).
 By Proposition~\ref{P:CHeinsep}, $\Cc$ and $\bHc$ are effectively inseparable. Therefore, so are
 $\Kc(\Lc)$ and $\bHc$ (indeed, one can just take the same function $f$: if $W_u$ includes $\Kc(\Lc)$, it also
 includes $\Cc$). By Corollary~\ref{Cor:insepSigma}, $\Kc(\Lc)$ is $\Sigma_1^0$-complete. This implies
 $\Sigma_1^0$-hardness of $\Lc$ itself; the upper $\Sigma_1^0$ bound is given.
\end{proof}

\section{Complexity of Fragments of $\ACT$}

Our aim now is to prove undecidability and $\Sigma_1^0$-completeness results for
fragments of $\ACT$ which lack one of the additive connectives, $\wedge$ or $\vee$. 
We shall denote these fragments by $\ACTvee$ and $\ACTwedge$ respectively. Calculi for these logics are obtained from the one of $\ACT$ by removing rules for $\wedge$ and $\vee$, respectively.
The case of $\ACTvee$ is particularly interesting, since it is the inequational theory of original Pratt's action algebras~\cite{Pratt1991}.

Since we do not know whether $\ACTvee$ and $\ACTwedge$ are conservative fragments of $\ACT$,
complexity results for these systems are, formally speaking, independent ({\em i.e.,}
 neither stronger nor weaker) from the one for $\ACT$ itself. Nevertheless, by the same Lindenbaum -- Tarski completeness argument, $\ACTvee$ and $\ACTwedge$ axiomatize, respectively, action join-semilattices (that is, join-semilattices extended with residuated structure and Kleene star) and action meet-semilattices.
 
 On the other side, having a cut-free infinitary calculus, $\ACTomega$ enjoys conservativity over its fragments $\ACTomega^\vee$ and $\ACTomega^\wedge$, which are logics of *-continuous action join-semilattices and *-continuous action meet-semilattices, respectively. 

 Let us first discuss the intuition behind our construction.
In classical logic, one can get rid of $\vee$ or $\wedge$ (while keeping the other one) using de Morgan laws:
$\alpha \vee \beta \equiv \lnot (\lnot \alpha \wedge \lnot \beta)$ and $\alpha \wedge \beta  \equiv \lnot (\lnot \alpha \vee \lnot \beta)$. Unfortunately, in
$\MALC$ there is no negation. We can mimic it, however, using the following {\em pseudo-negation} construction.
Let $b$ be a fresh variable, which is going to act as the ``false'' constant; $\alpha \BS b$, denoted by $\alpha^b$, will play the r\^{o}le of
$\lnot \alpha$.

Our construction is based on the pseudo-double-negation, $\alpha^{bb} = (\alpha \BS b) \BS b$. Notice that neither $\alpha^{bb} \vdash \alpha$ 
nor $\alpha \vdash \alpha^{bb}$ is derivable in $\MALC$. The former is due to the intuitionistic nature of the system, and the latter is
due to non-commutativity. The following statement, however, allows replacing $\alpha$ with $\alpha^{bb}$:
\begin{theorem}\label{Th:DN}
 If variable $b$ does not occur in $\alpha_1, \ldots, \alpha_n, \beta$, then $\alpha_1, \ldots, \alpha_n \vdash \beta$ is derivable in $\MALC$ if and only if so is 
 $\alpha_1^{bb}, \ldots, \alpha_n^{bb} \vdash \beta^{bb}$.
\end{theorem}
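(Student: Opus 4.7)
I plan to prove the two directions separately, in each case crucially using that $b$ is fresh.

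\textbf{Backward direction.} My plan is to substitute $b$ by $\beta$ uniformly in the given derivation. Since $b$ occurs nowhere in $\alpha_1, \ldots, \alpha_n, \beta$, and every $\MALC$ rule is closed under variable substitution, this transforms the hypothesised derivation of $\alpha_1^{bb}, \ldots, \alpha_n^{bb} \seqarr \beta^{bb}$ into a derivation of $(\alpha_1 \BS \beta) \BS \beta, \ldots, (\alpha_n \BS \beta) \BS \beta \seqarr (\beta \BS \beta) \BS \beta$. Two elementary generic derivabilities now suffice: $\gamma \seqarr (\gamma \BS \delta) \BS \delta$ (unit of the pseudo-double-negation, by two applications of $(\seqarr\BS)$) and $(\delta \BS \delta) \BS \delta \seqarr \delta$ (using $\Lambda \seqarr \delta \BS \delta$ together with $(\BS\seqarr)$). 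Specialising to $\delta := \beta$ and cutting $n$ times on the left and once on the right recovers $\alpha_1, \ldots, \alpha_n \seqarr \beta$.

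\textbf{Forward direction.} The plan is to isolate a ``strength'' lemma $\alpha^{bb}, \beta^{bb} \seqarr (\alpha \cdot \beta)^{bb}$ and then glue: from $\alpha_1, \ldots, \alpha_n \seqarr \beta$, compress via $(\seqarr\cdot)$ to $\alpha_1 \cdot \ldots \cdot \alpha_n \seqarr \beta$; apply monotonicity of $(-)^{bb}$ (proved in $\MALC$ by applying twice the derivable antimonotonicity rule $\varphi \seqarr \psi \Rightarrow \psi \BS b \seqarr \varphi \BS b$) to get $(\alpha_1 \cdot \ldots \cdot \alpha_n)^{bb} \seqarr \beta^{bb}$; iterate strength $n - 1$ times to obtain $\alpha_1^{bb}, \ldots, \alpha_n^{bb} \seqarr (\alpha_1 \cdot \ldots \cdot \alpha_n)^{bb}$; and compose by cut.

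\textbf{The main obstacle} is the strength lemma itself: unlike in the commutative world, straightforward backward chaining on $(\alpha \cdot \beta) \BS b, \alpha^{bb}, \beta^{bb} \seqarr b$ stalls because $(\alpha \cdot \beta) \BS b$ is leftmost and $(\BS\seqarr)$ cannot reach past it. My workaround is a currying detour. First derive $(\alpha \cdot \beta) \BS b \seqarr \beta \BS (\alpha \BS b)$ by two applications of $(\seqarr\BS)$ to the trivial $\alpha, \beta, (\alpha \cdot \beta) \BS b \seqarr b$; antimonotonicity then gives $(\beta \BS (\alpha \BS b)) \BS b \seqarr (\alpha \cdot \beta)^{bb}$. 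Next derive $\alpha^{bb}, \beta^{bb} \seqarr (\beta \BS (\alpha \BS b)) \BS b$, starting from the trivial $\beta, \beta \BS (\alpha \BS b) \seqarr \alpha \BS b$ and successively applying $(\BS\seqarr)$ to introduce $\alpha^{bb}$, $(\seqarr\BS)$ to strip $\beta$, $(\BS\seqarr)$ to introduce $\beta^{bb}$, and $(\seqarr\BS)$ once more; each $(\BS\seqarr)$ application is legal because the required premise is exactly the sequent just produced. A final cut between the two halves closes the strength lemma.

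As neither argument inspects the inner structure of $\alpha_i$ or $\beta$, the proof lifts verbatim to the full $\MALC$ language, including $\wedge$ and $\vee$.
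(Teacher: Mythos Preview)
Your forward direction is correct and in fact more economical than the paper's: the strength lemma $\alpha^{bb},\beta^{bb}\seqarr(\alpha\cdot\beta)^{bb}$ together with monotonicity of $(-)^{bb}$ gives the implication with no appeal to cut elimination or freshness of $b$. The paper instead runs a single induction that moves the formulae one by one across a sequent of shape $\alpha_{k+1},\ldots,\alpha_n,\beta^b,\alpha_1^{bb},\ldots,\alpha_k^{bb}\seqarr b$, treating both directions simultaneously.

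The backward direction, however, has a genuine gap. After substituting $b\mapsto\beta$ you need, for each $i$, the sequent $\alpha_i\seqarr(\alpha_i\BS\beta)\BS\beta$ to cut on the left. But this ``unit'' $\gamma\seqarr(\gamma\BS\delta)\BS\delta$ is \emph{not} derivable in $\MALC$: by $(\seqarr\BS)$ it would reduce to $\gamma\BS\delta,\gamma\seqarr\delta$, which fails because $(\BS\seqarr)$ requires the argument of $\gamma\BS\delta$ to appear on its \emph{left}. The paper states this explicitly just before the theorem: ``$\alpha\vdash\alpha^{bb}$ is [not] derivable \ldots\ due to non-commutativity.'' So the substitution idea, while natural in a commutative setting, does not go through here; there is no repair by juggling cuts, since replacing $(\alpha_i\BS\beta)\BS\beta$ by $\alpha_i$ in an antecedent forces exactly that non-derivable sequent.

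This is why the paper's backward direction is the real work: it relies on a cut-free analysis (its Lemma in the Appendix showing that from $\alpha_1,\ldots,\alpha_\ell,\beta_1\BS b,\ldots,\beta_k\BS b\seqarr b$ one can recover $\alpha_1,\ldots,\alpha_\ell,\beta_1\BS b,\ldots,\beta_{k-1}\BS b\seqarr\beta_k$), and here the freshness of $b$ is essential. You will need some argument of this proof-theoretic kind; the purely algebraic route you attempted is blocked by non-commutativity.
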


Pseudo-negation in $\MALC$ enjoys one of the de Morgan laws, namely, $(\alpha \vee \beta)^b$ is equivalent to $\alpha^b \wedge \beta^b$.
This allows removing $\vee$ by replacing $(\alpha_1 \vee \ldots \vee \alpha_k)$ with $(\alpha_1 \vee \ldots \vee \alpha_k)^{bb}$, which is equivalent
to $(\alpha_1^b \wedge \ldots \wedge \alpha_k^b)^b$ and, dually, getting rid of $\wedge$ by replacing $\alpha_1 \wedge \ldots \wedge \alpha_k$
with $\alpha_1^{bb} \wedge \ldots \wedge \alpha_k^{bb}$, which is equivalent to $(\alpha_1^b \vee \ldots \vee \alpha_k^b)^b$.
Theorem~\ref{Th:DN} guarantees that adding pseudo-double-negations does not alter derivability. However, we have to be cautious, since this theorem works only for pure $\MALC$, not its extensions with Kleene star. For the latter, we still have some work to be done.

This technique of pseudo-double-negation goes back to Buszkowski~\cite{Buszkowski2007}; Theorem~\ref{Th:DN}  in its full generality appears in~\cite{KKS2019WoLLICcomplexity}. In order to make this article self-contained, we present a complete proof of Theorem~\ref{Th:DN} in the Appendix. 

Using pseudo-double-negation, we shall construct, given $\Mf$ and $x$, two formulae, \vphantom{\LARGE A}$\widecheck{\psi}$ and 
$\widehat{\psi}$. They will enjoy the same properties as 
$\psi$, but w.r.t.\ the corresponding fragments of $\ACT$ and 
$\ACTomega$, provided $S$ on the right is replaced with its pseudo-double-negation, $S^{bb}$.

Being formulae of the corresponding fragments, $\widecheck{\psi}$ and $\widehat{\psi}$ will not include
$\wedge$ and $\vee$ respectively.
For arbitrary logics $\Lc^\vee$ and $\Lc^\wedge$ (in appropriate languages), let
\begin{align*}
 & \widecheck{\Kc}(\Lc^\vee) = \{ \langle \Mf, x \rangle \mid
 {\widecheck{\psi}}^+ \vdash S^{bb} \mbox{ is derivable in $\Lc^\vee$}\},\\
 & \widehat{\Kc}(\Lc^\wedge) = \{ \langle \Mf, x \rangle \mid
 {\widehat{\psi}}^+ \vdash S^{bb} \mbox{ is derivable in $\Lc^\wedge$}\}.
\end{align*}

The desired properties of $\widecheck{\Kc}$ and $\widehat{\Kc}$ are as follows:
\begin{align*}
 & \Cc \subseteq \widecheck{\Kc}(\ACTvee) \subseteq
 \widecheck{\Kc}(\ACTomega^\vee) = \nHc,\\
 & \Cc \subseteq \widehat{\Kc}(\ACTwedge) \subseteq
 \widehat{\Kc}(\ACTomega^\wedge) = \nHc.
\end{align*}

Given such formulae $\widecheck{\psi}$ and $\widehat{\psi}$ (and an efficient method of constructing them from $\Mf$ and $x$), we 
proceed exactly as in the proofs of Theorem~\ref{Th:undec} and Theorem~\ref{Th:Sigma} and obtain the following results.

\begin{theorem}\label{Th:ACTvee}
 Any logic $\Lc^\vee$ in the language without $\wedge$ such that $\ACTvee \subseteq \Lc^\vee \subseteq \ACTomega^\vee$ is undecidable. Moreover, if such a logic is r.e., then it is $\Sigma^1_0$-complete.
\end{theorem}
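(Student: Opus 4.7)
The plan is to reduce Theorem~\ref{Th:ACTvee} to the inseparability argument of Theorems~\ref{Th:undec} and~\ref{Th:Sigma}. For each Turing machine $\Mf$ and input $x$, I would construct a $\wedge$-free formula $\widecheck{\psi}$ (over a language including a fresh variable $b$) such that
\[
\Cc \subseteq \widecheck{\Kc}(\ACTvee) \subseteq \widecheck{\Kc}(\ACTomega^\vee) = \nHc.
\]
Granted these three inclusions, Proposition~\ref{P:CHeinsep} and Corollary~\ref{Cor:insepSigma} yield undecidability and, in the r.e.\ case, $\Sigma_1^0$-completeness of every $\Lc^\vee$ in the prescribed range, exactly as in Section~\ref{S:sigma}.

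For the construction, I would employ the pseudo-de Morgan trick anticipated in the text. With $b$ a fresh variable, set
\[
\widecheck{\varphi}_a = \Bigl(\bigvee \{\xi^b : \xi \in \Xi_a\}\Bigr)^b,
\qquad
\widecheck{\psi} = \bigvee \{\widecheck{\varphi}_a : a \in \Sigma\}.
\]
Since each $\xi \in \Xi_a$ is built from $\BS$, $\SL$, $\cdot$, and variables only, neither $\widecheck{\varphi}_a$ nor $\widecheck{\psi}$ contains $\wedge$. The intuition is that $\widecheck{\varphi}_a$ is derivably entailed by $\varphi_a^{bb}$ in $\MALC$, via the intuitionistically valid direction of the de Morgan law for pseudo-negation (namely $\bigvee \{\xi^b\} \vdash (\bigwedge \Xi_a)^b$, from which $\varphi_a^{bb} \vdash \widecheck{\varphi}_a$ follows by residuation).

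The central technical step is an analog of Lemma~\ref{Lm:phi}: the word $a_1 \ldots a_n$ is generated from $A$ in $\Gc_{\Mf,x}$ if and only if $\widecheck{\varphi}_{a_1}, \ldots, \widecheck{\varphi}_{a_n} \seqarr A^{bb}$ is derivable in the $\wedge$-free fragment of $\MALC$. The ``if'' direction mimics the induction in Lemma~\ref{Lm:phi}'s proof: given a production $A \cfarr a_1 B_1 \ldots B_\ell$, peel off the outer $\cdot \BS b$ of $\widecheck{\varphi}_{a_1}$ using $(\BS\seqarr)$ and $(\seqarr\BS)$, select the disjunct $(A \SL (B_1 \cdot \ldots \cdot B_\ell))^b$ with $(\seqarr\vee)_i$, and recurse on the $B_j$'s, absorbing the pseudo-double-negations on the right via Theorem~\ref{Th:DN}. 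From this analog of Lemma~\ref{Lm:phi}, the analogs of Lemmas~\ref{Lm:psi-n} and~\ref{Lm:ACTomega} follow routinely by $(\bigvee\seqarr)$, $({}^+\seqarr)_\omega$, and $({}^+\seqarr)_\inv$. The analog of Lemma~\ref{Lm:ACT} (trivial cycling implies derivability in $\ACTvee$) is then obtained by reusing Lemma~\ref{Lm:long}'s long-rule argument verbatim, with $\widecheck{\psi}$ in place of $\psi$ and $U^{bb}$ in place of $U$, splitting by class~1 versus class~3 and applying cut with $\widecheck{\psi}^+ \seqarr U^{bb}$ exactly as in the original proof.

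The main obstacle will be the soundness direction $\widecheck{\Kc}(\ACTomega^\vee) \subseteq \nHc$: given an $\ACTomega^\vee$-derivation of $\widecheck{\psi}^+ \seqarr S^{bb}$, one must extract an $\ACTomega$-derivation of $\psi^+ \seqarr S$ so that Lemma~\ref{Lm:ACTomega} applies. The cleanest route I foresee is to lift Theorem~\ref{Th:DN} to $\ACTomega$ by induction on well-founded derivations, handling the $\omega$-rule $(\KStar\seqarr)_\omega$ premise-wise since pseudo-double-negation commutes with $n$-fold products; combined with a non-branching trace argument for cut-free occurrences of $(\bigvee E)^b$ on the left (the $\wedge$-free analog of Lemma~\ref{Lm:theta}, in which a disjunct $\xi^b \in E$ is selected uniformly along the trace), this yields the required backward translation from $\widecheck{\psi}^n \seqarr S^{bb}$ to $\psi^n \seqarr S$ for every $n$, whence $\psi^+ \seqarr S$ is derivable in $\ACTomega$ by $({}^+\seqarr)_\omega$. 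Once this lifting is in place, the inseparability argument of Section~\ref{S:sigma} carries over \emph{mutatis mutandis} and delivers Theorem~\ref{Th:ACTvee}.
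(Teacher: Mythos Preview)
Your construction of $\widecheck{\varphi}_a$ and $\widecheck{\psi}$ and the overall inseparability strategy match the paper exactly. Where you diverge is in the technical core, and there the paper's route is considerably simpler than what you sketch.

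The paper's key observation, which you do not exploit, is that in $\MALC$ the formula $\widecheck{\varphi}_a = \bigl(\bigvee\{\xi^b:\xi\in\Xi_a\}\bigr)^b$ is \emph{equivalent} to $\widecheck{\varphi}'_a = \bigwedge\{\xi^{bb}:\xi\in\Xi_a\}$ (both directions of the pseudo-de-Morgan law hold here). This lets the paper reduce the analog of Lemma~\ref{Lm:phi} to the original Lemma~\ref{Lm:phi} in two steps: replace each $\widecheck{\varphi}_{a_i}$ by $\widecheck{\varphi}'_{a_i}$, apply Lemma~\ref{Lm:theta} to select conjuncts $\xi_i^{bb}$, and apply Theorem~\ref{Th:DN} to strip the $^{bb}$'s. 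No new trace lemma is needed. For the soundness direction $\widecheck{\Kc}(\ACTomega^\vee)\subseteq\nHc$, the paper simply applies $({}^+\seqarr)_\inv$ and $(\bigvee\seqarr)_\inv$ to land on $\MALC$ sequents $\widecheck{\varphi}_{a_1},\ldots,\widecheck{\varphi}_{a_n}\seqarr S^{bb}$, then invokes the equivalence just described; there is no need to touch $\ACTomega$ proof theory at all. For the $\ACTvee$ side (the analog of Lemma~\ref{Lm:ACT}), the paper again detours through the $\wedge$-containing $\widecheck{\varphi}'_a$ inside $\MALC$, and then uses $\MALC$ cut-elimination to excise the $\wedge$'s from the resulting derivation.

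Your proposal to ``lift Theorem~\ref{Th:DN} to $\ACTomega$'' is both unnecessary and, as stated, ill-posed: $\widecheck{\psi}^+\seqarr S^{bb}$ is not an instance of the schema in Theorem~\ref{Th:DN} (the antecedent is not a $^{bb}$-decoration of anything, and $b$ already occurs in it), so it is unclear what the lifted statement would even say. Your alternative ``non-branching trace argument for $(\bigvee E)^b$ on the left'' could likely be made to work, but it amounts to reproving Lemma~\ref{Lm:theta} in a disguised form rather than reusing it. The paper's detour through $\widecheck{\varphi}'_a$ buys exactly this reuse.
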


\begin{theorem}\label{Th:ACTwedge}
 Any logic $\Lc^\wedge$ in the language without $\vee$ such that $\ACTwedge \subseteq \Lc^\wedge \subseteq \ACTomega^\wedge$ is undecidable. Moreover, if such a logic is r.e., then it is $\Sigma^1_0$-complete.
\end{theorem}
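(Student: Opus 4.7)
The plan is to mirror the proofs of Theorems~\ref{Th:undec} and~\ref{Th:Sigma} with a $\vee$-free formula $\widehat{\psi}_{\Mf,x}$ replacing $\psi_{\Mf,x}$ and the succedent $S^{bb}$ replacing $S$. A natural $\vee$-free candidate, suggested by the remarks preceding Theorem~\ref{Th:ACTvee}, is
$$
\widehat{\psi} := \left(\bigwedge_{a \in \Sigma}(\varphi_a \BS b)\right)\BS b,
$$
which is $\MALC$-equivalent to the pseudo-double-negation $\psi^{bb}$ via the $\MALC$-derivable exchange between $(\bigvee_a\varphi_a)\BS b$ and $\bigwedge_a(\varphi_a\BS b)$. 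It is $\vee$-free because each $\varphi_a = \bigwedge \Xi_a$ is built from $\cdot$, $\BS$, $\SL$, and $\wedge$ only. The chain of inclusions I want to establish is
$$
\Cc \subseteq \widehat{\Kc}(\ACTwedge) \subseteq \widehat{\Kc}(\Lc^\wedge) \subseteq \widehat{\Kc}(\ACTomega^\wedge) = \nHc.
$$

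First I would settle the $\vee$-free analogue of Lemma~\ref{Lm:ACTomega}, namely $\widehat{\Kc}(\ACTomega^\wedge) = \nHc$. For the $\subseteq$ direction I would translate each finite approximation $\psi^n \vdash S$, provided by Lemma~\ref{Lm:psi-n} whenever $\Gc_{\Mf,x}$ generates all words of length $n$, into a $\MALC^\wedge$-derivation of $\widehat{\psi}^n \vdash S^{bb}$ using the pseudo-double-negation machinery behind Theorem~\ref{Th:DN}, and then assemble these with $({}^+\vdash)_\omega$ and $(\vdash{}^+)_n$. For the $\supseteq$ direction the translation is reversible on each finite premise: a derivation of $\widehat{\psi}^n \vdash S^{bb}$ yields $\psi^n \vdash S$ in $\MALC$ by the same theorem, so a non-halting $\Mf$ produces an $\omega$-derivation of $\psi^+ \vdash S$ in $\ACTomega$, and Lemma~\ref{Lm:ACTomega} applies in reverse.

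Second I would prove the $\vee$-free analogue of Lemma~\ref{Lm:ACT}: if $\Mf$ trivially cycles on $x$, then $\widehat{\psi}^+ \vdash S^{bb}$ is derivable in $\ACTwedge$. The structure copies Lemma~\ref{Lm:ACT}. I would first derive $\widehat{\psi}^+ \vdash U^{bb}$ from $\widehat{\psi} \vdash U^{bb}$ and $\widehat{\psi}, U^{bb} \vdash U^{bb}$ via $({}^+\vdash)_\fp$; these two auxiliary sequents are the pseudo-double-negated forms of the starting points used in the original proof. Next I would invoke Lemma~\ref{Lm:long} with $\widehat{\psi}$ in place of $\psi$ and $S^{bb}$ in place of $S$: its short premises $\widehat{\psi}^m \vdash S^{bb}$ for $m \le N$ (where $N$ is the length of the incomplete protocol up to $q_c$) come from the translated Lemma~\ref{Lm:psi-n}, and the overflow premise $\widehat{\psi}^N, \widehat{\psi}^+ \vdash S^{bb}$ is obtained by cutting with $\widehat{\psi}^+ \vdash U^{bb}$ and running the class~1 or class~3 case analysis of Lemma~\ref{Lm:ACT}, with every intermediate sequent pseudo-double-negated.

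The main obstacle is that Theorem~\ref{Th:DN} is formulated only for pure $\MALC$, whereas here we must handle both the finitary $(\KStar\vdash)_\fp$ of $\ACT$ and the infinitary $(\KStar\vdash)_\omega$ of $\ACTomega$, and we must reuse Lemma~\ref{Lm:long} whose admissibility in $\ACT$ was proved using $\vee$ in an auxiliary step. Both difficulties can be contained by keeping the translation at the level of finite exponents $\widehat{\psi}^n$: the $\omega$-rule treats each such $n$ separately, so Theorem~\ref{Th:DN} applies to it directly, and the auxiliary $\vee$-using step inside the proof of the long rule can be replayed in $\ACTwedge$ with pseudo-double-negated sequents, since only admissibility of the long rule (not the original auxiliary witnesses) is needed. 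Once the chain of inclusions above is in place, the recursive inseparability of $\Cc$ and $\Hc$ (Proposition~\ref{P:insep}) yields undecidability of every $\Lc^\wedge$ in the range, while their effective inseparability (Proposition~\ref{P:CHeinsep}) combined with Corollary~\ref{Cor:insepSigma} upgrades this to $\Sigma_1^0$-completeness for every recursively enumerable such $\Lc^\wedge$, exactly as in the proof of Theorem~\ref{Th:Sigma}.
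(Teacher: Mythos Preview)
Your overall architecture is the paper's: the same $\vee$-free formula $\widehat{\psi}=(\bigwedge_a\varphi_a^b)^b$, the same chain $\Cc\subseteq\widehat{\Kc}(\ACTwedge)\subseteq\widehat{\Kc}(\Lc^\wedge)\subseteq\widehat{\Kc}(\ACTomega^\wedge)=\nHc$, and the same endgame via Propositions~\ref{P:insep} and~\ref{P:CHeinsep}. Your treatment of $\widehat{\Kc}(\ACTomega^\wedge)=\nHc$ and of the ``short'' premises $\widehat{\psi}^m\vdash S^{bb}$ is also in line with the paper.

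There is, however, a real gap at the ``long rule'' step. You correctly flag that Lemma~\ref{Lm:long} was proved using $\vee$ (via cut with $\psi^+\vdash\psi\vee(\psi\cdot\psi^+)$), but your proposed fix---``replay with pseudo-double-negated sequents''---does not go through. Theorem~\ref{Th:DN} applies only to $\MALC$-sequents; the offending cut formula contains $\psi^+$, so it lies outside the scope of that theorem. Nor can one simply replace $\psi\vee(\psi\cdot\psi^+)$ by the de~Morgan dual $(\psi^b\wedge(\psi\cdot\psi^+)^b)^b$ and proceed directly: deriving $\widehat{\psi}^+\vdash(\widehat{\psi}^b\wedge(\widehat{\psi}\cdot\widehat{\psi}^+)^b)^b$ in $\ACTwedge$ runs into the non-commutativity of $\BS$ (one would need $\alpha\BS b,\alpha\vdash b$, which is the wrong order), and the subsequent case split that $(\vee\vdash)$ provided is no longer available either. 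So ``only admissibility of the long rule is needed'' is true, but you have not supplied that admissibility in $\ACTwedge$.

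The paper resolves this by an independent argument (its Lemma~\ref{Lm:longwedge}): instead of the $\vee$-based unfolding on the left, it dualises on the right and shows, inside $\ACTwedge$, that
\[
(\gamma\SL\psi)\wedge(\gamma\SL(\psi\cdot\psi^+))\ \vdash\ \gamma\SL\psi^+,
\]
via $(\KStar\vdash)_\fp$ applied to $\psi^*$ after shifting $((\gamma\SL\psi)\wedge(\gamma\SL\psi\psi^+))\cdot\psi$ across with $\BS$. From this, $\psi^{n-1},\psi^+\vdash\gamma$ follows from $\psi^n\vdash\gamma$ and $\psi^n,\psi^+\vdash\gamma$ by $(\seqarr\wedge)$ and cut, yielding the long rule without any use of $\vee$. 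Once you insert this lemma, the rest of your outline matches the paper's proof.
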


Let us now apply the pseudo-double-negation technique to construct $\widecheck{\psi}$ and $\widehat{\psi}$ and prove
the necessary statements about them. 
First let us construct $\widecheck{\psi}$:
\begin{align*}
& \widecheck{\varphi}_a = \left( \bigvee \{ \xi^{b} \mid \xi \in \Xi_a \} 
\right)^b;\\
& \widecheck{\psi} = \bigvee \{ \widecheck{\varphi}_a \mid a \in \Sigma \cup Q \cup \{ \# \} \}.
\end{align*}

The desired properties of $\widecheck{\psi}$ are expressed in the following lemmata. 

Notice that the proof of Theorem~\ref{Th:DN} (see Appendix) involves
analysis of cut-free derivations in $\MALC$, so it cannot be easily generalised to $\ACT$ and its fragments (for which we do not know any
cut-free calculus yet). Therefore, we are going to apply Theorem~\ref{Th:DN} only after reducing derivability questions to
sequents without Kleene star. 

\begin{lemma}\label{Lm:phivee}
 A word $a_1 \ldots a_n$ is generated from non-terminal $A$ in $\Gc_{\Mf,x}$ if and only if the sequent $\widecheck{\varphi}_{a_1}, \ldots, \widecheck{\varphi}_{a_n} \seqarr A^{bb}$ is derivable in $\MALC$ (or, equivalently, in $\ACTvee$).
\end{lemma}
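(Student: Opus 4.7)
The plan is to chain derivability of the target sequent back to the grammar property via the original $\varphi$-encoding, using pseudo-double-negation as the bridge. The key technical fact is that, inside $\MALC$, the formula $\widecheck{\varphi}_a = (\bigvee_{\xi \in \Xi_a} \xi^b)^b$ is inter-derivable with $\bigwedge_{\xi \in \Xi_a} \xi^{bb}$; this follows by iteration from the $\MALC$-derivable identity $(\alpha \vee \beta) \BS b \dashv\vdash (\alpha \BS b) \wedge (\beta \BS b)$, which is a direct consequence of the residuation laws.

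Given this equivalence and $n$ cuts, derivability of $\widecheck{\varphi}_{a_1}, \ldots, \widecheck{\varphi}_{a_n} \seqarr A^{bb}$ in $\MALC$ is equivalent to that of $\bigwedge \Theta_1, \ldots, \bigwedge \Theta_n \seqarr A^{bb}$, where $\Theta_i = \{\xi^{bb} \mid \xi \in \Xi_{a_i}\}$. Every formula in each $\Theta_i$ and the succedent $A^{bb}$ itself is built using only $\BS, \SL, \cdot$, so Lemma~\ref{Lm:theta} applies and yields that this holds iff there exist $\xi_i \in \Xi_{a_i}$ with $\xi_1^{bb}, \ldots, \xi_n^{bb} \seqarr A^{bb}$ derivable in $\MALC$. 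Since the fresh variable $b$ does not occur in any $\xi_i$ or in $A$, Theorem~\ref{Th:DN} equates the latter with derivability of $\xi_1, \ldots, \xi_n \seqarr A$. A second application of Lemma~\ref{Lm:theta} to the sets $\Xi_{a_i}$ and the succedent $A$ then turns this into derivability of $\varphi_{a_1}, \ldots, \varphi_{a_n} \seqarr A$ in $\MALC$, which by Lemma~\ref{Lm:phi} is equivalent to $a_1 \ldots a_n$ being generated from $A$ in $\Gc_{\Mf,x}$.

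The parenthetical coincidence of $\MALC$- and $\ACTvee$-derivability follows from cut-elimination for $\MALC$: since the target sequent is free of $\wedge$ and $\KStar$, any cut-free $\MALC$-derivation of it lives inside the $\wedge$-free fragment and is therefore already a derivation in $\ACTvee$. The main obstacle is a bookkeeping point: the in-$\MALC$ equivalence $\widecheck{\varphi}_a \dashv\vdash \bigwedge \xi^{bb}$ formally invokes $\wedge$, which is absent from $\ACTvee$. For the forward (``if'') direction one can bypass the $\bigwedge$-intermediate altogether by deriving $\widecheck{\varphi}_a \seqarr \xi^{bb}$ individually for each $\xi \in \Xi_a$ (needing only $(\seqarr\vee)$ and the residuation rules for $\BS$) and splicing these via $n$ cuts into a derivation of $\xi_1^{bb}, \ldots, \xi_n^{bb} \seqarr A^{bb}$; the result is a derivation entirely inside $\ACTvee$.
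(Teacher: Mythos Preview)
Your proof is correct and follows essentially the same route as the paper: the de Morgan equivalence $\widecheck{\varphi}_a \equiv \bigwedge_{\xi\in\Xi_a}\xi^{bb}$, then Lemma~\ref{Lm:theta}, then Theorem~\ref{Th:DN}, then Lemma~\ref{Lm:theta} again back to $\varphi_{a_1},\ldots,\varphi_{a_n}\seqarr A$, and finally Lemma~\ref{Lm:phi}. Your extra paragraph on the $\ACTvee$ parenthetical is a welcome addition that the paper leaves implicit; note, however, that for the converse (derivable in $\ACTvee$ $\Rightarrow$ derivable in $\MALC$) you should route through $\ACTomega$: any $\ACTvee$-derivation is an $\ACTomega$-derivation, and cut elimination plus the subformula property in $\ACTomega$ give a $\KStar$-free proof, hence a $\MALC$ proof.
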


This lemma is essentially due to Buszkowski~\cite{Buszkowski2007}.

\begin{proof}
 In $\MALC$, the formula $\widecheck{\varphi}_a$ is equivalent to
 $$
 \widecheck{\varphi}'_a = \bigwedge \{ \xi^{bb} \mid \xi \in \Xi_a \}.
 $$
 The sequent $\widecheck{\varphi}_{a_1}, \ldots, \widecheck{\varphi}_{a_n} \seqarr A^{bb}$ is equiderivable with
 $\widecheck{\varphi}'_{a_1}, \ldots, \widecheck{\varphi}'_{a_n} \seqarr A^{bb}$. 
 
 The latter, by Lemma~\ref{Lm:theta}, is derivable if and only if there exist such $\xi_1 \in \Xi_{a_1}$, \ldots, $\xi_n \in \Xi_{a_n}$ that  $\xi_1^{bb}, \ldots, \xi_n^{bb} \seqarr A^{bb}$  is derivable in $\MALC$. By Theorem~\ref{Th:DN}, this sequent is equiderivable with $\xi_1, \ldots, \xi_n \seqarr A$. 
 Using Lemma~\ref{Lm:theta} again we finally show that derivability of $\widecheck{\varphi}_{a_1}, \ldots, \widecheck{\varphi}_{a_n} \seqarr A^{bb}$ is equivalent to that of
 $\varphi_{a_1}, \ldots, \varphi_{a_n} \seqarr A$, and, by Lemma~\ref{Lm:phi}, equivalent to the fact that $a_1 \ldots a_n$ is generated from $A$.
\end{proof}

\begin{lemma}\label{Lm:ACTveeomega}
 $\widecheck{\Kc}(\ACTomega^\vee) = \nHc$.
\end{lemma}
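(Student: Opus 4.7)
The plan is to mimic the passage from Lemma~\ref{Lm:phi} to Lemma~\ref{Lm:ACTomega}, but with $\widecheck{\varphi}_a$ and $\widecheck{\psi}$ in place of $\varphi_a$ and $\psi$, and with $S^{bb}$ in place of $S$ on the right-hand side. The only serious work has already been absorbed into Lemma~\ref{Lm:phivee}, which provides the pseudo-double-negation analog of Lemma~\ref{Lm:phi}; from there, the argument is purely a matter of $(\bigvee\seqarr)$/$(\bigvee\seqarr)_\inv$ and the $\omega$-rule for Kleene plus.

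First I would establish the direct analog of Lemma~\ref{Lm:psi-n}: the grammar $\Gc_{\Mf,x}$ generates all words of length $n$ if and only if $\widecheck{\psi}^n \seqarr S^{bb}$ is derivable in $\ACTvee$ (equivalently, in $\MALC$, since no Kleene star appears). The ``if'' direction uses $(\bigvee\seqarr)_\inv$: a derivation of $\widecheck{\psi}^n \seqarr S^{bb}$ specialises, for each word $a_1 \ldots a_n$ over $\Sigma$, to a derivation of $\widecheck{\varphi}_{a_1}, \ldots, \widecheck{\varphi}_{a_n} \seqarr S^{bb}$, whence by Lemma~\ref{Lm:phivee} every such word is generated from $S$ in $\Gc_{\Mf,x}$. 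The ``only if'' direction is symmetric: if every length-$n$ word is generated from $S$, Lemma~\ref{Lm:phivee} gives derivability of $\widecheck{\varphi}_{a_1}, \ldots, \widecheck{\varphi}_{a_n} \seqarr S^{bb}$ for every $a_1 \ldots a_n$, and $n$ applications of $(\bigvee\seqarr)$ to the $n$ occurrences of $\widecheck{\psi}$ in the antecedent collapse these into a single derivation of $\widecheck{\psi}^n \seqarr S^{bb}$. Importantly, the formulae $\widecheck{\varphi}_a$ and $S^{bb}$ contain no $\wedge$, so the whole derivation lives inside $\ACTomega^\vee$.

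Next I would apply the $\omega$-rule for Kleene plus and its inverse. By $({}^+\seqarr)_\omega$, the sequent $\widecheck{\psi}^+ \seqarr S^{bb}$ is derivable in $\ACTomega^\vee$ if and only if $\widecheck{\psi}^n \seqarr S^{bb}$ is derivable for every $n \ge 1$ (the converse direction uses $({}^+\seqarr)_\inv$, which is derivable already in $\ACT$ and hence in $\ACTvee$). By the previous step, this happens if and only if $\Gc_{\Mf,x}$ generates every nonempty word, which by the construction of $\Gc_{\Mf,x}$ in Subsection~\ref{Ss:enc1} is equivalent to $\Mf$ not halting on $x$, i.e., $\langle \Mf, x\rangle \in \nHc$. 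Chaining these equivalences gives $\widecheck{\Kc}(\ACTomega^\vee) = \nHc$.

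The main obstacle, and the reason this lemma is not just a one-line invocation of Lemma~\ref{Lm:ACTomega}, is Lemma~\ref{Lm:phivee} itself: one must justify that replacing $\varphi_a$ with its $\wedge$-free surrogate $\widecheck{\varphi}_a = (\bigvee \{\xi^b \mid \xi \in \Xi_a\})^b$ and $S$ with $S^{bb}$ does not disturb derivability. This is handled by passing through the equivalent $\widecheck{\varphi}'_a = \bigwedge \{\xi^{bb} \mid \xi \in \Xi_a\}$ inside $\MALC$ and then invoking Theorem~\ref{Th:DN}; the fact that Theorem~\ref{Th:DN} relies on cut-free analysis of $\MALC$-derivations is precisely why we confine the argument to the star-free level before lifting to $\widecheck{\psi}^+$ via the $\omega$-rule. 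Once that reduction is in place, no further subtlety arises.
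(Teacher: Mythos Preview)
Your proposal is correct and follows essentially the same route as the paper: reduce derivability of $\widecheck{\psi}^+ \seqarr S^{bb}$ in $\ACTomega^\vee$ via $({}^+\seqarr)_\omega$, $(\bigvee\seqarr)$, and their inverses to derivability of all sequents $\widecheck{\varphi}_{a_1}, \ldots, \widecheck{\varphi}_{a_n} \seqarr S^{bb}$, then invoke Lemma~\ref{Lm:phivee} and the construction of $\Gc_{\Mf,x}$. The paper compresses your intermediate $\widecheck{\psi}^n$-step into one sentence, but the argument is the same.
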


This lemma is also due to Buszkowski.

\begin{proof}
 By $({}^+\seqarr)$, $(\bigvee\seqarr)$, and their inverted versions, derivability of $\widecheck{\psi}^+ \seqarr S^{bb}$ in $\ACTomega^\vee$ is equivalent to derivability of $\widecheck{\varphi}_{a_1}, \ldots, \widecheck{\varphi}_{a_n} \seqarr S^{bb}$ in $\MALC$ for any non-empty word $a_1 \ldots a_n$ over $\Sigma$.
 By Lemma~\ref{Lm:phivee}, this is equivalent to the fact that any non-empty word $a_1 \ldots a_n$ is generated from $S$ in $\Gc_{\Mf,x}$. In turn, this is the case if and only if $\Mf$ does not halt on $x$. 
\end{proof}

\begin{lemma}\label{Lm:ACTvee}
 $\Cc \subseteq  \widecheck{\Kc}(\ACTvee)$.
\end{lemma}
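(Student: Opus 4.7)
The plan is to mirror the proof of Lemma~\ref{Lm:ACT} step by step, replacing each use of a conjunct with a pseudo-double-negation trick. The only real conceptual substitute for $\wedge$-elimination is the following: for every $\xi \in \Xi_a$, one can derive $\widecheck{\varphi}_a \vdash \xi^{bb}$ in $\ACTvee$ directly. Indeed, unfolding $\widecheck{\varphi}_a = (\bigvee_{\xi \in \Xi_a} \xi^b)^b$, this sequent reduces by $(\vdash\BS)$ to $\widecheck{\varphi}_a, \xi^b \vdash b$ and then by $(\BS\vdash)$ to the trivial $\xi^b \vdash \bigvee_\xi \xi^b$, which uses only $(\vdash\vee)_i$. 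This is the $\vee$-version of an $(\bigwedge\vdash)$ step, and it is the principal combinator I will use everywhere below.

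First I would derive $\widecheck{\psi}^+ \vdash U^{bb}$ in $\ACTvee$ by the rule $({}^+\vdash)_\fp$ of Subsection~\ref{Ss:rules}. Its two premises, $\widecheck{\psi} \vdash U^{bb}$ and $\widecheck{\psi}, U^{bb} \vdash U^{bb}$, reduce via $(\bigvee\vdash)$ to $\widecheck{\varphi}_a \vdash U^{bb}$ and $\widecheck{\varphi}_a, U^{bb} \vdash U^{bb}$ for each $a \in \Sigma$. The first is exactly the one-letter instance of Lemma~\ref{Lm:phivee}, using the production $U \cfarr a$. For the second, I extract $(U \SL U)^{bb}$ from $\widecheck{\varphi}_a$ as above (the production $U \cfarr aU$ puts $U \SL U$ in $\Xi_a$) and then cut with $(U \SL U)^{bb}, U^{bb} \vdash U^{bb}$. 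That last sequent is $\wedge$-free and comes from the trivial $\MALC$ derivation of $U \SL U, U \vdash U$ via Theorem~\ref{Th:DN}; since $\MALC$ has cut elimination and the subformula property, its $\MALC$-derivation uses no $\wedge$ and hence lives in $\ACTvee$.

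Next I apply Lemma~\ref{Lm:long} (the long rule) with $\psi := \widecheck{\psi}$ and $\gamma := S^{bb}$, taking $n$ to be the length of the incomplete protocol of $\Mf$ on $x$ up to the first occurrence of $q_c$. Since $\Mf$ trivially cycles on $x$, there is no halting protocol, so $\Gc_{\Mf,x}$ generates every non-empty word, and by Lemma~\ref{Lm:phivee} together with $(\bigvee\vdash)$ all premises of the form $\widecheck{\psi}^m \vdash S^{bb}$ ($m = 1, \ldots, n$) are derivable. For the remaining premise $\widecheck{\psi}^n, \widecheck{\psi}^+ \vdash S^{bb}$, I apply $(\bigvee\vdash)$ to each copy of $\widecheck{\psi}$ in $\widecheck{\psi}^n$ and cut with $\widecheck{\psi}^+ \vdash U^{bb}$, reducing the problem to deriving $\widecheck{\varphi}_{a_1}, \ldots, \widecheck{\varphi}_{a_n}, U^{bb} \vdash S^{bb}$ for every word $a_1 \ldots a_n$, which (as in the proof of Lemma~\ref{Lm:ACT}) necessarily lies in class~1 or class~3.

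In class~1 ($a_1 = \#$ and $a_2 \ldots a_n$ derivable from $Y$), the production $S \cfarr \# Y U$ puts $S \SL (Y \cdot U)$ into $\Xi_\#$, so by the extraction principle I get $\widecheck{\varphi}_\# \vdash (S \SL (Y \cdot U))^{bb}$. Lemma~\ref{Lm:phivee} provides $\widecheck{\varphi}_{a_2}, \ldots, \widecheck{\varphi}_{a_n} \vdash Y^{bb}$. Two more $\MALC$ derivations, namely $Y, U \vdash Y \cdot U$ and $S \SL (Y \cdot U), Y \cdot U \vdash S$, lift through Theorem~\ref{Th:DN} to $\wedge$-free sequents $Y^{bb}, U^{bb} \vdash (Y \cdot U)^{bb}$ and $(S \SL (Y \cdot U))^{bb}, (Y \cdot U)^{bb} \vdash S^{bb}$, which by the same cut-elimination argument are available in $\ACTvee$. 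Chaining everything with cuts yields the target sequent. Class~3 ($a_1 \neq \#$) is strictly analogous, using $S \SL U \in \Xi_{a_1}$ and $U \SL U \in \Xi_{a_i}$ for $i \geq 2$ and the trivial $\MALC$ derivation of $S \SL U, U \SL U, \ldots, U \SL U, U \vdash S$ lifted by Theorem~\ref{Th:DN}. The main obstacle throughout is bookkeeping: one must verify that every sequent to which Theorem~\ref{Th:DN} is applied, both before and after the $(\cdot)^{bb}$ translation, is $\wedge$-free, so that the resulting proofs genuinely live in the $\vee$-fragment.
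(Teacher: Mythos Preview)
Your proof is correct and follows the same global strategy as the paper: derive $\widecheck{\psi}^+ \vdash U^{bb}$ via $({}^+\vdash)_\fp$, invoke the long rule (which only needs $\vee$), and handle the final premise by the class~1/class~3 case split, cutting against $\widecheck{\psi}^+ \vdash U^{bb}$.

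The one genuine local difference is your ``extraction principle'' $\widecheck{\varphi}_a \vdash \xi^{bb}$, which you derive \emph{directly} in $\ACTvee$ by unwinding the outer $\BS b$. The paper instead passes to the $\wedge$-equivalent form $\widecheck{\varphi}'_a = \bigwedge_{\xi\in\Xi_a}\xi^{bb}$, uses $(\wedge\vdash)$, and then appeals to cut elimination in $\MALC$ to argue that the resulting $\wedge$-free end-sequent has a $\wedge$-free derivation. Your route is cleaner here: it keeps the argument entirely inside the $\vee$-fragment without a detour through $\wedge$ followed by a subformula-property cleanup. The paper's route, on the other hand, makes the parallel with Lemma~\ref{Lm:ACT} slightly more visible.

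One small slip: in the non-commutative setting the premise of $(\vdash\BS)$ for $\widecheck{\varphi}_a \vdash \xi^{bb} = \xi^b \BS b$ is $\xi^b,\widecheck{\varphi}_a \vdash b$, not $\widecheck{\varphi}_a,\xi^b \vdash b$; with the correct order the $(\BS\vdash)$ step on $\widecheck{\varphi}_a = (\bigvee\xi^b)\BS b$ goes through exactly as you intend.
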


\begin{proof}
We reproduce the proof of Lemma~\ref{Lm:ACT} adding
double negations where needed. First we derive 
$\widecheck{\psi}^+ \seqarr U^{bb}$:
$$
\infer[({}^+\seqarr)_\fp]{\widecheck{\psi}^+ \seqarr U^{bb}}
{\widecheck{\psi} \seqarr U^{bb} & 
\infer[(\CUT)]{\widecheck{\psi}, U^{bb} 
\seqarr U^{bb}}{\widecheck{\psi} \seqarr (U \SL U)^{bb} &
(U \SL U)^{bb}, U^{bb} \seqarr U^{bb}}}
$$
Here $(U \SL U)^{bb}, U^{bb} \seqarr U^{bb}$ is obtained from
$U \SL U, U \seqarr U$ by Theorem~\ref{Th:DN}; the latter is obviously derivable. For $\widecheck{\psi} \seqarr U^{bb}$ and $\widecheck{\psi} \seqarr (U \SL U)^{bb}$ we use the following trick. By $(\bigvee\seqarr)$, it is sufficient to prove $\widecheck{\varphi}_a \seqarr U^{bb}$ and 
$\widecheck{\varphi}_a \seqarr (U \SL U)^{bb}$ for every
$a \in \Sigma$. In $\MALC$, we can replace $\widecheck{\varphi}_a$ with equivalent $\widecheck{\varphi}'_a$. The latter is a conjunction which includes $U^{bb}$ and $(U \SL U)^{bb}$
(thanks to the corresponding production rules of $\Gc_{\Mf,x}$).
Since $\MALC$ enjoys cut elimination, these derivations can be performed without using $\wedge$ (which appears in $\widecheck{\varphi}'_a$).

 Now let $\Mf$ trivially cycle on $x$. Notice that the ``long rule'' is derived without using $\wedge$ and is therefore valid in $\ACTvee$. Apply it:
 $$
 \infer{\widecheck{\psi}^+ \vdash S^{bb}}{\widecheck{\psi} \vdash S^{bb} & \ldots & \widecheck{\psi}^n \vdash S^{bb} & \infer[(\CUT)]{\widecheck{\psi}^n, \widecheck{\psi}^+ \vdash S^{bb}}{\widecheck{\psi}^+ \seqarr U^{bb} & 
 \widecheck{\psi}^n, U^{bb} \seqarr S^{bb}}}
 $$
 Here, as in Lemma~\ref{Lm:ACT}, $n$ is the length of the protocol up to reaching $q_c$.
 
 The first $n$ premises are derived exactly as in the proof
 of Lemma~\ref{Lm:ACTveeomega}, applying $(\bigvee\seqarr)$ to sequents $\widecheck{\varphi}_{a_1}, \ldots, \widecheck{\varphi}_{a_n} \seqarr S^{bb}$, which are derivable in $\MALC$.
 
 The rightmost premise,
 $\widecheck{\psi}^n, U^{bb} \seqarr S^{bb}$, is derived by
 $(\bigvee\seqarr)$ from all sequents 
 $\widecheck{\varphi}_{a_1}, \ldots, \widecheck{\varphi}_{a_n}, U^{bb} \seqarr S^{bb}$ or, equivalently in $\MALC$, 
 $\widecheck{\varphi}'_{a_1}, \ldots, \widecheck{\varphi}'_{a_n}, U^{bb} \seqarr S^{bb}$.
 
 As in the proof of Lemma~\ref{Lm:ACT}, the word $a_1\ldots a_n$ belongs to class~1 or class~3, depending on whether $a_1 = \#$.
 
 If $a_1 = \#$ and $a_1 \ldots a_n$ is a word of class~1, then the conjunction $\widecheck{\varphi}'_{a_1}$ contains $(S \SL (Y \cdot U))^{bb}$, and we proceed by $(\bigwedge\seqarr)$ as follows:
 $$
 \infer[(\wedge\seqarr)]
 {\widecheck{\varphi}'_{a_1}, \ldots, \widecheck{\varphi}'_{a_n}, U^{bb} \seqarr S^{bb}}
 {\infer[(\CUT)]{(S \SL (Y \cdot U))^{bb}, \widecheck{\varphi}'_{a_2}, \ldots,
 \widecheck{\varphi}'_{a_n}, U^{bb} \seqarr S^{bb}}
 {\widecheck{\varphi}'_{a_2}, \ldots,
 \widecheck{\varphi}'_{a_n} \seqarr Y^{bb} & 
 (S \SL (Y \cdot U))^{bb}, Y^{bb} \seqarr S^{bb}}}
 $$
 Here the left premise is derivable by Lemma~\ref{Lm:phivee} (because $a_2 \ldots a_n$ is generated from $Y$),
 and the right one is obtained from $S \SL (Y \cdot U), Y, U \seqarr S$ by Theorem~\ref{Th:DN}.
 
 If $a_1 \ne \#$ and $a_1 \ldots a_n$ is a word of class~3, then the conjunction $\widecheck{\varphi}'_{a_1}$ contains $(S \SL U)^{bb}$ and conjunctions $\widecheck{\varphi}'_{a_i}$, for $i = 2, \ldots, n$, contain $(U \SL U)^{bb}$. Now 
 $\widecheck{\varphi}'_{a_1}, \widecheck{\varphi}'_{a_2}, \ldots, \widecheck{\varphi}'_{a_n}, U^{bb} \seqarr S^{bb}$ is derived using $(\bigwedge\seqarr)$ from the sequent $(S\SL U)^{bb}, 
 (U \SL U)^{bb}, \ldots, (U \SL U)^{bb}, U^{bb} \seqarr  S^{bb}$. The latter is derivable by Theorem~\ref{Th:DN}.
 
 In both cases, we have derived $\widecheck{\varphi}_{a_1}, \widecheck{\varphi}_{a_2}, \ldots, \widecheck{\varphi}_{a_n}, U^{bb} \seqarr S^{bb}$ via a detour using $\wedge$ (which appears in $\widecheck{\varphi}'_{a_i}$). The derivations, however, are performed in pure $\MALC$ (no Kleene star), which enjoys cut elimination. After performing cut elimination, we obtain a derivation without $\wedge$, which is legal in $\ACTvee$. 
 \end{proof}

Lemmata~\ref{Lm:ACTveeomega} and~\ref{Lm:ACTvee}, via Proposition~\ref{P:insep}, Proposition~\ref{P:CHeinsep}, and Corollary~\ref{Cor:insepSigma}, yield Theorem~\ref{Th:ACTvee}. 

Now let us do the job for $\ACTwedge$.
In order to prove Theorem~\ref{Th:ACTwedge}, we construct $\widehat{\psi}$, using pseudo-double-negation in a slightly different way:
$$
\widehat{\psi} = \left( \bigwedge \{ \varphi_a^b \mid a \in \Sigma \} \right)^b,
$$
where $\varphi_a$ is defined in the standard way, like in the definition of $\psi$: $\varphi_a = \bigwedge \Xi_a$.

In $\MALC$, $\widehat{\psi}$ is equivalent to
$$
\widehat{\psi}' = \bigvee \{ \varphi_a^{bb} \mid a \in \Sigma \}.
$$
Thus, in our derivations, once we reach a point where there are no more occurrences of Kleene star, we can replace $\widehat{\psi}$ by $\widehat{\psi}'$, decompose $\bigvee$ and directly apply Theorem~\ref{Th:DN}. The instances of $\vee$ will be removed by cut elimination in $\MALC$. For the Kleene star, however, we still have to do some work.

First, we have to reestablish the ``long rule,'' since its old derivation (see Lemma~\ref{Lm:long}) essentially uses $\vee$, which is now unavailable.
\begin{lemma}\label{Lm:longwedge}
 The ``long rule'' is derivable in $\ACTwedge$.
\end{lemma}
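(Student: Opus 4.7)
The plan is to replace the original derivation's disjunctive case analysis (which relied on the identity $\psi^+ \vdash \psi \vee (\psi \cdot \psi^+)$, now unavailable) by a single fixpoint induction on $\psi^*$ with a purely conjunctive invariant. Given the premises $\psi^k \vdash \gamma$ for $k = 1, \ldots, n$ and $\psi^n, \psi^+ \vdash \gamma$, I would introduce the formula
$$\delta \;=\; \biggl(\bigwedge_{k=1}^{n}(\psi^k \BS \gamma)\biggr) \wedge \bigl((\psi^n \cdot \psi^+) \BS \gamma\bigr),$$
and aim to derive $\psi^* \vdash \delta$ via the fixpoint rule $(\KStar\seqarr)_\fp$. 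Extracting the conjunct $\psi \BS \gamma$ then gives $\psi^* \vdash \psi \BS \gamma$, whence $\psi \cdot \psi^* \vdash \gamma$, i.e.\ $\psi^+ \vdash \gamma$. The formula $\delta$ involves only $\BS$, $\cdot$, $\KStar$, and $\wedge$, so no rule for $\vee$ is ever invoked.

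For the base $\Lambda \vdash \delta$, each sequent $\Lambda \vdash \psi^k \BS \gamma$ comes from the premise $\psi^k \vdash \gamma$ by $(\seqarr\BS)$, and $\Lambda \vdash (\psi^n \cdot \psi^+) \BS \gamma$ comes analogously from $\psi^n, \psi^+ \vdash \gamma$. For the induction step $\psi, \delta \vdash \delta$, I would show by $(\seqarr\bigwedge)$ that $\psi,\delta$ derives each conjunct separately: for $k < n$ extract $\psi^{k+1} \BS \gamma$ from $\delta$ by $(\bigwedge\seqarr)$ and combine with the trivial $\psi^k, \psi \vdash \psi^{k+1}$ via $(\BS\seqarr)$; for the final conjunct $\psi^n \BS \gamma$ extract $(\psi^n \cdot \psi^+) \BS \gamma$ instead and use $\psi^n, \psi \vdash \psi^n \cdot \psi^+$, which reduces by $(\seqarr\cdot)$ to the basic fact $\psi \vdash \psi^+$; for $(\psi^n \cdot \psi^+) \BS \gamma$ itself extract the same conjunct again and use $\psi^n, \psi^+, \psi \vdash \psi^n \cdot \psi^+$.

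The main obstacle, and the reason the extra conjunct $(\psi^n \cdot \psi^+) \BS \gamma$ is essential, is that the naive invariant $\bigwedge_{k=1}^{n}(\psi^k \BS \gamma)$ is not preserved under prefixing by $\psi$: renewing $\psi^n \BS \gamma$ requires discharging $\psi^{n+1}$ on the left, and without disjunction there is no way to perform the usual case split ``$\psi^+$ is $\psi$'' vs.\ ``$\psi^+$ is $\psi \cdot \psi^+$''. The extra conjunct absorbs the unused premise $\psi^n, \psi^+ \vdash \gamma$ into $\delta$ and is self-stable under prefixing by $\psi$, thanks to the right-absorption identity $\psi^+ \cdot \psi \vdash \psi^+$. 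This identity is derivable in $\ACT$ from the standard left fixpoint rule together with residuation (in the style of Pratt~\cite{Pratt1991}) using only $\cdot$, $\BS$, $\SL$, $\KStar$, and hence remains available in $\ACTwedge$.
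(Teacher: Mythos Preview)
Your proposal is correct, and the verification of each step goes through: the base $\Lambda\vdash\delta$ follows from the premises by $(\seqarr\BS)$ and $(\seqarr\bigwedge)$; the step $\psi,\delta\vdash\delta$ works because for each target conjunct you may project $\delta$ on the left to a \emph{different} conjunct via $(\bigwedge\seqarr)$, and the right-absorption $\psi^+,\psi\vdash\psi^+$ (equivalently $\psi^*,\psi\vdash\psi^*$) is indeed derivable in the $\{\cdot,\BS,\SL,{}^*\}$-fragment by one application of $(\KStar\seqarr)_\fp$ with $\beta=\psi^*\SL\psi$.

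Your route, however, differs from the paper's. The paper keeps the inductive skeleton of Lemma~\ref{Lm:long} intact and only replaces its $\vee$-dependent step: instead of cutting with $\psi^+\vdash\psi\vee(\psi\cdot\psi^+)$, it first derives the auxiliary sequent $(\gamma\SL\psi)\wedge(\gamma\SL(\psi\cdot\psi^+))\vdash\gamma\SL\psi^+$ by a fixpoint on $\psi^*$, and then uses it (via $(\seqarr\wedge)$ and cut) to pass from $\psi^n\vdash\gamma$ together with $\psi^n,\psi^+\vdash\gamma$ to $\psi^{n-1},\psi^+\vdash\gamma$. Thus the paper performs $n$ small fixpoint arguments, each with a two-element conjunction, whereas you perform a single fixpoint with an $(n{+}1)$-element conjunctive invariant. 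Conceptually both exploit the same substitution of $\wedge$ for the unavailable case split; the paper's version isolates a reusable $\wedge$-dual of the identity $\psi^+\vdash\psi\vee(\psi\cdot\psi^+)$, while yours is more direct and avoids the outer induction on $n$ entirely. A minor stylistic difference is that the paper residuates on the right ($\gamma\SL\psi$) and you on the left ($\psi^k\BS\gamma$); this is immaterial.
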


\begin{proof}
 It is sufficient to show that one can derive $\psi^{n-1}, \psi^+ \seqarr \gamma$ from $\psi^n \seqarr \gamma$ and $\psi^n, \psi^+ \seqarr \gamma$, using $\wedge$ instead of $\vee$. Everything else is already done in the proof of Lemma~\ref{Lm:long}.
 
 First we derive $(\gamma \SL \psi) \wedge (\gamma \SL (\psi \cdot \psi^+)) \seqarr \gamma \SL \psi^+$ in $\ACTwedge$:
 $$
 \infer{(\gamma \SL \psi) \wedge (\gamma \SL \psi\psi^+) \seqarr \gamma \SL \psi^+}
{\infer{(\gamma \SL \psi) \wedge (\gamma \SL \psi\psi^+), \psi^+ \seqarr \gamma}
{\infer{(\gamma \SL \psi) \wedge (\gamma \SL \psi\psi^+), \psi, \psi^* \seqarr \gamma}
{\infer{\psi^* \seqarr (((\gamma \SL \psi) \wedge (\gamma \SL \psi\psi^+)) \cdot \psi) \BS \gamma}
{\infer{\Lambda \seqarr (((\gamma \SL \psi) \wedge (\gamma \SL \psi\psi^+)) \cdot \psi) \BS \gamma}
{\infer{((\gamma \SL \psi) \wedge (\gamma \SL \psi\psi^+)), \psi \seqarr \gamma}{\gamma \SL \psi, \psi \seqarr \gamma}} & 
\infer{\psi, \psi^* \seqarr (((\gamma \SL \psi) \wedge (\gamma \SL \psi\psi^+)) \cdot \psi) \BS \gamma}
{\infer{(\gamma \SL \psi) \wedge (\gamma \SL \psi\psi^+), \psi, \psi, \psi^* \seqarr \gamma}
{\infer{\gamma \SL \psi\psi^+, \psi, \psi, \psi^* \seqarr \gamma}{\psi, \psi, \psi^* \seqarr \psi\psi^+}}}}}}}
$$
(In order to make notations shorter, we write $\psi\psi^+$ instead of $(\psi \cdot \psi^+)$.)

Now we proceed as follows:
$$
\infer{\psi^{n-1}, \psi^+ \seqarr \gamma}
{\infer{\psi^{n-1} \seqarr \gamma \SL \psi^+}
{\infer{\psi^{n-1} \seqarr (\gamma \SL \psi) \wedge (\gamma \SL \psi \psi^+)}
{\infer{\psi^{n-1} \seqarr \gamma \SL \psi}
{\psi^n \seqarr \gamma} & \infer{\psi^{n-1} \seqarr \gamma \SL \psi \psi^+}{\infer{\psi^{n-1}, \psi \psi^+ \seqarr \gamma}{\psi^n,\psi^+ \seqarr \gamma}}} & 
(\gamma \SL \psi) \wedge (\gamma \SL \psi  \psi^+) \seqarr \gamma \SL \psi^+}}
$$
\end{proof}

Now we are ready to prove that $\Cc \subseteq \widehat{\Kc}(\ACTwedge)$ and $\widehat{\Kc}(\ACTomega^\wedge) = \nHc$.

\begin{lemma}\label{Lm:ACTomegawedge}
 $\widehat{\Kc}(\ACTomega^\wedge) = \nHc$.
\end{lemma}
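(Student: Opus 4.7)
The plan is to reduce everything to star-free, $\vee$-free sequents in $\MALC$, where we can freely swap $\widehat{\psi}$ for its $\MALC$-equivalent $\widehat{\psi}'=\bigvee\{\varphi_a^{bb}\mid a\in\Sigma\}$ and apply the earlier machinery (Lemma~\ref{Lm:phi}, Theorem~\ref{Th:DN}).

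First, I would use the $\omega$-rule $({}^+\seqarr)_\omega$ and its inversion $({}^+\seqarr)_\inv$ to show that $\widehat{\psi}^+\seqarr S^{bb}$ is derivable in $\ACTomega^\wedge$ if and only if the star-free sequent $\widehat{\psi}^n\seqarr S^{bb}$ is derivable in $\ACTomega^\wedge$ for every $n\geq 1$. Note that on the right-to-left direction, the rule $(\seqarr{}^+)_n$ (together with the obvious derivability of $\widehat{\psi}\seqarr\widehat{\psi}$) gives $\widehat{\psi}^n\seqarr\widehat{\psi}^+$, and then cut yields $\widehat{\psi}^+\seqarr S^{bb}$ from $\widehat{\psi}^n\seqarr S^{bb}$ for each $n$; for the other direction we just invert.

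Second, I would invoke conservativity: the sequent $\widehat{\psi}^n\seqarr S^{bb}$ contains no Kleene star and no $\vee$ (the formula $\widehat{\psi}$ is built only from $\bigwedge$, $\BS$, $\SL$, $\cdot$, and atoms). By cut elimination for $\ACTomega$ (Palka~\cite{Palka2007}) and the subformula property, any cut-free derivation of $\widehat{\psi}^n\seqarr S^{bb}$ in $\ACTomega^\wedge$ uses only star-free, $\vee$-free formulae, so it lives in $\MALC^\wedge$; in turn, by cut elimination for $\MALC$, derivability in $\MALC^\wedge$ of a $\vee$-free sequent coincides with derivability in full $\MALC$. Hence $\widehat{\psi}^n\seqarr S^{bb}$ is derivable in $\ACTomega^\wedge$ iff it is derivable in $\MALC$.

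Third, inside $\MALC$ I would rewrite $\widehat{\psi}$ as the equivalent $\widehat{\psi}'=\bigvee\{\varphi_a^{bb}\mid a\in\Sigma\}$ (the equivalence of $(\bigwedge\{\eta_a^b\})^b$ with $\bigvee\{\eta_a^{bb}\}$ is standard for $\MALC$ pseudo-negation). Applying $(\bigvee\seqarr)$ and its inversion $n$ times reduces derivability of $(\widehat{\psi}')^n\seqarr S^{bb}$ to simultaneous derivability of $\varphi_{a_1}^{bb},\ldots,\varphi_{a_n}^{bb}\seqarr S^{bb}$ for every word $a_1\ldots a_n$ of length $n$ over $\Sigma$. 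Theorem~\ref{Th:DN} strips all double negations, leaving $\varphi_{a_1},\ldots,\varphi_{a_n}\seqarr S$, which by Lemma~\ref{Lm:phi} is derivable iff $a_1\ldots a_n$ is generated from $S$ in $\Gc_{\Mf,x}$. Chaining the equivalences, $\widehat{\psi}^+\seqarr S^{bb}$ is derivable in $\ACTomega^\wedge$ iff $\Gc_{\Mf,x}$ generates every non-empty word over $\Sigma$, which by the construction of Subsection~\ref{Ss:enc1} holds iff $\Mf$ does not halt on $x$; that is, $\widehat{\Kc}(\ACTomega^\wedge)=\nHc$.

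The one delicate point is the conservativity step: it is essential that $\widehat{\psi}^n\seqarr S^{bb}$ is star-free so that cut elimination for $\ACTomega$ (which is not available for $\ACT$) can be applied, and that the sequent is $\vee$-free so that the cut-free proof in $\MALC$ automatically stays inside $\ACTwedge$. This is precisely why $\widehat{\psi}$ was built with the outer $\bigwedge\,{\cdot}^b$ wrapping rather than the $\widehat{\psi}'$-form, and why the $\omega$-rule reduction to star-free sequents is carried out before any appeal to Theorem~\ref{Th:DN}.
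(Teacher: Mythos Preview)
Your proof is correct and follows essentially the same route as the paper's: conservativity of $\ACTomega$ over its $\vee$-free fragment, reduction via the $\omega$-rule for ${}^+$ and its inversion, the $\MALC$-equivalence $\widehat{\psi}\equiv\widehat{\psi}'$, and Theorem~\ref{Th:DN} to strip the pseudo-double-negations. One small slip in your first paragraph: you have the directions swapped---cut with $\widehat{\psi}^n\seqarr\widehat{\psi}^+$ yields $\widehat{\psi}^n\seqarr S^{bb}$ \emph{from} $\widehat{\psi}^+\seqarr S^{bb}$ (this is the left-to-right implication, i.e.\ the inversion $({}^+\seqarr)_\inv$), while the right-to-left implication is precisely the $\omega$-rule $({}^+\seqarr)_\omega$, not a cut.
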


\begin{proof}
 We prove that $\widehat{\Kc}(\ACTomega^\wedge) = 
 \Kc(\ACTomega)$ and then use Lemma~\ref{Lm:ACTomega}. First, by conservativity, we can replace $\ACTomega^\wedge$ by the full system $\ACTomega$, this will not affect derivability of $\widehat{\psi}^+ \vdash S^{bb}$.

 The sequent $\psi^+ \vdash S$ is derivable in $\ACTomega$ if and only if so are sequents $\varphi_{a_1}, \ldots, \varphi_{a_n} \vdash S$ for all non-empty words $a_1, \ldots, a_n$. On the other side, $\widehat{\psi}^+ \vdash S$ is equivalent to $\widehat{\psi}'^+ \vdash S$, which is derivable if and only if so are all sequents of the form $\varphi_{a_1}^{bb}, \ldots, \varphi_{a_n}^{bb} \vdash S^{bb}$. Equiderivability of $\varphi_{a_1}, \ldots, \varphi_{a_n} \vdash S$ and $\varphi_{a_1}^{bb}, \ldots, \varphi_{a_n}^{bb} \vdash S^{bb}$ is due to Theorem~\ref{Th:DN}.
  \end{proof}

\begin{lemma}\label{Lm:ACTwedge}
$\Cc \subseteq \widehat{\Kc}(\ACTwedge)$.
\end{lemma}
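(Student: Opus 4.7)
The proof will mirror that of Lemma~\ref{Lm:ACTvee}, with the roles of $\vee$ and $\wedge$ interchanged, using the $\ACTwedge$-valid long rule of Lemma~\ref{Lm:longwedge} in place of Lemma~\ref{Lm:long}. The guiding principle, as before, is to confine all steps that involve the forbidden connective ($\vee$ here) to pure $\MALC$ derivations on sequents whose formulae do not contain $\vee$; then cut elimination in $\MALC$ together with the subformula property yields $\vee$-free proofs that are legal in $\ACTwedge$. The Kleene-star parts of the derivation proceed only through rules already shown to be $\ACTwedge$-admissible, namely $({}^+\seqarr)_\fp$ and the long rule.

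First I would derive the auxiliary sequent $\widehat{\psi}^+ \vdash U^{bb}$ by applying $({}^+\seqarr)_\fp$, reducing to $\widehat{\psi} \vdash U^{bb}$ and $\widehat{\psi}, U^{bb} \vdash U^{bb}$, both purely in $\MALC$. Passing to the $\MALC$-equivalent $\widehat{\psi}' = \bigvee\{\varphi_a^{bb}\mid a\in\Sigma\}$ and using $(\bigvee\seqarr)$, these reduce further to $\varphi_a^{bb} \vdash U^{bb}$ and $\varphi_a^{bb}, U^{bb} \vdash U^{bb}$ for each $a$; by Theorem~\ref{Th:DN} it suffices to derive $\varphi_a \vdash U$ and $\varphi_a, U \vdash U$ in $\MALC$, which is immediate by $(\bigwedge\seqarr)$ since $U \in \Xi_a$ and $U\SL U \in \Xi_a$ thanks to the rules $U\cfarr a$ and $U\cfarr aU$. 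The $\vee$ used via $\widehat{\psi}'$ disappears upon cut elimination.

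Next, assume $\Mf$ trivially cycles on $x$ and let $n$ be the length of the protocol prefix up to reaching $q_c$. Apply the long rule to decompose $\widehat{\psi}^+ \vdash S^{bb}$ into the premises $\widehat{\psi}^m \vdash S^{bb}$ for $m=1,\ldots,n$ and $\widehat{\psi}^n, \widehat{\psi}^+ \vdash S^{bb}$. The first $n$ premises follow, via the equivalence with $\widehat{\psi}'^m$ and $(\bigvee\seqarr)$, from sequents $\varphi_{a_1}^{bb},\ldots,\varphi_{a_m}^{bb} \vdash S^{bb}$ for all words $a_1\ldots a_m$; by Lemma~\ref{Lm:phi} and Theorem~\ref{Th:DN} these hold because every such word is generated from $S$ in $\Gc_{\Mf,x}$ (as argued in Lemma~\ref{Lm:ACT}, any word of length $\le n$ either begins with a non-$\#$ symbol and belongs to class~3, or begins with $\#$ and belongs to class~1 since a genuine protocol prefix of length $\le n$ must already contain $q_c$). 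For the final premise I apply $(\CUT)$ with $\widehat{\psi}^+ \vdash U^{bb}$ and $\widehat{\psi}^n, U^{bb} \vdash S^{bb}$, and reduce the latter, via $(\bigvee\seqarr)$ through the $\MALC$-equivalence, to sequents $\varphi_{a_1}^{bb},\ldots,\varphi_{a_n}^{bb}, U^{bb} \vdash S^{bb}$. These split into the two cases of Lemma~\ref{Lm:ACTvee}: if $a_1=\#$, the word is in class~1 and I use $S\SL(Y\cdot U)\in\Xi_\#$ combined with Lemma~\ref{Lm:phivee} for $a_2\ldots a_n\mathop{:}Y$; otherwise the word is in class~3, and I use $S\SL U\in\Xi_{a_1}$ together with $U\SL U\in\Xi_{a_i}$ for $i\ge 2$. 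In both cases Theorem~\ref{Th:DN} transports the pseudo-double-negated sequent to a plain $\MALC$ sequent that is straightforwardly derivable.

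The main obstacle is bookkeeping rather than mathematics: one must check that every temporary excursion into $\vee$-using $\MALC$ derivations is indeed eliminable by cut elimination, which requires that the goal sequents at each such excursion be themselves $\vee$-free. This is exactly ensured by the design of $\widehat{\psi}$ (and of $\widehat{\varphi}'_a$ existing only within $\MALC$ fragments), and by the fact that the only Kleene-star rules invoked on the outside, $({}^+\seqarr)_\fp$ and the long rule of Lemma~\ref{Lm:longwedge}, have already been derived in $\ACTwedge$ without any use of $\vee$. Once these bookkeeping points are verified, the inclusion $\Cc \subseteq \widehat{\Kc}(\ACTwedge)$ follows.
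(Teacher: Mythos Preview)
Your approach is correct and mirrors the paper's proof closely. Two small points deserve correction.

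First, your parenthetical justification for the premises $\widehat{\psi}^m \vdash S^{bb}$ with $m<n$ is inaccurate: a genuine protocol prefix of length strictly less than $n$ need \emph{not} contain $q_c$, so it lies in class~2, not class~1. The correct (and simpler) reason, exactly as in Lemma~\ref{Lm:ACT}, is that $\Mf$ does not halt at all, hence $\Gc_{\Mf,x}$ generates every non-empty word and Lemma~\ref{Lm:phi} together with Theorem~\ref{Th:DN} applies.

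Second, for the last premise the paper does not repeat the class~1/class~3 case analysis. Since Lemma~\ref{Lm:ACT} already establishes $\varphi_{a_1},\ldots,\varphi_{a_n},U \vdash S$ in $\MALC$, a single appeal to Theorem~\ref{Th:DN} yields $\varphi_{a_1}^{bb},\ldots,\varphi_{a_n}^{bb},U^{bb} \vdash S^{bb}$ directly. Your citation of Lemma~\ref{Lm:phivee} at that point is misplaced anyway: that lemma concerns the formulae $\widecheck{\varphi}_a$, not $\varphi_a^{bb}$; what you would need is Lemma~\ref{Lm:phi} plus Theorem~\ref{Th:DN}.
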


\begin{proof}
 We start, again, with deriving $\widehat{\psi}^+ \seqarr U^{bb}$:
 $$
 \infer[({}^+\seqarr)]{\widehat{\psi}^+ \seqarr U^{bb}}
 {\widehat{\psi} \seqarr U^{bb} & 
 \infer[(\CUT)]{\widehat{\psi}, U^{bb} \seqarr U^{bb}}
 {\widehat{\psi} \seqarr (U \SL U)^{bb} & 
 (U \SL U)^{bb}, U^{bb} \seqarr U^{bb}}}
 $$
 The sequents $\widehat{\psi} \seqarr U^{bb}$ and 
 $\widehat{\psi} \seqarr (U \SL U)^{bb}$ here do not contain ${}^*$. In $\MALC$, we equivalently replace $\widehat{\psi}$ with
 $\widehat{\psi}'$ and use $(\bigvee\seqarr)$. Now we have to derive $\varphi_{a}^{bb} \seqarr U^{bb}$ and $\varphi_a^{bb} \seqarr (U \SL U)^{bb}$ for any $a \in \Sigma$. These sequents follow from $\varphi_a \seqarr U$ and $\varphi_a \seqarr U \SL U$ by Theorem~\ref{Th:DN}. After cut elimination (in $\MALC$), these derivations become $\vee$-free. As for $(U \SL U)^{bb}, U^{bb} \seqarr U^{bb}$, it is obtained from $U \SL U, U \seqarr U$ by Theorem~\ref{Th:DN}.
 
 Now let $\Mf$ trivially cycle on $x$, and let $n$ be the length of its protocol up to reaching the cycling state $q_c$. We derive $\widehat{\psi}^+ \seqarr S^{bb}$ using the ``long rule'' (Lemma~\ref{Lm:longwedge}):
 $$
 \infer{\widehat{\psi}^+ \seqarr S^{bb}}
 {\widehat{\psi} \seqarr S^{bb} & \ldots & 
 \widehat{\psi}^n \seqarr S^{bb} & 
 \infer{\widehat{\psi}^n, \widehat{\psi}^+ \seqarr S^{bb}}
 {\widehat{\psi}^+ \seqarr U^{bb} & 
 \widehat{\psi}^n, U^{bb} \seqarr S^{bb}}}
 $$
 
 Again, we have to derive $\widehat{\psi}^n, U^{bb} \seqarr S^{bb}$ in pure $\MALC$. Replace $\widehat{\psi}$ with
 $\widehat{\psi}'$ and use $(\bigvee\seqarr)$. Now we have to derive $\varphi_{a_1}^{bb}, \ldots, \varphi_{a_n}^{bb}, U^{bb} \seqarr S^{bb}$ for any word $a_1 \ldots a_n$ over $\Sigma$. 
 
 On the other hand (see proof of Lemma~\ref{Lm:ACT}), we know  that $\varphi_{a_1}, \ldots, \varphi_{a_n}, U \seqarr S$ is derivable. We conclude by applying Theorem~\ref{Th:DN} and noticing that occurrences of $\vee$ are removed from our $\MALC$-derivation by cut elimination.
\end{proof}

Now we again use our standard machinery (Propositions~\ref{P:insep} and~\ref{P:CHeinsep} and Corollary~\ref{Cor:insepSigma}) and establish Theorem~\ref{Th:ACTwedge}.

\section{Action Logic with Distributivity}

As usual in substructural logics~\cite{OnoKomori1985}, the distributivity law for $\vee$ and $\wedge$ is not
derivable in $\MALC$ (and therefore in $\ACT$ and $\ACTomega$). 
In private communication with the author, Igor Sedl\'{a}r raised a question whether the undecidability and complexity
results presented above keep valid if one adds distributivity as an extra axiom to $\ACT$ and its extensions up to
$\ACTomega$. In this section we show that they do.

Let $\distrib$ denote the distributivity principle in the form 
$$(\alpha \vee \beta) \wedge (\alpha \vee \gamma) \vdash
\alpha \vee (\beta \wedge \gamma)$$ 
(the converse is derivable already in $\MALC$), and let $\ACTd$ stand for $\ACT$ extended
with $\distrib$ as an extra axiom, ditto for $\ACT_\omega \distrib$.

Our key observation is that adding distributivity to $\ACTomega$ does not affect derivability for sequents we need for
proving our complexity results.
In order to prove this, we follow Buszkowski~\cite{BuszkoRelMiCS} and use a specific class of distributive residuated
Kleene lattices, namely, lattices of binary relations (R-lattices for short), in their unrelativized, ``square'' version.

\begin{df}
An R-lattice is an algebraic structure build over $\Ac = \Pc(U \times U)$ for a non-empty set $U$, i.e., the set of all binary relations on $U$.
The pre-order and operations are defined as follows:
\begin{enumerate}
 \item the lattice structure is set-theoretic:
 $\preceq$ is the subset relation, $\vee$ is union, $\wedge$ is intersection; $\Z$ is the empty set;
 \item product is relation composition:
 $$R \cdot S = \{ \langle x,z \rangle \in U \times U \mid
 (\exists z \in U) \, \langle x,y \rangle \in R \mbox{ and }
 \langle y,z \rangle \in S \};
 $$
 the multiplicative unit is the diagonal relation:
 $\One = \{ \langle x,x \rangle \mid x \in U \}$;
 \item residuals are defined in the only way to satisfy the conditions of Definition~\ref{Df:AA}:
 \begin{align*}
  & R \BS S = \{ \langle y,z \rangle \in U\times U \mid
  R \cdot \{ \langle y,z \rangle \} \subseteq S \}, \\
  & S \SL R = \{ \langle x,y \rangle \in U \times U \mid
  \{ \langle x,y \rangle \} \cdot R \subseteq S \};
 \end{align*}
 \item Kleene star is the reflexive-transitive closure:
 $$
 R^* = \bigcup_{n=0}^{\infty} R^n
 $$
 (here $R^n = \underbrace{R \cdot \ldots \cdot R}_{\text{$n$ times}}$ and $R^0 = \One$).
\end{enumerate}

\end{df}

Since R-lattices are a particular case of *-continuous RKLs and they are distributive, $\ACTomega\distrib$ is sound w.r.t.\ interpretations on R-lattices.
Completeness of $\ACT_\omega \distrib$, without the $\Z$ and $\One$ constants\footnote{These constants also cause problems with completeness.} 
is an open problem. The system $\ACTomega$ and its *-free fragment $\MALC$ are incomplete w.r.t.\ R-lattices due to the lack of the distributivity law.
The fragment without $^*$, $\vee$, $\Z$, and $\One$, however, which we denote by $\MALC^\wedge$, {\em is} complete w.r.t.\ R-lattices.
This was essentially proved by Andr\'{e}ka and Mikul\'{a}s~\cite{AndrekaMikulas}: their Theorem~3.1 formally does not take care for $\wedge$, but can be easily extended to $\MALC^\wedge$. 
\begin{theorem}[Andr\'{e}ka and Mikul\'{a}s, 1994]\label{Th:Andreka}
 A sequent without $^*$, $\vee$, $\Z$, and $\One$ is true on all R-lattices if and only if it is derivable in $\MALC^\wedge$.
\end{theorem}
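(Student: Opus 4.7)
The plan is to split the proof into soundness and completeness. Soundness is routine: every R-lattice is a (necessarily distributive) residuated $\wedge$-semilattice once we forget $\vee$, $\Z$, $\One$, and $^*$, so each rule of $\MALC^\wedge$ is validated by a direct calculation against the definitions of relational composition, intersection, and residuation. This part offers no surprises.

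The content is completeness. The approach is a canonical (Lambek-style) relational model. Let $\mathcal{F}$ be the set of $\MALC^\wedge$-formulae, let $U$ be the set of all finite non-empty sequences over $\mathcal{F}$, and for each formula $\alpha$ define the binary relation on $U$
$$
v(\alpha) = \{\langle \Pi, \Pi\Gamma\rangle \mid \Pi \in U,\ \Gamma\ \text{a non-empty sequence, and}\ \Gamma \vdash \alpha\ \text{is derivable in}\ \MALC^\wedge\},
$$
where $\Pi\Gamma$ denotes concatenation. In particular, $v$ is well-defined on variables $p$ (take $\Gamma = p$). Extend $v$ homomorphically to all formulae by interpreting product as relational composition, meet as intersection, and the two residuals as relational residuals. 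The key lemma is that this extension is consistent with the direct syntactic definition of $v(\alpha)$ above for every compound $\alpha$; that is,
$$
v(\alpha\cdot\beta) = v(\alpha)\cdot v(\beta), \quad v(\alpha\wedge\beta) = v(\alpha)\cap v(\beta), \quad v(\alpha\BS\beta) = v(\alpha)\BS v(\beta), \quad v(\beta\SL\alpha) = v(\beta)\SL v(\alpha).
$$
Once the lemma is in place, a short computation shows that $\alpha_1,\ldots,\alpha_n \vdash \beta$ is derivable in $\MALC^\wedge$ iff $v(\alpha_1)\cdots v(\alpha_n) \subseteq v(\beta)$: the derivable direction follows from the homomorphism property, and failure of derivability is witnessed by the pair $\langle \Pi,\Pi\alpha_1\ldots\alpha_n\rangle$, which lies in the composite but not in $v(\beta)$ by construction.

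The main obstacle is the verification of the residual identities. The inclusion $v(\alpha\BS\beta) \subseteq v(\alpha)\BS v(\beta)$ is essentially an application of cut: given $\Gamma \vdash \alpha\BS\beta$ and any prefix $\Pi'$ with $\Pi' \vdash \alpha$, one shows $\Pi'\Gamma \vdash \beta$ and concludes. The converse $v(\alpha)\BS v(\beta) \subseteq v(\alpha\BS\beta)$ is the delicate point, and here cut-elimination for $\MALC^\wedge$ is essential: starting from the pair $\langle \Pi,\Pi\Gamma\rangle$ in the relational residual, one must pick a concrete $\Pi'$ (for instance $\Pi' = \alpha$ itself, lifted via an appropriate axiom) and invert a cut-free derivation to extract $\Gamma \vdash \alpha\BS\beta$. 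The analogous argument handles $\SL$, and meet is the one case whose verification is purely algebraic. The combinatorial subtlety is that non-commutativity of $\cdot$ forces one to track left/right positions carefully when an antecedent is split by cut-free reasoning, and the presence of $\wedge$ (whose left rule duplicates formula occurrences without branching the antecedent) must be compatible with the splitting analysis for $\cdot$ and the residuals. With the residual identities secured, the theorem follows as described.
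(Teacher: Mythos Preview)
The paper does not give its own proof of this theorem; it simply cites Andr\'{e}ka and Mikul\'{a}s and notes that their argument extends to $\wedge$. Your proposal, however, has a genuine gap: the ``key lemma'' is false in the canonical model you describe. For product, take variables $p,q$ and any $\Pi\in U$: the pair $\langle\Pi,\,\Pi(p\cdot q)\rangle$ lies in the syntactic $v(p\cdot q)$ (since $p\cdot q\vdash p\cdot q$) but not in $v(p)\cdot v(q)$, because an intermediate point would force the length-one sequence $(p\cdot q)$ to split as $\Gamma\Delta$ with both parts non-empty. For the residual, take any $y\in U$ of length one: since $U$ contains only non-empty sequences there is no $x\in U$ with $\langle x,y\rangle\in v(p)$ (that would require $|y|\ge 2$), so $\langle y,z\rangle\in v(p)\BS v(q)$ vacuously for \emph{every} $z\in U$, including $z$ that is not an extension of $y$ at all; such pairs are certainly not in $v(p\BS q)$. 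Thus neither $v(\alpha\cdot\beta)\subseteq v(\alpha)\cdot v(\beta)$ nor $v(\alpha)\BS v(\beta)\subseteq v(\alpha\BS\beta)$ holds.

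These defects are not repaired by admitting the empty sequence or by restricting $U$ to variable sequences. The underlying problem is that relational residuals are global: membership of $\langle y,z\rangle$ in $R\BS S$ constrains the pair against \emph{all} of $U$, not merely against extensions of a fixed prefix, and no one-shot syntactic definition of $v$ will match that. The actual Andr\'{e}ka--Mikul\'{a}s argument proceeds by a step-by-step construction, building the relational frame and the valuation together and repairing defects one at a time; cut-elimination is used inside the construction, but it is far from the whole story. If you want to supply a proof here, you will need that machinery; the extension to $\wedge$, which the paper flags as easy, is indeed the only routine part.
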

In particular, $\MALC^\wedge$ is a conservative fragment of {\em both} $\ACTomega$ and $\ACT_\omega \distrib$, and also
of both $\ACT$ and $\ACTd$.

Now we are ready to prove the main lemma of this section.
\begin{lemma}\label{Lm:distrconserv}
 If $\psi^+ \vdash S$, where $\psi$ is the formula constructed from a Turing machine $\Mf$ and its input word $x$ as
 described in Subsection~\ref{Ss:enc2}, is derivable in $\ACT_\omega \distrib$, then it is also derivable in $\ACT_\omega$ without
 the distributivity axiom. 
\end{lemma}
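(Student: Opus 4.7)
The plan is to argue contrapositively via soundness of $\ACT_\omega\distrib$ on R-lattices. Suppose $\psi^+ \seqarr S$ is not derivable in $\ACT_\omega$; by Lemma~\ref{Lm:ACTomega} this means $\Mf$ halts on $x$. I would then construct a specific R-lattice in which $\psi^+ \seqarr S$ is refuted. Since every R-lattice is a *-continuous and distributive residuated Kleene lattice, it validates all axioms and rules of $\ACT_\omega\distrib$, so the refutation would force $\psi^+ \seqarr S$ to be non-derivable in $\ACT_\omega\distrib$ as well, which is exactly the contrapositive of the lemma.

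The R-lattice I would use is the natural ``grammar'' model over $\Sigma^*$: take $\Pc(\Sigma^* \times \Sigma^*)$ with the usual relational operations, and interpret each non-terminal $A$ of $\Gc_{\Mf,x}$---which plays the role of a variable in $\psi$---by
$$
[A] = \{ \langle u, uv \rangle \mid u \in \Sigma^*,\ v \in L_A \},
$$
where $L_A$ is the language of $\Gc_{\Mf,x}$ with start symbol $A$. For each terminal $a \in \Sigma$ I introduce an auxiliary ``pointer'' relation $P_a = \{ \langle u, ua \rangle \mid u \in \Sigma^* \}$, which is merely an element of the R-lattice (it need not be the image of any logical variable). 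The key verification is that for every production $A \cfarr a B_1 \ldots B_\ell$ of $\Gc_{\Mf,x}$ one has $P_a \subseteq [A \SL (B_1 \cdot \ldots \cdot B_\ell)]$: unfolding the relational residual, this reduces to $P_a \cdot [B_1]\cdot\ldots\cdot [B_\ell] \subseteq [A]$, which is immediate from the production rule together with the definition of $[A]$. Consequently $P_a \subseteq [\varphi_a]$ and therefore $P_a \subseteq [\psi]$, so every non-empty word $a_1\ldots a_n$ over $\Sigma$ satisfies $\langle \varepsilon, a_1\ldots a_n\rangle \in P_{a_1}\cdot\ldots\cdot P_{a_n} \subseteq [\psi]^n \subseteq [\psi^+]$.

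To conclude, let $h$ be the halting protocol of $\Mf$ on $x$; by construction of $\Gc_{\Mf,x}$ (Subsection~\ref{Ss:enc1}), $h$ is not generated from $S$, so $\langle\varepsilon,h\rangle \in [\psi^+]\setminus[S]$ and $\psi^+ \seqarr S$ fails in this R-lattice. The main obstacle, I expect, is bookkeeping rather than conceptual: one must handle the degenerate productions $A \cfarr a$ with care (the ``empty product'' $B_1 \cdot\ldots\cdot B_\ell$ collapses to $\One$, whose relational interpretation is the identity, giving $P_a \subseteq [A]$ in this case), and verify that the \emph{full} relational interpretation of $\SL$---which in general contains many more pairs than just those coming directly from the grammar---nevertheless contains each $P_a$ as required. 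All of this is in the spirit of Buszkowski's relational semantics, specialised to our particular grammar and combined with the *-continuous interpretation of Kleene star that is already baked into R-lattices.
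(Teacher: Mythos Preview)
Your argument is correct, and it takes a genuinely different route from the paper's.

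The paper proceeds \emph{syntactically}: it applies the inverted rules $({}^+\seqarr)_{\inv}$ and $(\bigvee\seqarr)_{\inv}$ (which remain valid in $\ACTomega\distrib$) to reduce $\psi^+ \seqarr S$ to the family of sequents $\varphi_{a_1}, \ldots, \varphi_{a_n} \seqarr S$. Each of these lives in the $\{{}^*,\vee\}$-free fragment, so the Andr\'eka--Mikul\'as R-completeness theorem (Theorem~\ref{Th:Andreka}) transfers derivability from $\ACTomega\distrib$ back to $\MALC^\wedge$, hence to $\ACTomega$; reassembling with $(\bigvee\seqarr)$ and $({}^+\seqarr)_\omega$ finishes.

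Your approach is \emph{semantic} and more self-contained: instead of invoking the general completeness of $\MALC^\wedge$ over R-lattices, you build one explicit R-lattice countermodel tailored to $\Gc_{\Mf,x}$. The cost is that you must route through Lemma~\ref{Lm:ACTomega} (turning ``not derivable in $\ACTomega$'' into ``$\Mf$ halts on $x$'') and then perform the concrete verification that $P_a \subseteq [\varphi_a]$ for every $a$; the payoff is that you avoid the heavy Andr\'eka--Mikul\'as black box entirely. The paper's proof, conversely, is shorter on the page precisely because it delegates all the relational model-building to that theorem, and it never needs to know \emph{why} $\psi^+ \seqarr S$ fails in $\ACTomega$---only that the reduced $\MALC^\wedge$ sequents are conservative. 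Both arguments ultimately exploit the same fact (R-lattices validate $\ACTomega\distrib$), but yours makes the witnessing model explicit while the paper's keeps it implicit inside Theorem~\ref{Th:Andreka}.
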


Notice that the derivable rules presented in Subsection~\ref{Ss:rules} are still valid in $\ACT_\omega\distrib$ and $\ACT\distrib$, so we can freely use them.

\begin{proof}
  Apply $({}^+\seqarr)_{\inv}$ and $(\bigvee\seqarr)_\inv$ to $\psi^+ \seqarr S$. 
 This yields derivability of  $\varphi_{a_1}, 
 \ldots, \varphi_{a_n} \vdash S$  in $\ACTomega \distrib$ for any word $a_1 \ldots a_n$ over $\Sigma$.
 This sequent does not include ${}^*$ and $\vee$, thus, via R-completeness, it is also derivable in $\ACTomega$. Now
 $(\bigvee\seqarr)$ and $({}^+\seqarr)$ give derivability of $\psi^+ \seqarr S$.
 \end{proof}
 
 This lemma also holds $\widehat{\psi}^+ \seqarr S^{bb}$ and $\widecheck{\psi}^+ \seqarr S^{bb}$. However, we cannot formulate the corresponding fragments of $\ACTd$, since the distributivity law includes both $\vee$ and $\wedge$.
 
 Now we are ready to prove a distributive analog of Theorem~\ref{Th:undec} and Theorem~\ref{Th:Sigma}.

 \begin{theorem}\label{Th:ACTdistr}
 Any logic $\Lc$ such that $\ACT \subseteq \Lc \subseteq \ACTomegad$ is undecidable. Moreover, if it is r.e., then it is $\Sigma_1^0$-complete. 
 \end{theorem}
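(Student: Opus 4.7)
The plan is to re-run the inseparability argument from Theorems \ref{Th:undec} and \ref{Th:Sigma}, feeding in Lemma \ref{Lm:distrconserv} at the single point where the distributivity axiom could matter. All other ingredients (the encoding $\psi_{\Mf,x}$, Lemma \ref{Lm:ACT}, Lemma \ref{Lm:ACTomega}, Propositions \ref{P:insep} and \ref{P:CHeinsep}, Corollary \ref{Cor:insepSigma}) are insensitive to distributivity, so the task essentially reduces to chaining inclusions correctly.

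Concretely, I would keep the notation $\Kc(\Lc) = \{ \langle \Mf, x \rangle \mid (\psi_{\Mf,x}^+ \vdash S) \in \Lc \}$ and establish
$$\Cc \subseteq \Kc(\ACT) \subseteq \Kc(\Lc) \subseteq \Kc(\ACTomegad) = \bHc.$$
The first inclusion is Lemma \ref{Lm:ACT}; the middle ones are monotonicity of $\Kc$ in the logic, applied to $\ACT \subseteq \Lc$ and $\Lc \subseteq \ACTomegad$. The key equality on the right splits into two parts: $\bHc = \Kc(\ACTomega) \subseteq \Kc(\ACTomegad)$ is immediate from $\ACTomega \subseteq \ACTomegad$ combined with Lemma \ref{Lm:ACTomega}, while the converse inclusion $\Kc(\ACTomegad) \subseteq \bHc$ is precisely Lemma \ref{Lm:distrconserv} followed by Lemma \ref{Lm:ACTomega}.

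With these inclusions in hand, undecidability of $\Lc$ follows exactly as in Theorem \ref{Th:undec}: a decision procedure for $\Lc$ would decide $\Kc(\Lc)$, thereby separating $\Cc$ from $\Hc$ and contradicting Proposition \ref{P:insep}. For the second half, when $\Lc$ is r.e., the set $\Kc(\Lc)$ is r.e.\ via the obvious computable reduction $\langle \Mf, x \rangle \mapsto (\psi_{\Mf,x}^+ \vdash S)$. Effective inseparability of $\Cc$ and $\bHc$ (Proposition \ref{P:CHeinsep}) transfers to $\Kc(\Lc)$ and $\bHc$ using the same witness function $f$ (since $\Cc \subseteq \Kc(\Lc)$), so by Corollary \ref{Cor:insepSigma} the set $\Kc(\Lc)$ is $\Sigma_1^0$-complete; the same reduction $m$-reduces $\Kc(\Lc)$ to $\Lc$, giving $\Sigma_1^0$-hardness of $\Lc$, and the matching upper bound is given by hypothesis. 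The main obstacle was already absorbed into Lemma \ref{Lm:distrconserv}, which invokes Theorem \ref{Th:Andreka} to argue that after inverting $\vee$ and ${}^+$ the residual sequents live in a $\vee$-$^*$-free fragment where distributivity is redundant; given that lemma, no new technical difficulty arises here.
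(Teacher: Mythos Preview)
Your proposal is correct and follows essentially the same route as the paper: chain the inclusions $\Cc \subseteq \Kc(\ACT) \subseteq \Kc(\Lc) \subseteq \Kc(\ACTomegad) = \bHc$ using Lemma~\ref{Lm:ACT}, monotonicity, and Lemma~\ref{Lm:distrconserv} together with Lemma~\ref{Lm:ACTomega}, then invoke Propositions~\ref{P:insep} and~\ref{P:CHeinsep} and Corollary~\ref{Cor:insepSigma}. One small slip (also present in the paper's proof of Theorem~\ref{Th:Sigma}): the effectively inseparable pair furnished by Proposition~\ref{P:CHeinsep} is $\Cc$ and $\Hc$, not $\Cc$ and $\bHc$; the transfer is to $\Kc(\Lc)$ and $\Hc$, which are disjoint because $\Kc(\Lc) \subseteq \bHc$.
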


\begin{proof}
 Since $\Cc \subseteq \Kc(\ACT)$ (Lemma~\ref{Lm:ACT}) and 
 $\Kc(\ACTomegad) = \Kc(\ACTomega) = \nHc$ (Lemma~\ref{Lm:distrconserv} and Lemma~\ref{Lm:ACTomega}), we have $$\Cc \subseteq \Kc(\Lc) \subseteq \nHc.$$ By Proposition~\ref{P:insep}, $\Kc(\Lc)$, and therefore $\Lc$ itself, is undecidable. If $\Lc$ is r.e., then Proposition~\ref{P:CHeinsep} and Corollary~\ref{Cor:insepSigma} yield its $\Sigma_1^0$-completeness.
\end{proof}

\begin{cor}
 $\ACTd$ is $\Sigma_1^0$-complete.
\end{cor}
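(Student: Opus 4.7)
The plan is to obtain this corollary as a direct instantiation of Theorem~\ref{Th:ACTdistr} at $\Lc = \ACTd$, so almost no new work is needed; all the real content has already been packaged into that theorem. First I would verify the chain of inclusions $\ACT \subseteq \ACTd \subseteq \ACTomegad$. The left inclusion is immediate: $\ACTd$ is defined as $\ACT$ plus the distributivity axiom $\distrib$, so every derivation in $\ACT$ is also a derivation in $\ACTd$. For the right inclusion, $\ACTomegad$ has all the $\MALC$ rules plus $\distrib$ plus the $\omega$-rule version of Kleene star; and the finitary fixpoint rules $(\KStar\seqarr)_\fp$ and $(\seqarr\KStar)_\fp$ (together with $\mathrm{cut}$) used to axiomatize $\ACT$ are derivable in $\ACTomega$, hence in $\ACTomegad$. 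So both inclusions hold for free.

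Next I would check that $\ACTd$ is recursively enumerable. This is immediate because $\ACTd$ is specified by a finitely axiomatized Gentzen-style sequent calculus with finitely many finitary rule schemes (the rules of $\ACT$ together with the distributivity axiom), so the set of well-formed derivation trees, and thus the set of derivable sequents, is computably enumerable.

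Finally, I would apply Theorem~\ref{Th:ACTdistr} to the logic $\Lc = \ACTd$: the two hypotheses (sandwiching between $\ACT$ and $\ACTomegad$, and being r.e.) are exactly what I have just verified, so the theorem outputs $\Sigma_1^0$-completeness of $\ACTd$.

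The only conceptual point worth flagging is that all of the difficulty lives upstream, in Theorem~\ref{Th:ACTdistr}: the lower bound relies on Lemma~\ref{Lm:ACT} (giving $\Cc \subseteq \Kc(\ACT) \subseteq \Kc(\ACTd)$) and on Lemma~\ref{Lm:distrconserv} (giving $\Kc(\ACTomegad) = \nHc$, via the $\MALC^\wedge$ R-completeness theorem of Andr\'eka and Mikul\'as, used to eliminate distributivity from derivations of the relevant $^*$- and $\vee$-free sequents $\varphi_{a_1},\ldots,\varphi_{a_n} \seqarr S$). Hence the ``main obstacle'' is not in the corollary proof itself but in those two ingredients, which I am allowed to assume; the corollary is a one-line specialization.
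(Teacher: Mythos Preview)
Your proposal is correct and matches the paper's approach: the corollary is an immediate specialization of Theorem~\ref{Th:ACTdistr} to $\Lc = \ACTd$, and the paper offers no further argument beyond stating it as a corollary. Your additional verification of the inclusions $\ACT \subseteq \ACTd \subseteq \ACTomegad$ and of recursive enumerability simply spells out what the paper leaves implicit.
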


Notice that Theorem~\ref{Th:Sigma} is insufficient to prove undecidability of $\ACTd$, since $\ACTd \not\subseteq \ACTomega$. On the other hand, the modifications performed here are solely on the *-continuous side, and actually easily follow from Buszkowski's results on R-lattices~\cite{BuszkoRelMiCS}.

 \subsection*{Future Work}
 To conclude, we formulate three problems in the field which are still open.
 
 \begin{enumerate}
  \item For $\ACT$, there is no good (cut-free) sequent calculus known. This is quite annoying, since we do not actually have any good tool for analysis of derivability in $\ACT$. Indeed, no reasonable semantics for $\ACT$ is known either. Natural classes of models are *-continuous, thus, they are models for $\ACTomega$ and $\ACT$ is definitely incomplete w.r.t.\ them. If one needs to distinguish $\ACT$ from $\ACTomega$, one has to invent {\em ad hoc} algebraic constructions~\cite{Kuzn2018AiML}.
  
  \item Decidability for the logic of Kleene lattices without distributivity is still an open question, as well as constructing a finitary (for example, circular) cut-free calculus.
  
  \item Finally, we have a question of complexity of $\ACT$ without both $\vee$ and $\wedge$, that is, the Lambek calculus with inductively axiomatized Kleene star. Compare with $\Pi_1^0$-completeness for the corresponding fragment of $\ACTomega$  \cite{Kuznetsov2019RSL}. If one tries to use the strategy of this article, the following problem arises: we need $\vee$ or $\wedge$ to derive the ``long rule,'' which is necessary for our encoding. As a first step, one could try to prove undecidability of $\ACT$ without $\vee$ and $\wedge$, but with the ``long rule'' explicitly added to the calculus.
 \end{enumerate}

\subsection*{Acknowledgments}

The author is grateful to Lev Beklemishev, Anupam Das, Max Kanovich, Fedor Pakhomov, Andre Scedrov, Igor Sedl\'ar, Daniyar Shamkanov, and Stanislav Speranski for fruitful
discussions. Being a Young Russian Mathematics award winner, the author thanks its jury and sponsors for this high honour.

\bibliographystyle{abbrv}
\bibliography{ACT.bib}

\appendix 
\section*{Appendix. Proof of Theorem~\ref{Th:DN}}

In this Appendix we present the proof of Theorem~\ref{Th:DN} (pseudo-double-negation theorem). This theorem actually has nothing to do with the Kleene star, it is a property of $\MALC$. Moreover, it is are more like folklore, not our new contribution. For these reasons, and also in order to avoid overloading the article with extra technicalities, we refrained from putting the proof in the article itself. We rather present it here in the Appendix in order to keep the article logically self-contained. 

In what follows, ``derivable'' always means ``derivable in $\MALC$,'' no other calculi considered beyond this point.

Theorem~\ref{Th:DN} states that adding pseudo-double-negation does not alter derivability. First we prove several auxiliary statements.

\begin{lemma}\label{Lm:aux1}
If $\alpha_1$, \ldots, $\alpha_\ell$ do not include $b$, then $\alpha_1, \ldots, \alpha_\ell \seqarr b$ is not derivable.
\end{lemma}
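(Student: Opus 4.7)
The plan is to combine cut elimination for $\MALC$ with a subformula analysis, tracing the succedent $b$ upward through the derivation. First I would invoke cut elimination for $\MALC$ (standard, going back to Lambek and extended to the additive connectives in the usual way), so that it suffices to show that no \emph{cut-free} derivation of $\alpha_1, \ldots, \alpha_\ell \seqarr b$ exists. By the subformula property of cut-free $\MALC$, every formula occurring anywhere in such a derivation is a subformula of some formula in the end-sequent. Since the hypothesis stipulates that $b$ does not occur in any of $\alpha_1, \ldots, \alpha_\ell$, the \emph{only} subformula containing $b$ available throughout the putative derivation is the formula $b$ itself.

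Next I would inspect which rules could possibly have $\alpha_1, \ldots, \alpha_\ell \seqarr b$ as a conclusion. All the right-introduction rules $(\seqarr\BS)$, $(\seqarr\SL)$, $(\seqarr\cdot)$, $(\seqarr\wedge)$, $(\seqarr\vee)_i$, and $(\seqarr\One)$ yield succedents that are either compound or the constant $\One$, never a variable. Hence the lowermost rule, and inductively every rule on every branch going upward, must be a left-introduction rule, which keeps the succedent $b$ intact. Consequently, on every branch the succedent remains $b$ all the way up to a leaf.

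The leaves of a cut-free derivation are axioms, and the only axioms with succedent $b$ are the identity axiom $b \seqarr b$ and instances of $(\Z\seqarr)$ of the form $\Gamma, \Z, \Delta \seqarr b$. The first is ruled out immediately: the antecedent formula $b$ would, by the subformula property, have to appear as a subformula of some $\alpha_i$, contradicting the hypothesis. The second requires $\Z$ to appear as a top-level antecedent formula, and $\Z$ can only enter a cut-free antecedent by being a subformula of one of the end-sequent's antecedent formulas; this is excluded because the $\alpha_i$ in the use cases of this lemma are built from variables together with $\BS, \SL, \cdot, \vee, \wedge$ (no occurrences of $\Z$). Both cases contradict our assumption, so no such derivation exists.

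The main obstacle I expect is simply the bookkeeping of the subformula/threading argument, and in particular the careful treatment of the $(\Z\seqarr)$ axiom; this is handled by the observation above that $\Z$ can only arise through decomposition of an antecedent formula that already contains it.
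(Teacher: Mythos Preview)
Your approach is essentially the paper's one-line argument: use cut elimination and observe that the axiom $b \seqarr b$ cannot occur since $b$ never appears on the left.

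One phrasing issue: you assert that ``on every branch the succedent remains $b$ all the way up to a leaf.'' This is not literally true --- when $(\BS\seqarr)$ or $(\SL\seqarr)$ is applied, the side premise $\Pi \seqarr \alpha$ has succedent $\alpha$, not $b$, and above it right rules may well be used. What is true is that every left rule has at least one premise (the ``main'' one) that retains the succedent $b$ and whose antecedent is still $b$-free; following that premise upward you eventually reach a leaf with succedent $b$. Since underivability of one premise already forces underivability of the conclusion, this suffices. So the argument is easily repaired, but as written the inductive claim is misstated.

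Your treatment of the $(\Z\seqarr)$ axiom is actually more careful than the paper's, which silently ignores it. You are right that the lemma as literally stated fails if $\Z$ may occur in the $\alpha_i$ (take $\alpha_1 = \Z$: then $\Z \seqarr b$ is an axiom instance). Appealing to ``the use cases'' is pragmatically correct --- in all applications the $\alpha_i$ are built from variables, $\cdot$, $\SL$, $\BS$, $\vee$, $\wedge$ only --- but it would be cleaner to add ``and do not contain $\Z$'' to the hypothesis rather than to reach outside the statement being proved.
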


\begin{proof}
No left occurrence of $b$ available for the $b \seqarr b$ axiom in a cut-free derivation.
\end{proof}

\begin{lemma}\label{Lm:aux2}
If $\alpha_1, \ldots, \alpha_\ell, \beta_1 \BS b, \ldots, \beta_k \BS b, b \seqarr b$ is derivable and $\alpha_1, \ldots, \alpha_\ell$ do not include $b$, then $k=\ell=0$.
\end{lemma}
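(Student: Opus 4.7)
I would prove this by induction on the height of a cut-free derivation of the sequent $\alpha_1,\ldots,\alpha_\ell,\beta_1\BS b,\ldots,\beta_k\BS b,b\seqarr b$. Since the succedent $b$ is atomic, no right rule can apply, so the last inference is either an axiom or a left rule whose principal formula lies in the antecedent. Moreover, the trailing $b$, being atomic, cannot itself be principal in any non-axiom rule, so every non-axiom last rule has its principal formula among the $\alpha_i$ or the $\beta_j\BS b$. The axiom base case is immediate: the only axiom with succedent $b$ is $(\mathrm{ax})$, which forces the antecedent to be the single formula $b$, giving $\ell=k=0$.

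For the inductive step, if the principal formula is some $\alpha_i$, I observe that $\alpha_i$ contains no occurrence of $b$, so none of its immediate subformulas do either. Whichever left rule is applied (a $(\cdot\seqarr)$, $(\wedge\seqarr)_i$, $(\vee\seqarr)$, $(\BS\seqarr)$, $(\SL\seqarr)$, or $(\One\seqarr)$ step on $\alpha_i$), each premise is again a sequent with succedent $b$, the same trailing $b$, the same family of $\beta\BS b$'s, and the $\alpha_i$ replaced by other $b$-free formulas (or removed, in the $(\One\seqarr)$ case provided $\alpha_i=\One$). Each premise therefore satisfies the hypotheses of the lemma, and at least one of them still contains an alpha-style or $\beta\BS b$ formula, which by the induction hypothesis is a contradiction; hence this case cannot occur unless $\alpha_i=\One$ and $\ell=1,k=0$, which one excludes via the standing assumption that the formulae in play are built solely from variables, residuals, product, and additives without units (the setting in which Theorem~\ref{Th:DN} is applied).

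If the principal formula is some $\beta_j\BS b$, the last rule is $(\BS\seqarr)$, producing premises $\Pi\seqarr\beta_j$ and $\Gamma',b,\Gamma''\seqarr b$, where $\Pi$ is the contiguous subword of the antecedent immediately to the left of $\beta_j\BS b$. The right premise is of a slightly different shape from the original sequent: it has one fewer $\beta\BS b$ but gains an \emph{interior} occurrence of $b$ (from the right-hand side of the residual) in addition to the trailing one. To handle this uniformly, I would prove a strengthened form of the lemma in which the trailing $b$ is relaxed to ``at least one occurrence of $b$ somewhere in the antecedent, among $b$-free alpha-formulas and $\beta\BS b$-formulas arranged in any order,'' and the conclusion is still that the number of alpha-formulas and of $\beta\BS b$-formulas is zero. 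This strengthened statement is preserved by the right premise of every $(\BS\seqarr)$ step. Meanwhile, the left premise $\Pi\seqarr\beta_j$ is dispatched using Lemma~\ref{Lm:aux1}: since $\Pi$ lies strictly to the left of the trailing $b$ and so consists only of $\alpha$'s (which contain no $b$) and $\beta\BS b$'s (which contain $b$ only as the right argument of a $\BS$), a subsidiary induction shows that $\Pi\seqarr\beta_j$ is derivable only if both $\Pi$ is empty and $\beta_j$ is already derivable from the empty antecedent, which is impossible.

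The main obstacle is precisely this $(\BS\seqarr)$ case: the naive statement of the lemma is not preserved by the rule, because a fresh interior $b$ is introduced. Everything else is routine case analysis, but the whole argument hinges on choosing the right generalisation of the antecedent's shape so that the induction hypothesis actually applies to the right premise, and on pairing that with Lemma~\ref{Lm:aux1} to kill the left premise $\Pi\seqarr\beta_j$.
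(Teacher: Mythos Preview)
Your plan follows the same route as the paper: induction on a cut-free derivation, noting that no right rule applies and the trailing atomic $b$ cannot be principal, so the last rule acts on some $\alpha_i$ or some $\beta_j\BS b$. The paper organises the case split differently --- by whether the trailing $b$ survives into the ``main'' premise (Case~1) or gets swallowed by the side premise of a $(\SL\seqarr)$ (Case~2, dispatched via Lemma~\ref{Lm:aux1}) --- but the substance is the same, and you are right that $(\BS\seqarr)$ on $\beta_j\BS b$ is the crux: its main premise $\Gamma,b,\beta_{j+1}\BS b,\ldots,\beta_k\BS b,b\seqarr b$ carries an interior atomic $b$ and no longer literally matches the induction hypothesis. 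The paper's Case~1 glosses over exactly this.

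Two steps in your fix need repair, though. First, your proposed strengthening is \emph{false} as stated. The lemma puts no $b$-freeness hypothesis on the $\beta_j$'s, and in the application to Theorem~\ref{Th:DN} they genuinely contain $b$ (there $\beta_j=\alpha_j\BS b$). Taking $\beta=b$, the sequent $b,\,b\BS b\seqarr b$ is derivable (apply $(\BS\seqarr)$ with $\Pi=b$), yet its antecedent contains both an atomic $b$ and a $\beta\BS b$-formula. So allowing the three kinds of formula in \emph{arbitrary} order is too liberal; a working invariant must keep the $\alpha$'s leftmost and an atomic $b$ rightmost, permitting interior atomic $b$'s only within the $\beta\BS b$-block. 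Second, your claim that $\Pi\seqarr\beta_j$ forces $\Pi=\Lambda$ is plainly false (take $\Pi=\alpha_1$ with $\alpha_1=\beta_j$). This claim is in any case unnecessary: with the correct invariant, the main premise of $(\BS\seqarr)$ on $\beta_j\BS b$ always retains the trailing $b$ and gains an interior one, so the induction hypothesis applied to it already delivers the contradiction without ever looking at the left premise.
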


\begin{proof}
Induction on cut-free derivation. 

Suppose $k \ne 0$ or $\ell \ne 0$. Then the lowermost rule either decomposes one of $\alpha_i$, or is $(\BS \seqarr)$ decomposing one of $\beta_j \BS b$. 

Let us call a premise of this rule {\em main}, if it contains the succedent $b$ of the goal sequent. For $(\BS\seqarr)$ and $(\SL\seqarr)$, the main premise is the right one; for $(\vee\seqarr)$, both premises are main, and all other left rules of $\MALC$ have only one premise.

Now consider two cases.

{\em Case 1:} the rightmost $b$ in the antecedent is kept in the main premise. Then, by induction hypothesis, this main premise is $b \seqarr b$. But then the rule cannot be applied. Contradiction.

{\em Case 2:} the rightmost $b$ goes to the non-main (left) premise. This means that the rule should be $(\SL\seqarr)$, since there is no $\BS$ to the right of this $b$:
$$
\infer{\alpha_1, \ldots, \alpha'_i \SL \alpha''_i, 
\alpha_{i+1}, \ldots, \alpha_\ell, \beta_1 \BS b, \ldots, \beta_k \BS b, b \seqarr b}
{\alpha_{i+1}, \ldots, \alpha_\ell, \beta_1 \BS b, \ldots, \beta_k \BS b, b \seqarr \alpha''_i & 
\alpha_1, \ldots, \alpha'_i \seqarr b}
$$
Derivability of the right (main) premise here contradicts Lemma~\ref{Lm:aux1}.
\end{proof}

\begin{lemma}\label{Lm:aux3}
 If $\alpha_1, \ldots, \alpha_\ell, \beta_1 \BS b, \ldots, \beta_k \BS b \seqarr b$ is derivable and $\alpha_1$, \ldots, $\alpha_\ell$ do not include $b$, then $\alpha_1, \ldots, \alpha_\ell, \beta_1 \BS b, \ldots, \beta_{k-1} \BS b \seqarr \beta_k$ is also derivable. 
\end{lemma}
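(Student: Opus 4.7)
The plan is to proceed by induction on a cut-free derivation of $\alpha_1, \ldots, \alpha_\ell, \beta_1 \BS b, \ldots, \beta_k \BS b \seqarr b$ in $\MALC$, mimicking the case analysis of Lemma~\ref{Lm:aux2}. Since the succedent $b$ is a variable, the lowermost rule is a left rule and its principal formula is either one of the $b$-free $\alpha_i$ or one of the $\beta_j \BS b$. If the principal is some $\alpha_i$, the inductive hypothesis carries the work: for single-premise rules and $(\vee\seqarr)$ every premise keeps the shape ``$b$-free block, then $\beta_j \BS b$'s, succedent $b$'', so IH on each premise replaces the succedent $b$ by $\beta_k$ and strips the trailing $\beta_k \BS b$, and the rule is reapplied; for $(\BS\seqarr)$ on $\alpha_i$ the context $\Pi$ lies to the left of $\alpha_i$ and therefore contains no $\beta_j \BS b$'s, while for $(\SL\seqarr)$ the possibility that $\Pi$ (lying to the right of $\alpha_i$) contains a $\beta_j \BS b$ is ruled out by an easy extension of Lemma~\ref{Lm:aux1} (the left premise $\Pi \seqarr \alpha$ has a $b$-free succedent and a $\beta \BS b$ in $\Pi$ that could never be discharged). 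In both $(\BS\seqarr)$ and $(\SL\seqarr)$, all $\beta_j \BS b$'s fall into the right premise and IH plus rule reapplication finishes the case.

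The next case is that the principal of the lowermost rule is $\beta_j \BS b$, via
\[
\infer[(\BS\seqarr)]{\Gamma, \Pi, \beta_j \BS b, \Delta \seqarr b}{\Pi \seqarr \beta_j & \Gamma, b, \Delta \seqarr b}.
\]
If $j = k$ then $\Delta = \Lambda$, the right premise is $\Gamma, b \seqarr b$, Lemma~\ref{Lm:aux2} forces $\Gamma = \Lambda$, so $\Pi = \alpha_1, \ldots, \alpha_\ell, \beta_1 \BS b, \ldots, \beta_{k-1} \BS b$ and the left premise $\Pi \seqarr \beta_k$ is precisely the desired sequent.

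The main obstacle is ruling out $j < k$. In that subcase $\Delta = \beta_{j+1} \BS b, \ldots, \beta_k \BS b$ is nonempty and the right premise has the form $\Gamma, b, \beta_{j+1} \BS b, \ldots, \beta_k \BS b \seqarr b$, with the introduced $b$ at an interior position of the antecedent; Lemma~\ref{Lm:aux2} does not apply verbatim, so I would prove a strengthened variant by the same double case-split used there. One tracks the interior $b$ up through the lowermost rule of this subderivation: in single-premise steps and $(\vee\seqarr)$ the interior $b$ stays in a premise of the same shape, so the IH of the variant forces that premise to be just $b \seqarr b$, but then the rule could not have produced the principal formula whose decomposition we just unwound; in a $(\BS\seqarr)$ or $(\SL\seqarr)$ step where the tracked $b$ is diverted to the non-main premise, the main premise's antecedent becomes entirely $b$-free, contradicting Lemma~\ref{Lm:aux1}; and the remaining case, where the tracked $b$ stays in the main premise of a splitting rule, is analogous to Case~1 in the proof of Lemma~\ref{Lm:aux2}. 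Hence no derivation of the right premise exists when $\Delta$ is nonempty, so $j < k$ never arises and the argument closes.
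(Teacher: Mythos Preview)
Your overall induction and your treatment of the cases where the principal is some $\alpha_i$ or the rightmost $\beta_k \BS b$ are essentially the paper's Cases~1--3. The genuine gap is in your handling of the case where the principal of the lowermost $(\BS\seqarr)$ is $\beta_j \BS b$ with $j<k$: you try to \emph{exclude} this case by claiming that the right premise $\Gamma', b, \beta_{j+1}\BS b,\ldots,\beta_k\BS b \seqarr b$ is never derivable. That claim is false. Take $\ell=1$, $\alpha_1=p$, $k=2$, $\beta_1=p$, $\beta_2=b$; the goal sequent is $p,\,p\BS b,\,b\BS b \seqarr b$. A cut-free derivation can take $p\BS b$ as the principal with $\Pi=p$, $\Gamma'=\Lambda$, whose right premise is $b,\,b\BS b \seqarr b$ --- derivable by $(\BS\seqarr)$ with $\Pi=b$. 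The lemma places no restriction on the $\beta_j$'s, so the freshly introduced $b$ can perfectly well serve as the argument of the next $\beta_{j+1}\BS b$; your ``strengthened Lemma~\ref{Lm:aux2}'' simply does not hold.

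The paper does not try to exclude this case; it subsumes it under Case~1 (``the rule operates inside $\Gamma$''): from the right premise one obtains, inductively, $\Gamma', b, \beta_{j+1}\BS b,\ldots,\beta_{k-1}\BS b \seqarr \beta_k$, and then one reapplies $(\BS\seqarr)$ with the untouched left premise $\Pi \seqarr \beta_j$ to get $\Gamma \seqarr \beta_k$. In the example above this yields $b \seqarr b$, then $p,\,p\BS b \seqarr b$, as required. (Strictly speaking the induction hypothesis must be phrased to tolerate the newly introduced $b$ in the antecedent of the main premise; the paper's write-up is terse on this point, but the inductive approach is the correct one --- the fix is to carry the extra $b$ through the induction, not to forbid $j<k$.) A smaller point: for $(\SL\seqarr)$ on $\alpha_i$ you also try to rule out $\Pi$ containing any $\beta_j \BS b$ via an extra extension of Lemma~\ref{Lm:aux1}; the paper avoids this detour by splitting only on whether $\beta_k\BS b$ lies in the main premise (if yes, induct; if no, the main premise is $b$-free and Lemma~\ref{Lm:aux1} applies directly).
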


\begin{proof}
Let $\Gamma = \alpha_1, \ldots, \alpha_\ell, \beta_1 \BS b, \ldots, \beta_{k-1} \BS b$.  Proceed, again, by induction on the (cut-free) derivation of $\Gamma, \beta_k \BS b \seqarr b$. Consider the lowermost rule. We have three possibilities.

{\em Case 1:} the rule operates inside $\Gamma$, and its main premise is $\widetilde{\Gamma}, \beta_k \BS b \seqarr b$. Then by induction hypothesis we get derivability of $\widetilde{\Gamma} \seqarr \beta_k$, and applying the same rule yields $\Gamma \seqarr \beta_k$.

{\em Case 2:} the rule decomposes an $\SL$ inside $\Gamma$ (i.e., in $\alpha_i$), and $\beta_k \BS b$ goes not to the main premise:
$$
\infer{\alpha_1, \ldots, \alpha'_i \SL \alpha''_i, \alpha_{i+1}, \ldots, \alpha_\ell, \beta_1 \BS b, \ldots, \beta_{k-1} \BS b,
\beta_k \BS b \seqarr b}
{\alpha_{i+1}, \ldots, \alpha_\ell, \beta_1 \BS b, \ldots, \beta_{k-1} \BS b,
\beta_k \BS b \seqarr \alpha''_i & 
\alpha_1, \ldots, \alpha'_i \seqarr b}
$$
The main premise fails to be derivable, due to Lemma~\ref{Lm:aux1}. Contradiction.

{\em Case 3:} the rule decomposes $\BS$ in the rightmost $\beta_k \BS b$; $\Gamma = \Gamma_1, \Gamma_2$.
$$
\infer{\Gamma_1, \Gamma_2, \beta_k \BS b \seqarr b}
{\Gamma_2 \seqarr \beta_k & \Gamma_1, b \seqarr b}
$$
Here $\Gamma_1, b \seqarr$ is a sequent of the form suitable for Lemma~\ref{Lm:aux2}, and by this lemma $\Gamma_1$ should be empty. Thus, $\Gamma = \Gamma_2$, and $\Gamma \seqarr \beta_k$ is derivable.
\end{proof}

Now we are ready to prove the pseudo-double-negation theorem (Theorem~\ref{Th:DN}). Recall its formulation:

\begin{theorem*}
 If $b$ does not occur in $\alpha_1, \ldots, \alpha_n, \beta$, then $\alpha_1, \ldots, \alpha_n \vdash \beta$ is derivable in $\MALC$ if and only if so is 
 $\alpha_1^{bb}, \ldots, \alpha_n^{bb} \vdash \beta^{bb}$.
\end{theorem*}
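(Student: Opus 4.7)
The plan is to establish the two directions of the biconditional separately. Both rest on the standard fact that for every formula $\gamma$ the sequent $\gamma, \gamma \BS b \seqarr b$ is derivable in $\MALC$ (a single application of $(\BS\seqarr)$), and that the rule $(\seqarr\BS)_{\inv}$ — moving the antecedent of an $\BS$-succedent to the front of the sequent — is admissible via cut.

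For the ``only if'' direction I would use induction on $n$. The base case $n=0$: from $\Lambda \seqarr \beta$, cut with $\beta, \beta^b \seqarr b$ gives $\beta^b \seqarr b$, whence $(\seqarr\BS)$ yields $\Lambda \seqarr \beta^{bb}$. For the inductive step, apply $(\seqarr\BS)$ to $\alpha_1, \ldots, \alpha_{n+1} \seqarr \beta$ to obtain $\alpha_2, \ldots, \alpha_{n+1} \seqarr \alpha_1 \BS \beta$; the induction hypothesis then yields $\alpha_2^{bb}, \ldots, \alpha_{n+1}^{bb} \seqarr (\alpha_1 \BS \beta)^{bb}$. The finishing ingredient is a ``doubly-negated modus ponens'' lemma $\alpha_1^{bb}, (\alpha_1 \BS \beta)^{bb} \seqarr \beta^{bb}$, which I would prove directly: $(\seqarr\BS)_{\inv}$ followed by successive $(\BS\seqarr)$ decompositions — first of $(\alpha_1 \BS \beta)^{bb}$, then of $\alpha_1^{bb}$, then of $\alpha_1 \BS \beta$ — reduces the goal to the trivial leaves $\alpha_1 \seqarr \alpha_1$, $\beta, \beta^b \seqarr b$, and $b \seqarr b$. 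A single cut combines this with the induction hypothesis.

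For the ``if'' direction I would leverage Lemma~\ref{Lm:aux3}. Starting from $\alpha_1^{bb}, \ldots, \alpha_n^{bb} \seqarr \beta^{bb}$, apply $(\seqarr\BS)_{\inv}$ to reach $\beta^b, \alpha_1^{bb}, \ldots, \alpha_n^{bb} \seqarr b$, which has precisely the shape Lemma~\ref{Lm:aux3} expects (with $\ell = 0$, $k = n+1$, and the $\beta_j$'s being $\beta, \alpha_1^b, \ldots, \alpha_n^b$). Each iteration peels the rightmost ``$\BS b$'' off the antecedent: Lemma~\ref{Lm:aux3} turns $\alpha_n^{bb}$ into the succedent $\alpha_n^b$, and then $(\seqarr\BS)_{\inv}$ inserts $\alpha_n$ at the front of the antecedent. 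After $n$ such iterations the sequent reads $\alpha_1, \ldots, \alpha_n, \beta^b \seqarr b$, and one more application of Lemma~\ref{Lm:aux3} (now with $\ell = n$ and $k = 1$) yields the desired $\alpha_1, \ldots, \alpha_n \seqarr \beta$. At every stage the side condition of Lemma~\ref{Lm:aux3} is met because $b$ is absent from $\alpha_1, \ldots, \alpha_n, \beta$ by hypothesis, while the $\beta_j$-slots (which are permitted to contain $b$) accommodate the remaining $\alpha_i^b$ factors.

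The main technical subtlety lies in the non-commutative positional accounting of the ``if'' direction: each $(\seqarr\BS)_{\inv}$ reinstalls $\alpha_i$ at the \emph{front} of the antecedent, so the pseudo-double-negations must be unwound in the reverse order $\alpha_n, \alpha_{n-1}, \ldots, \alpha_1$ in order to reconstruct the original left-to-right sequence. This depends on Lemma~\ref{Lm:aux3} stripping off precisely the \emph{rightmost} ``$\BS b$'', which itself rests on the cut-free analysis encapsulated in Lemmas~\ref{Lm:aux1} and~\ref{Lm:aux2} controlling where the variable $b$ may appear. In the ``only if'' direction the delicate point is the order-sensitive derivation of the doubly-negated modus ponens, which works because each $(\BS\seqarr)$ step happens to encounter exactly the required $\Pi$ on its left — a configuration that is brittle under non-commutativity and that would fail if we tried to permute the $\alpha_i^{bb}$'s.
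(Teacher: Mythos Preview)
Your proposal is correct. The ``if'' direction is essentially identical to the paper's argument: the paper also unwinds the sequent $\beta^b, \alpha_1^{bb}, \ldots, \alpha_n^{bb} \seqarr b$ one step at a time, using Lemma~\ref{Lm:aux3} to strip the rightmost $\BS b$ and then $(\seqarr\BS)_\inv$ to bring the freed $\alpha_i$ to the front, exactly as you describe (the paper packages this as the ``backward'' half of an induction on a single index $k$).

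The ``only if'' direction, however, takes a genuinely different route. The paper does \emph{not} induct on $n$ or invoke any modus-ponens lemma; instead it runs the same chain of intermediate sequents $\alpha_{k+1},\ldots,\alpha_n,\beta^b,\alpha_1^{bb},\ldots,\alpha_k^{bb}\seqarr b$ forward, moving each $\alpha_{k+1}$ from the left end to the right end as $\alpha_{k+1}^{bb}$ by a single $(\seqarr\BS)$ followed by a single $(\BS\seqarr)$. This buys economy: no auxiliary lemma, no cut, and both directions are handled by one uniform induction. Your approach, by contrast, isolates a reusable fact~--- the ``doubly-negated modus ponens'' $\alpha^{bb},(\alpha\BS\beta)^{bb}\seqarr\beta^{bb}$~--- and pays for it with an extra cut at each inductive step; the gain is that the forward direction becomes conceptually modular and independent of the specific chain of sequents. (A small terminological slip: when you reduce $\alpha_1^{bb},(\alpha_1\BS\beta)^{bb}\seqarr\beta^{bb}$ to $\beta^b,\alpha_1^{bb},(\alpha_1\BS\beta)^{bb}\seqarr b$, that is the forward rule $(\seqarr\BS)$ used bottom-up, not $(\seqarr\BS)_\inv$; the derivation itself is fine.)
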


\begin{proof}
 We use induction on $k$ to prove the following statement: $\alpha_1, \ldots, \alpha_n \seqarr \beta$ is
 equiderivable with 
 $\alpha_{k+1}, \ldots, \alpha_n, \beta^b, \alpha_1^{bb},\ldots, \alpha_k^{bb} \seqarr b$. In particular, for $k=n$ we shall have $\beta^b, \alpha_1^{bb}, \ldots, \alpha_n^{bb} \seqarr b$, which is equiderivable with
 $\alpha_1^{bb}, \ldots, \alpha_n^{bb} \seqarr \beta^{bb}$ by
 $(\seqarr\BS)$ and $(\seqarr\BS)_\inv$.
 
 First let us prove induction base ($k = 0$).
 Sequents $\alpha_1, \ldots, \alpha_n, \beta \BS b \seqarr b$ and  $\alpha_1, \ldots, \alpha_n \seqarr \beta$ are equiderivable by $(\BS \seqarr)$ and Lemma~\ref{Lm:aux3}. 
 
 For the induction step, we have the following.
 \begin{enumerate}
  \item The $(k+1)$-st sequent is derived from the $k$-th one
  as follows:
  $$
  \infer{\alpha_{k+2}, \ldots, \alpha_n, \beta^b, \alpha_1^{bb}, \ldots, \alpha_k^{bb}, (\alpha_{k+1} \BS b) \BS b \seqarr b}
  {\infer{\alpha_{k+2}, \ldots, \alpha_n, \beta^b, \alpha_1^{bb}, \ldots, \alpha_k^{bb} \seqarr
  \alpha_{k+1} \BS b}
  {\alpha_{k+1}, \alpha_{k+2}, \ldots, \alpha_n, \beta^b, \alpha_1^{bb}, \ldots, \alpha_k^{bb} \seqarr b}
  & b \seqarr b}
    $$
    \item {\em Vice versa,} if  $\alpha_{k+2}, \ldots, \alpha_n, \beta^b, \alpha_1^{bb}, \ldots, \alpha_k^{bb}, (\alpha_{k+1} \BS b) \BS b \seqarr b$ is derivable, then by Lemma~\ref{Lm:aux3} so is
    $\alpha_{k+2}, \ldots, \alpha_n, \beta^b, \alpha_1^{bb}, \ldots, \alpha_k^{bb} \seqarr \alpha_{k+1} \BS b$. By \mbox{$(\seqarr\BS)_\inv$} we get derivability of 
    $\alpha_{k+1}, \alpha_{k+2}, \ldots,  \alpha_n, \beta^b, \alpha_1^{bb}, \ldots, \alpha_k^{bb} \seqarr b$.
 \end{enumerate}

 Thus the $(k+1)$-st sequent is equiderivable with the $k$-th one, and the latter, by induction hypothesis, is equiderivable with $\alpha_1, \ldots, \alpha_n \seqarr \beta$.
\end{proof}

\end{document}